\title{New Separations Results for External Information}
\author{Mark Braverman
\thanks{Department of Computer Science, Princeton University.
Research supported in part by the NSF Alan T. Waterman Award, Grant No. 1933331, a Packard Fellowship in Science and Engineering, and the Simons Collaboration on Algorithms and Geometry. }
\and
Dor Minzer
\thanks{Department of Mathematics, Massachusetts Institute of Technology.}
}
\date{\vspace{-5ex}}
\newtheorem{theorem}{Theorem}[section]
\newtheorem{thm}{Theorem}[section]
\newtheorem{lemma}[thm]{Lemma}
\newtheorem{corollary}[thm]{Corollary}
\newtheorem{claim}[thm]{Claim}
\newtheorem{definition}[thm]{Definition}
\newtheorem{fact}[thm]{Fact}
\newtheorem{openproblem}[thm]{Open Problem}
\newcommand\E{\mathop{\mathbb{E}}}
\newcommand\card[1]{\left| {#1} \right|}
\newcommand\sett[2]{\left\{ \left. #1 \;\right\vert #2 \right\}}
\newcommand\set[1]{{\left\{ #1 \right\}}}
\newcommand\Prob[2]{{\Pr_{#1}\left[ {#2} \right]}}
\newcommand\cProb[3]{{\Pr_{#1}\left[ \left. #3 \;\right\vert #2 \right]}}
\newcommand\Expect[2]{{\mathop{\mathbb{E}}_{#1}\left[ {#2} \right]}}
\newcommand\power[1]{\set{0,1}^{#1}}
\newcommand\half{{1\over2}}
\newcommand\defeq{\stackrel{def}{=}}
\newcommand\skipi{{\vskip 10pt}}
\newcommand\eps{\varepsilon}
\renewcommand\geq{\geqslant}
\renewcommand\ge{\geqslant}
\renewcommand\leq{\leqslant}
\renewcommand\le{\leqslant}
\newcommand{\IC}{\mathrm{IC}}
\newcommand{\ZAC}{\mathrm{ZAC}}
\newcommand\MI{\mathrm{I}}
\newcommand{\CC}{\mathrm{CC}}
\newcommand{\HH}{\mathrm{H}}
\newcommand{\DKL}[2]{\mathrm{D}_{\text{KL}}\left( #1 \parallel #2 \right)}
\newcommand{\Ber}{\mathsf{B}}
\newcommand{\rom}[1]{\uppercase\expandafter{\romannumeral #1\relax}}
\newtheorem*{rep@theorem}{\rep@title}
\newcommand{\newreptheorem}[2]{%
\newenvironment{rep#1}[1]{%
\def\rep@title{\bf #2 \ref*{##1} \text{(Restated)} }%
\begin{rep@theorem} }%
{\end{rep@theorem} } }
\newtheorem*{rep@claim}{\rep@title}
\newcommand{\newrepclaim}[2]{%
\newenvironment{rep#1}[1]{%
\def\rep@title{\bf #2 \ref*{##1} \text{(Restated)} }%
\begin{rep@claim} }%
{\end{rep@claim} } }
\newtheorem*{rep@lemma}{\rep@title}
\newcommand{\newreplemma}[2]{%
\newenvironment{rep#1}[1]{%
\def\rep@title{\bf #2 \ref*{##1} \text{(Restated)} }%
\begin{rep@lemma} }%
{\end{rep@lemma} } }
\begin{document}
\maketitle
\begin{abstract}
We obtain new separation results for the two-party external information complexity of boolean functions. The external information complexity of a
function $f(x,y)$ is the minimum amount of information a two-party protocol computing $f$ must reveal to an outside observer about the input. We obtain the following results:
\begin{itemize}
\item
We prove an exponential separation between external and internal information complexity, which is the best possible; previously no separation was known.
\item
We prove a near-quadratic separation between amortized zero-error communication
complexity and external information complexity for total functions, disproving a conjecture
of \cite{Bravermansurvey}.
\item
We prove a matching upper showing that our separation result is tight.
\end{itemize}
\end{abstract}
\section{Introduction}

\label{sec:intro}

The main object of study in this paper is the {\em external} two-party information complexity of problems. For a two-party communication protocol $\pi(x,y)$, with inputs distributed according to some $(x,y)\sim\mu$, one can define the following complexity measures:\footnote{See Section~\ref{sec:prelim} for rigorous definitions and further background.}
\begin{itemize}
	\item
	the (average case) communication cost $\CC_{\mu}[\pi]$ of $\pi$ is the expected {\em number} of bits exchanged in $\pi$;
	\item
	the {\em external} information cost $\MI^{\sf external}_{\mu}[\pi]$ of $\pi$ is the expected amount of information {\em an external observer} learns about the inputs $(x,y)$ by observing an execution of $\pi(x,y)$;
	\item
	the {\em internal} information cost $\MI^{\sf internal}_{\mu}[\pi]$ of $\pi$ is the expected amount of information {\em the protocol participants} learn about the inputs $(x,y)$ by observing $\pi$.
\end{itemize}
For a given computational task --- such as computing a boolean function $f(x,y)$ with a prescribed error $\eps$ --- one can define the \{average case communication, external information, internal information\} complexity of performing the task by taking the infimum of the corresponding cost over all  protocols $\pi$ that succeed at performing the task. Thus:
\begin{itemize}
	\item
	the (average case) communication complexity $\CC_{\mu}[f,\eps]$  is the smallest expected {\em number} of bits that need to be exchanged to compute $f$ with error $\le \eps$;
	\item
	the {\em external} information complexity $\IC^{\sf external}_{\mu}[f,\eps]$ is the smallest amount of information that must be revealed to an external observer by two parties who need to compute $f$ with error $\le \eps$;
	\item
	the {\em internal} information complexity $\IC^{\sf internal}_{\mu}[f,\eps]$\footnote{Sometimes called simply ``information complexity".} is the smallest amount of information that must be revealed by the players to each other while computing $f$ with error $\le\eps$.
\end{itemize}
It follows from basic information-theoretic calculations that for all tasks:
\begin{equation}\label{eq:ICCC1}
\IC^{\sf internal}_{\mu}[f,\eps]\le
\IC^{\sf external}_{\mu}[f,\eps]\le
\CC_{\mu}[f,\eps].
\end{equation}
Within the theoretical computer science literature, notions of information complexity had been introduced at least twice. It was introduced once in the context of information-theoretic security \cite{bar1993privacy}, where a low $\IC^{\sf internal}_{\mu}[f,\eps]$ would mean that information-theoretically secure two-party computation is possible (it turn out to be impossible in most cases).
It was introduced in a different set of works, starting with the use of external information cost, in \cite{chakrabarti2001informational,bar2004information}, in the context of proving communication complexity lower bounds by using information-theoretic reasoning to prove an information complexity lower bound, and then using \eqref{eq:ICCC1} to deduce a communication
complexity lower bound. More recent surveys on information complexity can be found in \cite{braverman2015interactive,weinstein2015information}.

Starting with the works of Shannon in the 1940s, the main motivation for using information-theoretic quantities is that they {\em tensorize}\footnote{Also
said to satisfy a ``direct sum" property in the TCS literature.}. For example, if $X_1$ and $X_2$ are two independent random variables, then their Shannon's entropy satisfies $H(X_1X_2)=H(X_1)+H(X_2)$. Turns out that  internal information complexity satisfies a similar property (and, thus, is arguably the correct information-theoretic version of two-party communication complexity). For simplicity, denote by $(f^q,\eps)$ the task of computing $q$ independent copies of $f$, each with error $\le \eps$, then
\begin{equation}
	\label{eq:ICCC2}
	\IC^{\sf internal}_{\mu^q}[f^q,\eps] =
	q\cdot \IC^{\sf internal}_{\mu}[f,\eps].
\end{equation}
Such a relationship was known to be false for $\CC$, and was believed to also not hold for $\IC^{\sf external}$, although to the best of our knowledge only in the present paper we rule it out for all values of $\eps$ including $\eps=0$.

Equations \eqref{eq:ICCC1} and \eqref{eq:ICCC2} together provide a blueprint for proving communication lower bound on computing multiple copies of a function:
\begin{equation}
	\label{eq:ICCC3}
	\CC_{\mu^q}[f^q,\eps] \ge  	\IC^{\sf internal}_{\mu^q}[f^q,\eps] =
	q\cdot \IC^{\sf internal}_{\mu}[f,\eps].
\end{equation}
Thus, an information complexity lower bound on $f$, implies a communication lower bound on multiple copies of $f$. Moreover, a slight twist on \eqref{eq:ICCC3} allows one to use similar reasoning to prove lower bounds
on e.g. an OR of $q$ copies of $f$, to obtain tight bounds on the communication complexity of functions such as ${\sf Disjointness}$ \cite{bar2004information,braverman2013information}.

It turns out that, in fact, for $\eps>0$ \eqref{eq:ICCC3} is tight \cite{BravermanRao}. For all $f$ and $\eps>0$ the following holds:
\begin{equation}
	\label{eq:ICCC4}
\lim_{q\rightarrow \infty}	\CC_{\mu^q}[f^q,\eps]/q  =  \IC^{\sf internal}_{\mu}[f,\eps].
\end{equation}
Equation~\eqref{eq:ICCC4} has given rise to two questions:
\begin{itemize}
	\item {\bf What is the relationship between $ \IC^{\sf internal}_{\mu}[f,\eps]$	 and $\CC_{\mu}[f,\eps]$?} How large can the gap in \eqref{eq:ICCC1} be? This question is sometimes called the ``interactive compression question", and is equivalent to the direct sum question for two-party communication complexity;
	\item {\bf What happens in \eqref{eq:ICCC4} when $\eps=0$ --- that is, when no error is allowed?} Note that, not coincidentally, in other communication settings allowing error drastically alters the communication complexity of problems. For example, the communication cost of ${\sf EQ}_n(x,y)$ --- the problem of determining whether two $n$-bit strings are equal is $O(\log 1/\eps)$ independent of $n$ when error $\eps$ is allowed, but increases to $n+1$ when no error is allowed.
\end{itemize}

\paragraph{Separation between information and communication.} The first question was answered by Ganor, Kol, and Raz \cite{GKR}, who showed an exponential separation between internal information complexity and communication complexity. Moreover, such separation is the best possible \cite{braverman2015interactive}. In other words, there is a boolean function $f$ such that
$ \IC^{\sf internal}_{\mu}[f,\eps]=O(k)$, while
$\CC_{\mu}[f,\eps]=\Omega(2^k)$. In addition, exponential separation was shown between external information and communication complexity --- at least for tasks
\cite{ganor2019exponential}.

Therefore, in the context of \eqref{eq:ICCC1}, at least for tasks, the second inequality was known to be strict (with the maximal possible exponential separation). A separation in the first inequality had been strongly suspected but never proven. In this paper (Theorem~\ref{thm:external_internal}) we show that, in fact, the example from \cite{GKR} has an exponential external information complexity (and not just exponential communication complexity), and thus gives an example of an $f$ such that
\begin{equation}
\label{eq:ICCC5}
\IC^{\sf external}_{\mu}[f,\eps]\geq 2^{\Omega(\IC^{\sf internal}_{\mu}[f,\eps])}.
\end{equation}
As will be discussed later, we need \eqref{eq:ICCC5} in order to separate
external information from zero-error amortized communication.

\paragraph{Zero-error amortized communication.} A second mystery that remains
in the wake of \eqref{eq:ICCC4} is what happens with zero-error amortized communication? In other words, what can we say about the quantity:\footnote{Here, zero-error means that the protocol has to always output the correct value
of $f(x,y)$, even if $\mu(x,y)=0$.}
\begin{equation}
	\label{eq:z1}
	\lim_{q\rightarrow \infty}	\CC_{\mu^q}[f^q,0]/q
\end{equation}
A canonical example of a function where zero-error and vanishing-error communication costs diverge is the $n$-bit {\em Equality} function
${\sf EQ}_n(x,y):={\bf 1}_{x=y}$. It is known that the amortized communication
complexity of ${\sf EQ_n}$ is $O(1)$ \cite{feder1995amortized}. As a consequence of this result (which can also be seen directly \cite{braverman2015interactive}), one gets for all $\mu$,
$$
\IC^{\sf internal}_{\mu}[{\sf EQ}_n,0] = O(1).
$$
On the other hand, it is not hard to see using fooling sets, that
$$
	\lim_{q\rightarrow \infty}	\CC_{\mu^q}[{\sf EQ}_n^q,0]/q=\Omega(n),
$$
where $\mu$ is the distribution $\mu=\frac{1}{2} U_{(x,x)} + \frac{1}{2} U_{(x,y)}$ --- a mixture of the uniform distribution and the uniform distribution on ${\sf EQ}^{-1}_n(1)$. Therefore \eqref{eq:ICCC4} has no chance of holding when $\eps=0$. More precisely, half of the proof of \eqref{eq:ICCC4} holds for $\eps=0$, yielding
\begin{equation}
\lim_{q\rightarrow \infty}	\CC_{\mu^q}[f^q,0]/q  \ge  \IC^{\sf internal}_{\mu}[f,0],
\label{eq:z2}
\end{equation}
but this inequality may be strict, as is indeed the case for $f={\sf EQ}_n$.

An attempt to prove the $\le$ direction in \eqref{eq:z2} would involve trying to compress a low-internal information protocol for $f^q$ into a low-communication
one. Such compression procedures exist \cite{BravermanRao}, but they inherently introduce
errors (where with a tiny probability the message received doesn't match the message sent). In contrast to internal information, there is a zero-error compression protocol for {\em external} information \cite{HJMR} leading to
a variant of a converse to \eqref{eq:z2}  where internal information is
replaced with external information:
\begin{equation}
	\lim_{q\rightarrow \infty}	\CC_{\mu^q}[f^q,0]/q  \le  \IC^{\sf external}_{\mu}[f,0].
	\label{eq:z3}
\end{equation}
We formally prove \eqref{eq:z3} for completeness purposes in
Section~\ref{sec:missing} (Theorem~\ref{thm:upper_bd_easy}).
Inequality \eqref{eq:z3}, along with the fact that inequality \eqref{eq:z2} is
easily seen to not be tight led to the following conjecture \cite{Bravermansurvey}:
\begin{flalign}
& {\bf \text  Conjecture:~}	\lim_{q\rightarrow \infty}	\CC_{\mu^q}[f^q,0]/q  =  \Theta(\IC^{\sf external}_{\mu}[f,0]). &
	\label{eq:z4}
\end{flalign}
There were several reasons to believe this conjecture. Translated to this language, a result of Ahlswede and Cai \cite{ahlswede1994communication} shows that \eqref{eq:z4} holds (with a constant $1$) when $f$ is the $2$-bit AND function for the hardest distribution $\mu$ --- the quantity on both sides is $\log_2 3$. A version of \eqref{eq:z4} in fact holds for one-sided non-deterministic communication/information, which we prove in Section~\ref{sec:missing2} for completeness.
Here the $1$-superscript in $\IC^{\sf external,1}_{\mu}$ stands for the complexity of proving that the value of $f(x,y)=1$.

\begin{thm}
	\label{thm:oneside} Let $\mu$ be a distribution with ${\sf supp}(\mu)\subseteq f^{-1}(1)$, then $$
	\IC^{\sf external,1}_{\mu}[f,0]\le
		\lim_{q\rightarrow \infty}	\CC_{\mu^q}^{1^q}[f^q,0]/q.
	$$
\end{thm}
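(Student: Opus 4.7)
The plan is a direct-sum argument specific to the zero-error one-sided regime. Fix $\varepsilon > 0$ and let $\pi_q$ be a zero-error protocol for $f^q$ on $\mu^q$ whose amortized per-copy communication is at most $L+\varepsilon$, where $L$ denotes the right-hand side of the theorem. Since $\pi_q$ is zero-error on $\supp(\mu^q)\subseteq (f^{-1}(1))^q$, every transcript of $\pi_q$ corresponds to a rectangle $R\subseteq (f^{-1}(1))^q$ whose projection $R^{(i)}$ to each coordinate $i$ is a $1$-rectangle of $f$. By coarsening, I may assume every transcript rectangle is the coordinate-wise product $R=\prod_i R^{(i)}$ (its ``product hull''); this operation only decreases the transcript entropy, so $H(R)\le q(L+\varepsilon)$ still holds.

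Given $\pi_q$, I define a family of single-copy one-sided protocols $\pi^{(i,x^*,y^*)}$ indexed by $i\in[q]$ and $(x^*,y^*)\in\supp(\mu^{q-1})$: on input $(x,y)\sim\mu$, Alice and Bob set $X_i=x$, $Y_i=y$, $X_{\neq i}=x^*$, $Y_{\neq i}=y^*$, run $\pi_q$, and output the $i$-th projection $R^{(i)}$ of the resulting transcript rectangle as a $1$-certificate for $(x,y)$. Each $\pi^{(i,x^*,y^*)}$ is a valid zero-error one-sided protocol for $f$ on $\mu$, and its external information equals $H\!\left(R^{(i)}\mid X_{\neq i}=x^*,\,Y_{\neq i}=y^*\right)$.

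Averaging over $(i,x^*,y^*)$ with $i$ uniform and $(x^*,y^*)\sim\mu^{q-1}$, the $\min\le\mathrm{avg}$ inequality gives some $(i,x^*,y^*)$ for which the resulting single-copy protocol has external information at most
\[
\frac{1}{q}\sum_{i=1}^{q} H\!\left(R^{(i)}\mid X_{\neq i},Y_{\neq i}\right).
\]
The main step is then to establish the subadditivity-style inequality $\sum_{i=1}^{q} H\!\left(R^{(i)}\mid X_{\neq i},Y_{\neq i}\right) \le H(R)$; combined with $H(R)\le q(L+\varepsilon)$, this yields $\IC^{\sf external,1}_{\mu}[f,0]\le L+\varepsilon$, and taking $\varepsilon\to 0$ completes the proof.

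The hard part is establishing the displayed inequality. It should follow from (a) the product-hull reduction, which makes $R$ a deterministic function of $(R^{(1)},\ldots,R^{(q)})$ with $H(R)$ expressible as $\sum_i H(R^{(i)}\mid R^{(<i)})$, and (b) the i.i.d.\ structure of $\mu^q$, via a chain-rule manipulation comparing the two conditionings. Crucially, the analogous inequality fails in the general two-sided setting (consistent with the exponential separation between external and internal information in Theorem~\ref{thm:external_internal}), and it is precisely the rigid rectangular structure of zero-error one-sided protocols that makes the direct sum go through here.
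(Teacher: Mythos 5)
Your reduction has the right skeleton---embed a single copy into coordinate $i$ and output the coordinate-$i$ projection of the certified rectangle---but there are two genuine gaps. First, the right-hand side of the theorem is the amortized cost of a $1^q$-proof (a nondeterministic message distribution $M^q$ together with acceptance functions), not of a zero-error protocol for $f^q$. A zero-error protocol yields a $1^q$-proof but not conversely, so starting from a zero-error protocol bounds $\IC^{\sf external,1}_{\mu}[f,0]$ by a possibly larger quantity, i.e.\ proves a weaker statement. (Your rectangle picture does survive the translation: each accepted message of a $1^q$-proof determines a rectangle inside $(f^{-1}(1))^q$ whose coordinate projections are $1$-rectangles of $f$; but the argument must be carried out for arbitrary proofs, not protocols.)

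Second, and more seriously, the whole weight of the argument rests on the displayed inequality $\sum_i \HH(R^{(i)}\mid X_{\ne i},Y_{\ne i})\le \HH(R)$, which you say ``should follow'' from the product hull and a chain-rule manipulation but never prove---and this is exactly where the difficulty lives. Because you \emph{fix} the other coordinates to $(x^*,y^*)$, the per-copy cost you must control is a leave-one-out conditional quantity, and sums of leave-one-out quantities are not bounded by the joint entropy in general: for independent bits $X_1,X_2$ and $M=X_1\oplus X_2$ one has $\sum_i \MI(M;X_i\mid X_{\ne i})=2>1=\HH(M)$, and if the certificate is a random rectangle $S$ independent of the inputs and shared across coordinates, then $\sum_i \HH(R^{(i)}\mid X_{\ne i},Y_{\ne i})=q\,\HH(S)$ while $\HH(R)=\HH(S)$. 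The product-hull projection does defeat the naive XOR example (the projection becomes trivial), so your inequality is not obviously false in your restricted setting, but you give no argument exploiting that structure, and the route you sketch ($\HH(R)=\sum_i\HH(R^{(i)}\mid R^{(<i)})$ plus ``comparing the two conditionings'') is precisely the step that fails, since conditioning on $X_{\ne i},Y_{\ne i}$ is incomparable with conditioning on $R^{(<i)}$. The paper sidesteps all of this: it does not fix the other coordinates but samples them from $\mu$ inside the single-copy proof, so the per-copy cost is the unconditioned $\MI[M^q;X_iY_i]$, and then the chain rule with prefix conditioning together with independence of $\mu^q$ gives $\sum_i\MI[M^q;X_iY_i]\le\MI[M^q;XY]\le\HH(M^q)\le\E|M^q|$. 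To salvage your route you must either prove your subadditivity inequality using the rectangle structure, or simply replace ``fix $(x^*,y^*)$'' by ``sample $(x^*,y^*)\sim\mu^{q-1}$ as part of the proof's randomness'' and run the standard direct-sum calculation.
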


We should note that a quadratic upper bound on $\IC^{\sf external}_{\mu}[f,\eps]$ in terms of amortized zero-error communication complexity (Theorem~\ref{thm:quadratic_tight}) does hold. Informally, this upper bound can be thought of as a consequence of Theorem~\ref{thm:oneside} (along its co-nondeterministic counterpart) similarly to the $D(f)\le N^0(f)\cdot N^1(f)$ bound on deterministic communication complexity in terms of non-deterministic communication complexity.

It should be noted that in Conjecture~\eqref{eq:z4} it is important that the zero-error of communication holds for all potential inputs to $f$ (even when $\mu$ is not full-support). In other words, correctness shouldn't be predicated on a ``promise" about the inputs $(x,y)$.
 In the promise setting, a counterexample  has been constructed by Kol, Moran, Shpilka, and Yehudayoff  \cite{KMSY}. In the context
 of the counterexample, Theorem~\ref{thm:oneside} also doesn't hold, which suggests a large gap between the promise and non-promise regimes.

Our main contribution is to disprove Conjecture~\eqref{eq:z4}. In light of \eqref{eq:z2}, a prerequisite for disproving the conjecture is being able to separate internal information complexity from
external information complexity along the lines of \eqref{eq:ICCC5}.

%
%
%

\subsection{Main results}
We now state our main results formally. First, as alluded to before, we show an exponential separation between internal information and external information.
\begin{thm}\label{thm:external_internal}
  For all $\eps>0$, for large enough $k$, there is $n\in\mathbb{N}$, a function $f\colon\power{n}\times\power{n}\to\power{}$
  and an input distribution $\mu$ satisfying:
  \begin{enumerate}
    \item $\IC^{\sf internal}_{\mu}[f,\eps]\leq O(k)$,
    \item $\IC^{\sf external}_{\mu}[f,\eps]\geq 2^{\Omega(k)}$.
  \end{enumerate}
\end{thm}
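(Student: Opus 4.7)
The plan is to take the same function $f$ and input distribution $\mu$ constructed by Ganor, Kol and Raz~\cite{GKR}, and argue that they additionally witness the claimed separation for external information. The original analysis of~\cite{GKR} already establishes $\IC^{\sf internal}_{\mu}[f,\eps]=O(k)$ (giving Item~1) and, more strongly, $\CC_{\mu}[f,\eps]=2^{\Omega(k)}$. The entire content of the proof is therefore to establish the external lower bound of Item~2 for this same pair $(f,\mu)$, a property not automatically implied by the existing analysis, since internal and external information can coincide or diverge depending on how the transcript can be ``absorbed'' by the players' inputs.

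To prove the external lower bound, I would isolate a function $Z=Z(X,Y)$ of the inputs whose entropy under $\mu$ is $2^{\Omega(k)}$ and which is approximately determined by the transcript of any $\eps$-correct protocol for $f$. The natural candidate is the ``honest path'' variable built into the GKR construction: the long sequence of branch choices and detour locations in the deep protocol tree on which $f$ is defined. Once such a $Z$ is identified, the data-processing inequality followed by Fano's inequality yield
$$
\IC^{\sf external}_{\mu}[f,\eps]\;\ge\;I(XY;\Pi)\;\ge\;I(Z;\Pi)\;\ge\;H(Z)-h_2(\eps)-\eps\log\card{\supp(Z)},
$$
which is $2^{\Omega(k)}$ for sufficiently small $\eps$. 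A crucial sanity check is that the very same $Z$ need not contribute much \emph{internally}: conditioned on $X$ alone (and symmetrically on $Y$ alone) the value of $Z$ is essentially pinned down under $\mu$, so $I(Z;\Pi\mid X)$ and $I(Z;\Pi\mid Y)$ are both small, consistently with $\IC^{\sf internal}_{\mu}[f,\eps]=O(k)$ from~\cite{GKR}.

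The main obstacle is making ``$\Pi$ determines $Z$'' rigorous for adaptive, randomized protocols on the specific GKR distribution. The~\cite{GKR} communication lower bound proceeds via a delicate round-by-round potential argument that tracks how quickly any protocol can commit to the honest path; I expect to adapt that potential argument, but now lower-bounding $I(XY;\Pi_{\le t})$ at each round rather than the transcript length $\card{\Pi_{\le t}}$. The subtle point is that external and internal information differ precisely by $I(X;\Pi)-I(X;\Pi\mid Y)$, and the GKR distribution was engineered so that these two quantities almost cancel in each round; the new argument must show that while each round's external contribution is tiny, the contributions accumulate globally to $2^{\Omega(k)}$, even as they continue to telescope away internally. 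A clean route is to partition the execution into phases corresponding to depth intervals in the protocol tree, show that in a typical phase the transcript reveals a fresh bit of $Z$ to the external observer while revealing essentially nothing to either party conditional on her own input, and sum over the $2^{\Omega(k)}$ phases.
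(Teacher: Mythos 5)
You correctly identify the GKR bursting-noise function and distribution as the right object, and Item~1 follows immediately from~\cite{GKR}, as the paper also observes. However, your route for Item~2 is fundamentally different from the paper's, and as sketched it has a gap that I don't believe is repairable.

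The paper does not identify any high-entropy variable $Z$ that the transcript approximately determines, and Fano's inequality plays no role. Instead, it proves a new theorem (Theorem~\ref{thm:external}) that lifts the GKR \emph{relative-discrepancy} lower bound from communication to external information. This requires two new ingredients: a notion of ``universal external information'' (a strengthening of external information where the likelihood ratio $P_A^x(\pi)/\eta_A(\pi)$ is bounded for \emph{every} pair $(x,\pi)$ simultaneously, not just on average), and two lemmas showing (a) that a low-external-information protocol can be truncated into a universally low one by having each player abort once their personal likelihood ratio grows (Lemma~\ref{lem:convert_low_universal}), and (b) that a universally low external information protocol cannot achieve advantage against a function with strong relative discrepancy, because its transcripts can be approximately decomposed into few $\rho$-heavy rectangles, each balanced under $\mu$ (Lemma~\ref{lem:step2}). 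Nowhere does the argument establish, or need, that the transcript approximately encodes any particular structured piece of the input.

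The concrete gap in your plan: the inequality $I(Z;\Pi)\ge H(Z)-h_2(\eps)-\eps\log\card{\supp(Z)}$ requires, via Fano, a decoder that recovers $Z$ from the transcript with failure probability at most $\eps$. But correctness of a protocol for $f$ only guarantees it outputs one bit; there is no reason an $\eps$-correct protocol must approximately determine the honest path, and neither the GKR analysis nor the paper's argument shows anything of this kind. The GKR potential argument bounds a player's distributional distinguishing power per communicated bit; it is a statement about the \emph{number} of transcripts, not about transcripts encoding $Z$. Adapting it to bound $I(XY;\Pi_{\le t})$ round by round is precisely where the paper introduces a new mechanism (strike-mode truncation plus a martingale concentration argument on per-round likelihood-ratio increments, via Fact~\ref{fact:martingale}), rather than tracking a variable $Z$. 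Your ``phase reveals a fresh bit of $Z$ externally'' claim is an appealing picture but is never made true, and in particular it is consistent with the GKR distribution that a protocol could be correct while the external observer never learns any specific coordinate of the honest path with high confidence. You would need an entirely new argument to bridge from correctness to near-decodability of $Z$, and I see no route to it.
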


Secondly, using Theorem~\ref{thm:external_internal} we disprove Conjecture~\ref{eq:z4}.
We show that even if one considers the external information of $f$ for protocols with constant
error, a near-quadratic gap between it and the amortized zero-error communication complexity is still possible:
\begin{thm}\label{thm:main}
  For large enough $k$, there is $n\in\mathbb{N}$, a function $f\colon\power{n}\times\power{n}\to\power{}$ and an input distribution $\mu$ such that
  \begin{enumerate}
    \item $\lim_{q\rightarrow \infty} \frac{1}{q} \CC_{\mu^q}[f^q,0]\leq O(\sqrt{k}\log^2 k)$,
    \item $\IC^{\sf external}_{\mu}[f,1/16]\geq \Omega(k)$.
  \end{enumerate}
\end{thm}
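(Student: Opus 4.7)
The plan is to use Theorem~\ref{thm:external_internal} as a black box to obtain $f$ with high external but low internal information, and then design a zero-error amortized protocol whose cost matches, up to polylogarithmic factors, the quadratic upper bound of Theorem~\ref{thm:quadratic_tight}. Concretely, apply Theorem~\ref{thm:external_internal} with error $\eps = 1/16$ and its parameter set to $C\log k$ for a sufficiently large constant $C$. This yields a function $g : \power{n_0}\times\power{n_0}\to\power{}$ and distribution $\nu$ satisfying $\IC^{\sf internal}_{\nu}[g, 1/16] = O(\log k)$ and $\IC^{\sf external}_{\nu}[g, 1/16] \geq 2^{\Omega(C\log k)} \geq \Omega(k)$. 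Take $f = g$ and $\mu = \nu$; item (2) of Theorem~\ref{thm:main} is then immediate.

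For item (1), I would construct a zero-error protocol for $f^q$ of amortized cost $O(\sqrt{k}\log^2 k)$ per copy by combining internal compression with a verification layer. The starting point is the internal-compression theorem of \cite{BravermanRao} that realizes \eqref{eq:ICCC4}: for any target $\delta > 0$, it yields a protocol of amortized cost $O(\log k \cdot \mathrm{polylog}(1/\delta))$ per copy with per-copy error at most $\delta$. To eliminate the error, partition the $q$ copies into blocks of size $B = \Theta(\sqrt{k})$, run the small-error protocol inside each block to obtain candidate outputs, and then execute a zero-error verification subroutine — modelled on the fooling-set protocol for Equality — that detects any copies whose candidate outputs might be wrong. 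Copies flagged as incorrect are recomputed by the trivial protocol at cost $O(n_0)$ bits each.

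Balancing the parameters with suitable choices of $\delta$ and $B$, the amortized cost per copy decomposes into the compression overhead $O(\log^2 k)$, the verification amortized to $\widetilde{O}(1)$ per copy, and the expected recomputation cost of $\widetilde{O}(\sqrt{k})$ per copy, summing to the claimed bound. The hard step is the verification subroutine: its amortized per-copy cost must be roughly $\sqrt{\IC^{\sf external}_{\mu}[f,\eps]}$, for otherwise the quadratic bound of Theorem~\ref{thm:quadratic_tight} would be violated. Achieving this requires exploiting structural properties of the Theorem~\ref{thm:external_internal} construction — specifically, that verifying a candidate value of $f(x,y)$ against the true inputs is substantially cheaper than computing $f(x,y)$ outright — and is where the coexistence of low internal and high external information complexity of $g$ plays the essential role.
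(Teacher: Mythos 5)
There is a genuine gap, and it sits exactly where you flag ``the hard step.'' Choosing $f$ to be the function of Theorem~\ref{thm:external_internal} (the bursting-noise function) does give item (2), but nothing in your plan substantiates item (1) for that function, and it is quite possibly false for it: by \eqref{eq:z2} the amortized zero-error cost is at least $\IC^{\sf internal}_{\mu}[f,0]$, and for the bursting-noise function only $\IC^{\sf internal}_{\mu}[f,\eps]$ with $\eps\ge 2^{-k}$ is known to be small, not the zero-error internal information. Your verification layer is asserted, not constructed: a zero-error scheme must catch \emph{every} wrong candidate with certainty, so an Equality-style randomized check (which has error) is inadmissible, a deterministic one costs $n_0+1$ bits, and the bursting-noise function has no known short certificates, so there is no reason ``verifying a candidate value of $f(x,y)$'' should cost anything like $\sqrt{k}$; your appeal to Theorem~\ref{thm:quadratic_tight} only shows such a subroutine would be \emph{necessary} if the claim were true, not that it exists. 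The accounting also breaks: for the GKR construction $n_0$ is doubly exponential in a power of $k$, while compression yields per-copy error at best about $2^{-C\log k}=k^{-C}$, so the expected fallback cost per copy is $k^{-C}\cdot n_0$, astronomically larger than $\widetilde{O}(\sqrt{k})$; driving the error low enough to beat $n_0$ would require information cost far above $O(\log k)$ per copy.

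The paper takes a different and essential route: it does not use the Theorem~\ref{thm:external_internal} function as $f$ itself, but builds a new function $H$ in which cheap zero-error verification is engineered into the construction. One starts from the AND-OR tree $h_{\land}$ on $m$ bits, which has internal (hence external) information $\Omega(m)$ by Theorem~\ref{thm:JKS} yet admits certificates of size $C(m)=O(\sqrt{m})$; the input distribution $\mathcal{D}_H$ then hides a binary encoding of such a certificate inside $O(\sqrt{m}\log m)$ independent instances of the bursting-noise function. The zero-error protocol decodes these hints with amortized cost $O(k)$ per instance via \cite{BravermanRao}, and then verifies the decoded certificate \emph{directly} against $u,v$ by exchanging the $O(\sqrt{m}\log m)$ indicated bits, so correctness never depends on the hints being decoded correctly (a bad decoding is detected and handled by the trivial protocol, which happens too rarely to matter). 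Conversely, the lower bound for $H$ is not immediate from Theorem~\ref{thm:external_internal}: one needs the hybrid argument of Lemma~\ref{lem:drop_hint} (built on Lemmas~\ref{lem:convert_low_universal} and~\ref{lem:step2}) showing that to a low-external-information protocol the hints look random, which reduces the problem to the $\Omega(m)$ bound for $h_{\land}$. Your proposal is missing both this construction and this reduction.
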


If one insists on external information of protocols with zero-error, a much stronger separation result holds
(and in fact quickly follows from Theorem~\ref{thm:main}):
\begin{corollary}\label{corr:zero_error_external_separation}
    For large enough $m$, there is $n\in\mathbb{N}$, a function $f\colon\power{n}\times\power{n}\to\power{}$ and an input distribution $\nu$ such that
  \begin{enumerate}
    \item $\lim_{q\rightarrow \infty} \frac{1}{q} \CC_{\nu^q}[f^q,0]\leq O(1)$,
    \item $\IC^{\sf external}_{\nu}[f,0]\geq \Omega(m)$.
  \end{enumerate}
\end{corollary}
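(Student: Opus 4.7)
The plan is to amplify Theorem~\ref{thm:main} by choosing its parameter appropriately and then \emph{dilute} the resulting separation by embedding $f$ inside a function whose inputs are ``activated'' only with small probability. Fix $k = Cm^2\log^4 m$ for a sufficiently large constant $C$ and invoke Theorem~\ref{thm:main} to obtain $f\colon\power{n}\times\power{n}\to\power{}$ and $\mu$ with $\IC^{\sf external}_\mu[f,0]\ge\IC^{\sf external}_\mu[f,1/16]\ge\Omega(k)=\Omega(m^2\log^4 m)$ and $\lim_{q\to\infty}\CC_{\mu^q}[f^q,0]/q\le O(\sqrt{k}\log^2 k)=O(m\log^4 m)$. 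Define $F\colon\power{n+1}\times\power{n+1}\to\power{}$ by $F(x_0x,y_0y)=f(x,y)$ when $x_0=y_0=1$ and $F=0$ otherwise, and let $\nu$ set $(X_0,Y_0)=(1,1)$ with probability $p:=c/(m\log^4 m)$ for a small constant $c$ and $(X_0,Y_0)=(0,0)$ otherwise; in the first case draw $(X,Y)\sim\mu$, and in the second set $(X,Y)=(0^n,0^n)$. Notice that since $\sqrt{k}\log^2k$ and $k/m$ have the same order of magnitude, the separation of Theorem~\ref{thm:main} already has ``ratio'' $\Theta(m)$; the role of $p$ is only to push both quantities to the right absolute scales.

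For the amortized communication upper bound, on $q$ independent copies of $F$ under $\nu$ Alice and Bob first exchange their $q$ flag bits ($2q$ bits total) to identify the random set $S\subseteq[q]$ of coordinates where both flags equal $1$, with $|S|$ concentrated around $pq$. On those coordinates they run the optimal amortized zero-error protocol for $f^{|S|}$ at amortized rate $O(\sqrt{k}\log^2 k)$ per instance, and output $0$ on the remaining coordinates. The amortized cost per instance of $F$ is thus at most $2 + p\cdot O(\sqrt{k}\log^2 k) + o(1) = O(1)$ by our choice of $p$ and $k$.

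For the external information lower bound, let $\pi$ be any zero-error protocol for $F$ on $\nu$. By the chain rule,
\[
\MI^{\sf external}_\nu[\pi] \;=\; I(X_0Y_0;\Pi) + I(XY;\Pi\mid X_0Y_0) \;\ge\; p\cdot I(XY;\Pi\mid X_0=Y_0=1).
\]
Conditioned on $X_0=Y_0=1$ the distribution of $(X,Y)$ is exactly $\mu$, and the restriction of $\pi$ to this branch is a zero-error protocol for $f$ on \emph{all} of $\power{n}\times\power{n}$: indeed, since $\pi$ is zero-error for $F$ on every input, $\pi(1x,1y)$ must output $f(x,y)$ for every pair $(x,y)$. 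Hence $I(XY;\Pi\mid X_0=Y_0=1) \ge \IC^{\sf external}_\mu[f,0] \ge \Omega(k)$, so $\IC^{\sf external}_\nu[F,0]\ge\Omega(pk)=\Omega(m)$, as desired.

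The only real subtlety, and the only place the argument could slip, is that the zero-error hypothesis for $F$ must be applied on all of $\power{n+1}\times\power{n+1}$ rather than only on $\supp(\nu)$, so that the induced ``flag$=1$'' branch is a genuine zero-error protocol for $f$ on every input; this is what allows the lower bound from Theorem~\ref{thm:main} to transfer through the conditioning. Merely reweighting $\mu$ would leave inputs outside $\supp(\mu)$ unconstrained, and the lower bound on $\IC^{\sf external}_\mu[f,0]$ could not be invoked in the final step.
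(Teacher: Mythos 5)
Your proposal is correct and is essentially the paper's own proof: the same flag-bit ``dilution'' construction (activate $f$ with probability $p\approx 1/(\sqrt{k}\log^2 k)$, output $0$ otherwise), the same chain-rule/conditioning argument for the external-information lower bound, and the same amortized protocol that exchanges flags and runs the amortized zero-error protocol on the active copies; your re-parameterization in terms of $m$ versus the paper's in terms of $k$ is immaterial. Two small remarks: the closing ``subtlety'' is not actually needed, since the paper lower-bounds via $\IC^{\sf external}_{\mu}[f,1/16]$, which only requires the conditioned protocol to have error at most $1/16$ under $\mu$; and the $o(1)$ overhead for the random number of active copies deserves the slightly more careful treatment the paper gives it (a monotone sublinear term $\alpha(\cdot)$ bounded by $\alpha(q)$).
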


It is worth noting that the separation given in Theorem~\ref{thm:main} is nearly tight, and in general
the external information with constant error is at most the square of the amortized zero-error
communication complexity:
\begin{thm}\label{thm:quadratic_tight}
  There exists an absolute constant $C>0$, such that for any $\eps>0$, $f\colon\power{n}\times\power{n}\to\power{}$ and an input distribution $\mu$,
  we have that
  \[
    \IC^{\sf external}_{\mu}[f,\eps]\leq \frac{C}{\eps^2} \left(\lim_{q\rightarrow \infty} \frac{1}{q} \CC_{\mu^q}[f^q,0]\right)^2.
  \]
\end{thm}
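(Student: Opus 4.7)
The plan is to mimic the classical inequality $D(f) \le N^0(f) \cdot N^1(f)$ in the information-complexity setting, as suggested by the remark preceding the theorem. Let $T^\star := \lim_{q\to\infty} \frac{1}{q}\CC_{\mu^q}[f^q,0]$ denote the amortized zero-error communication complexity.

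First, I would obtain one-sided external-information protocols for both output labels. Applying Theorem~\ref{thm:oneside} to $\mu_1 := \mu(\cdot \mid f=1)$ and its co-nondeterministic counterpart to $\mu_0 := \mu(\cdot \mid f=0)$ produces zero-error protocols $\pi_1, \pi_0$ with $\IC^{\sf external,1}_{\mu_1}[\pi_1] \le T^\star + o(1)$ and $\IC^{\sf external,0}_{\mu_0}[\pi_0] \le T^\star + o(1)$. Each $\pi_b$ can, up to a natural reinterpretation, be viewed as a public-coin distribution over $b$-monochromatic rectangles $R$ that cover $f^{-1}(b)$, such that the mutual information between $R$ and $(x,y)$ under $\mu_b$ is at most $T^\star + o(1)$.

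Next, I would combine the two one-sided protocols into a two-sided protocol $\pi$ for $f$ on $\mu$ that iterates the following round. The parties use public randomness to sample rectangles $R^{(1)} \sim \pi_1$ and $R^{(0)} \sim \pi_0$ independently of $(x,y)$, and then use $O(1)$ bits to jointly check whether $(x,y) \in R^{(1)}$ and $(x,y) \in R^{(0)}$; if exactly one check succeeds they output the corresponding label. A coupling argument, applied coordinate-by-coordinate to the amortized zero-error protocol for $f^q$, shows that with enough rounds (of order $\Theta(T^\star/\eps)$) the failure probability falls below $\eps$. For the information accounting, the public-coin sampling is free, the $O(1)$ membership-check bits are a lower-order term, and the dominant cost is the mutual information between the finally-certifying rectangle and $(x,y)$, which contributes $O(T^\star)$ per relevant round; summed over $\Theta(T^\star/\eps)$ rounds and combined with a Markov-style conversion, this yields the bound $O(T^{\star 2}/\eps^2)$.

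The main obstacle is this Markov-style conversion in the second step: Theorem~\ref{thm:oneside} supplies an average-case external-information bound rather than a pointwise coverage rate, so translating it into a per-round success probability loses one factor of $\eps$, and amplifying the per-round success probability to $1-\eps$ loses another. This quadratic loss is precisely the analog of the quadratic blowup in the classical $D(f) \le N^0(f) \cdot N^1(f)$ argument, and by Theorem~\ref{thm:main} it is essentially tight.
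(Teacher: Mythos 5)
There is a genuine quantitative gap in your second step, and it is fatal to the claimed bound. A bound of the form $\MI[R;XY]\le T^\star$ (or $T^\star+o(1)$) for the certifying rectangle only controls the \emph{average} KL divergence $\Expect{(x,y)}{\DKL{R|_{X=x,Y=y}}{R}}$; after the Markov step, a typical input $(x,y)$ satisfies $\DKL{R|_{X=x,Y=y}}{R}\le T^\star/\eps$, and the correct conclusion is that a rectangle sampled from the \emph{prior} (i.e.\ by public coins, independently of the input) contains $(x,y)$ with probability only about $2^{-O(T^\star/\eps)}$ --- exponentially small, not $\Omega(\eps/T^\star)$. (Think of equality-like examples: the rectangle certifying a particular input can have exponentially small prior mass even though the mutual information is tiny.) So your scheme of $\Theta(T^\star/\eps)$ rounds, each publicly sampling one rectangle per label and spending $O(1)$ bits on membership checks, has failure probability $1-2^{-\Omega(T^\star/\eps)}$ after polynomially many rounds; to succeed you would need $2^{\Theta(T^\star/\eps)}$ rounds, and at $O(1)$ communicated bits per round the protocol's communication (and hence any information bound derived from it) blows up exponentially rather than giving $O(T^{\star 2}/\eps^2)$. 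Relatedly, your information accounting is not coherent as stated: the publicly sampled rectangles carry no information about the input, and the only leakage is through the communicated membership bits, so ``the mutual information of the finally-certifying rectangle'' is not a cost term you can charge per round. A secondary issue is that invoking Theorem~\ref{thm:oneside} for $\mu_1=\mu(\cdot\mid f=1)$ requires relating the amortized (nondeterministic) cost under $\mu_1^q$ to $T^\star$, which is defined under $\mu^q$; restricting the zero-error protocol for $\mu^q$ to the event that all $q$ coordinates lie in $f^{-1}(1)$ can inflate its expected length exponentially, so this reduction is not immediate.

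The paper's proof supplies exactly the two ingredients your sketch is missing, and the remark about $D(f)\le N^0(f)\cdot N^1(f)$ is only an analogy for where the square comes from. The paper takes a zero-error protocol $\Pi$ for $f^q$ with expected cost $\le 2Aq$, finds a coordinate with $\MI[\Pi;X_1,Y_1]\le 2A$, and defines the good set $G$ of inputs with $\DKL{\Pi|_{X_1=x_1,Y_1=y_1}}{\Pi}\le 2A/\eps$. For $(x_1,y_1)\in G$, a transcript drawn from the \emph{unconditioned} distribution of $\Pi$ is ``compatible'' with both players with probability at least $2^{-2A/\eps-17}$ (Claim~\ref{claim:cover}, via Fact~\ref{fact:KL_abs_val}), and compatibility rectangles are $f$-monochromatic by zero error. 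This yields not a per-round sampling scheme but a fixed public list of $L=2^{O(A/\eps)}$ monochromatic rectangles covering a $1-2\eps$ fraction of $\mu$; the players then decide $f$ by the classical Aho--Ullman--Yannakakis elimination protocol on this list, using $O(A/\eps)$ rounds of $O(\log L)=O(A/\eps)$ bits each (naming a rectangle that kills half the surviving $1$-rectangles). The quadratic $O(A^2/\eps^2)$ arises from (list-index length) $\times$ (number of elimination rounds), both logarithmic in $L$, not from two Markov/amplification losses of $\eps$. To repair your argument you would need to replace your iterated sampling with such a cover-plus-elimination step (or some other mechanism that is polylogarithmic in the $2^{\Theta(T^\star/\eps)}$ cover size), and to route the one-sided certificates through the amortized protocol itself rather than through Theorem~\ref{thm:oneside} under the conditioned measures.
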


\subsection{Proof overview and discussion}

\paragraph{Separating external and internal information.} As discussed earlier, having a problem with a low internal information complexity but a high external information complexity appears essential for our main separation result. We actually prove that the Bursting Noise Function, which was introduced by~\cite{GKR} to separate internal information complexity from randomized communication complexity also separates internal information complexity from external information complexity. To that end, we extend the reach of the relative-discrepancy lower bound technique from \cite{GKR} to apply to external information complexity.

We do this by introducing a property of protocols having ``universally low external information". We then show that (1) low external information cost protocols can be approximated with universally low external information protocols; and (2)
universally low external  information protocols (just like low-communication protocols in \cite{GKR}) do very poorly when trying to compute functions with the appropriate relative discrepancy property.

\paragraph{Separating amortized zero-error communication from external information.} It is actually surprisingly difficult to construct a candidate function for our main separation. We need a function with a low amortized zero-error communication, but a high external information complexity. Low amortized communication implies low internal information complexity. This means that inside the construction we should use a separation between internal and external information.

In addition, as seen in Theorem~\ref{thm:oneside}, non-deterministic zero-error amortized communication complexity appears to be connected to zero-error non-deterministic external information complexity. This connection is reinforced by Theorem~\ref{thm:quadratic_tight}. Therefore, the construction is likely to require a function featuring maximum possible (i.e. quadratic) separation between non-deterministic communication and deterministic communication complexity.

Indeed, our starting point is a boolean function whose query complexity exhibits
a quadratic separation between deterministic and non-deterministic complexity.
To ``lift'' this separation into the communication world, our construction uses an idea that is close in spirit to the cheat sheet lifting constructions~\cite{Cheatsheet1,Cheatsheet2}. At a high level, we too want to give advantage (say, access to non-deterministic certificates) to certain
class of protocols (in our case, protocols that solve many independent instances). Our construction however implements this high-level idea differently.

We hide extra information as an output of an auxiliary function $f$. Our chosen function
$f$ has low internal information complexity so as to be useful for constructing a protocol with low amortized communication complexity.
This extra information allows us to evaluate our function on all but
$o(1)$ fraction of the input tuples of the players in which they fail (and therefore importantly do not introduce errors), and for those
one may use a trivial protocol (which would contribute at most $o(1)$ to the amortized communication anyway). The second property that we need
is that this extra information would be useless for protocols with low external information attempting to solve only a single challenge. Indeed,
our chosen function $f$ will have high external information complexity, and our argument in the proof of Theorem~\ref{thm:external_internal} actually shows that
low external information protocols are unable to gain even a slight advantage for computing $f$. In particular, the extra information we
hide looks essentially random to them, and therefore does not offer any help.

With this intuition in mind, the starting point of our construction is a function $h:\{0,1\}^m\rightarrow \{0,1\}$ and its AND-lifting
$h_{\land}\colon \{0,1\}^m\times\{0,1\}^m\rightarrow \{0,1\}$ defined as $h_{\land}(x,y) = h(x\land y)$,
with the following properties:
(1) the external information of $h_{\land}$ is $\Omega(m)$;
(2) one can certify that $h(z)=0$ or that $h(z)=1$ using only $C = O(\sqrt{m})$ bits. We then wish to construct a function
$H$ on $4$-tuples, $(x,y,u,v)$, whose output on $(x,y,u,v)$ is $h_{\land}(x,y)$, and $(u,v)$ encodes a certificate for that
on the support of our input distribution. Here, by ``encodes'' we mean that $(u,v)$ could be viewed as a sequence of $C$ input tuples
to $f$, and that these bits encode a certificate to $h_{\land}(x,y)$. Thus, for the purposes of amortized communication complexity,
we use the back-door $(u,v)$ and only have to pay communication proportional to the internal information of $f$ (after retrieving the hint $f(u,v)$ we still need to use $O(C)$ communication to verify its veracity to ensure the final answer is never wrong). On the other hand,
$f(u,v)$ will look almost random to a low external information protocols, and hence is essentially useless for them,
so the external information of our protocol must be the external information of $h_{\land}$, i.e. at least $\Omega(m)$.

\paragraph{Discussion.} Of the two questions raised in the beginning of  Section~\ref{sec:intro}, in this paper we have given the optimal separation between internal and external information complexity. On the other hand, the mystery of understanding amortized zero-error communication complexity has only deepened. We now know that it is different from external information complexity, and that the worst possible gap is in some sense quadratic.

This leaves the question of characterizing zero-error amortized communication complexity wide open.

\begin{openproblem}
	Characterize the amortized zero-error communication complexity of functions.
	For simplicity, suppose $\mu$ has full support. Characterize:
	\begin{equation}
		 \ZAC(f,\mu):= \lim_{q\rightarrow\infty} { \CC}_{\mu^q}(f^q,0)/q,
	\end{equation}
where $\CC$ is average-case distributional communication complexity with zero error.
\end{openproblem}

Ideally, the characterization would be in terms of information-theoretic quantities pertaining to computing a single copy of $f$. Such a characterization is sometimes called ``single letter" characterization in the information theory literature. It is likely that understanding this quantity will lead to further
 communication complexity insights.

\paragraph{Organization.}
In Section~\ref{sec:prelim}, we recall some standard notions and tools that will be needed in our proofs.
We prove Theorem~\ref{thm:external_internal} in Section~\ref{sec:int_ext}, and Theorem~\ref{thm:main} in Section~\ref{sec:ammortized_separation}.
Finally, we prove Theorem~\ref{thm:quadratic_tight} and Corollary~\ref{corr:zero_error_external_separation} in Section~\ref{sec:tight}.

\section{Preliminaries}\label{sec:prelim}
\subsection{Information theory}
We begin with a few basic definitions from information theory. Throughout the paper, we only consider random variables with finite support.
\begin{definition}
  Let $X,Y$ be random variables with a finite support.
  \begin{enumerate}
    \item The Shannon entropy of $X$
  is $\HH[X] = \sum\limits_{x}{\Prob{}{X = x}\log\left(\frac{1}{\Prob{}{X = x}}\right)}$.
    \item The Shannon entropy of $X$ conditioned on $Y$ is
    $\HH[X~|~Y] = \Expect{y\sim Y}{\HH[X~|~Y = y]}$, where
    $\HH[X~|~Y = y] = \sum\limits_{x}{\cProb{}{Y=y}{X = x}\log\left(\frac{1}{\cProb{}{Y=y}{X = x}}\right)}$.
  \end{enumerate}
\end{definition}

\begin{definition}
Let $X,Y,Z$ be random variables with a finite support.
  \begin{enumerate}
    \item The mutual information between $X$ and $Y$ is
    $\MI[X;Y] = \HH[X] - \HH[X|Y]$.
    \item The mutual information between $X,Y$ conditioned on $Z$ is
    $\MI[X;Y~|~Z] = \HH[X~|~Z] - \HH[X~|~Y,Z]$.
  \end{enumerate}
\end{definition}

\begin{definition}
Let $X,Y$ be random variables with a finite support. The KL-divergence from $Y$ to $X$ is
    $\DKL{X}{Y} = \sum\limits_{x,y} \Prob{}{X = x}\log\left(\frac{\Prob{}{X = x}}{\Prob{}{Y = y}}\right)$.
\end{definition}

We will need the following standard facts from information theory (for proofs, see~\cite{elements} for example).
\begin{fact}\label{fact:mutual_div_KL}
  Let $X,Y,Z$ be random variables, and let $P_{X,Y}$ be the joint distribution of $X,Y$. Then
  \[
  \MI[ X,Y ; Z] = \Expect{(x,y)\sim P_{X,Y}}{\DKL{Z|_{X=x,Y=y}}{Z}}.
  \]
\end{fact}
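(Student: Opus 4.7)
The plan is to unfold both sides using the definitions of entropy, mutual information, and KL divergence, and show that after rearranging the summations they match. This is a routine calculation; the main content is identifying the $P(z\mid x,y)/P(z)$ ratio inside the log as precisely what appears in the KL divergence on the right-hand side.

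I would start from the left-hand side and write
\[
\MI[X,Y;Z] \;=\; \HH[Z] - \HH[Z \mid X,Y].
\]
Expanding the two entropies gives
\[
\MI[X,Y;Z] \;=\; -\sum_{z}\Pr[Z=z]\log \Pr[Z=z] \;+\; \sum_{x,y,z}\Pr[X=x,Y=y,Z=z]\log \Pr[Z=z\mid X=x,Y=y].
\]
In the first sum, rewrite $\Pr[Z=z]=\sum_{x,y}\Pr[X=x,Y=y,Z=z]$ so that every term is indexed by the triple $(x,y,z)$. Combining the two sums then yields
\[
\MI[X,Y;Z] \;=\; \sum_{x,y,z} \Pr[X=x,Y=y,Z=z]\,\log\frac{\Pr[Z=z\mid X=x,Y=y]}{\Pr[Z=z]}.
\]

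Next, I would factor $\Pr[X=x,Y=y,Z=z]=\Pr[X=x,Y=y]\cdot\Pr[Z=z\mid X=x,Y=y]$ and pull the $\Pr[X=x,Y=y]$ factor outside the inner sum over $z$. What remains inside is exactly $\DKL{Z|_{X=x,Y=y}}{Z}$ by definition, giving
\[
\MI[X,Y;Z] \;=\; \sum_{x,y}\Pr[X=x,Y=y]\cdot \DKL{Z|_{X=x,Y=y}}{Z} \;=\; \Expect{(x,y)\sim P_{X,Y}}{\DKL{Z|_{X=x,Y=y}}{Z}},
\]
which is the desired identity. There is no real obstacle here; the only care needed is to make sure that terms where $\Pr[X=x,Y=y,Z=z]=0$ are handled by the usual convention $0\log 0 = 0$, which is valid since $\Pr[Z=z\mid X=x,Y=y]=0$ whenever the joint probability vanishes. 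Since the statement is cited from~\cite{elements}, a one-paragraph derivation along these lines suffices.
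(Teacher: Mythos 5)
Your derivation is correct and is the standard one; the paper itself offers no proof for this fact, simply listing it among "standard facts from information theory" and citing Cover--Thomas, so there is nothing to compare against. The unfolding of $\MI[X,Y;Z]=\HH[Z]-\HH[Z\mid X,Y]$, merging the two sums, and factoring $\Pr[X{=}x,Y{=}y,Z{=}z]=\Pr[X{=}x,Y{=}y]\Pr[Z{=}z\mid X{=}x,Y{=}y]$ is exactly the textbook argument, and your remark about the $0\log 0$ convention closes the only potential gap.
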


\begin{fact}\label{fact:MI_chainrule}
  Let $X,Y_1,\ldots,Y_n$ be random variables. Then
  \[
  \MI[ X; Y_1,\ldots,Y_n] = \sum\limits_{i=1}^{n} \MI[X; Y_i~|~Y_{<i}].
  \]
\end{fact}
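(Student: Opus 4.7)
The plan is to prove the chain rule via a telescoping decomposition of entropy, working directly from the definitions of $\MI$ and $\HH$ given at the start of Section~\ref{sec:prelim}. First I would rewrite the left hand side using the definition of mutual information applied to $X$ and the tuple $(Y_1,\ldots,Y_n)$:
\[
\MI[X; Y_1,\ldots,Y_n] \;=\; \HH[X] - \HH[X \mid Y_1,\ldots,Y_n].
\]

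Next I would telescope the right hand side. Writing $Y_{<i} = (Y_1,\ldots,Y_{i-1})$ with the convention that $Y_{<1}$ is the empty tuple, so that $\HH[X \mid Y_{<1}] = \HH[X]$, and noting that $(Y_i, Y_{<i}) = Y_{\leq i} = Y_{<i+1}$, the sum
\[
\sum_{i=1}^n \bigl(\HH[X \mid Y_{<i}] - \HH[X \mid Y_i, Y_{<i}]\bigr)
\]
collapses: the subtracted term at index $i$ cancels with the first term at index $i+1$, leaving only $\HH[X \mid Y_{<1}] - \HH[X \mid Y_{<n+1}] = \HH[X] - \HH[X \mid Y_1,\ldots,Y_n]$. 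By the definition of conditional mutual information, each summand is exactly $\MI[X; Y_i \mid Y_{<i}]$, so the sum equals the telescoped quantity, which in turn equals $\MI[X; Y_1,\ldots,Y_n]$, as desired.

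There is essentially no obstacle here; the statement is a textbook identity, and the only subtlety is being careful with the $i=1$ boundary, where $Y_{<1}$ is the empty tuple and $\HH[X \mid Y_{<1}]$ must be interpreted as $\HH[X]$. An equivalent route is induction on $n$: the base case $n=1$ is the definition, and the inductive step writes $\MI[X; Y_1,\ldots,Y_n] = \MI[X; Y_{<n}] + \MI[X; Y_n \mid Y_{<n}]$ using the two-variable chain rule $\HH[X] - \HH[X \mid Y_{<n}, Y_n] = \bigl(\HH[X] - \HH[X \mid Y_{<n}]\bigr) + \bigl(\HH[X \mid Y_{<n}] - \HH[X \mid Y_{<n}, Y_n]\bigr)$, and then applies the inductive hypothesis to the first summand. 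Either route is a few lines and could equivalently be discharged by a citation to~\cite{elements}.
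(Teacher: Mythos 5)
Your proof is correct: the telescoping of $\HH[X \mid Y_{<i}] - \HH[X \mid Y_i, Y_{<i}]$, with the boundary convention $\HH[X \mid Y_{<1}] = \HH[X]$, is exactly the standard argument, and each summand matches the paper's definition $\MI[X;Y_i \mid Y_{<i}] = \HH[X \mid Y_{<i}] - \HH[X \mid Y_i, Y_{<i}]$. The paper itself offers no proof of this fact, stating it as standard and deferring to~\cite{elements}, so your few-line derivation (or the equivalent induction) is precisely what that citation stands in for.
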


\begin{fact}\label{fact:MI_cond}
  Let $X,Y,Z$ be random variables. Then
  $\MI[ X; Y~|~Z]\leq \MI[X; Y,Z]$.
\end{fact}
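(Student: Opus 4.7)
The plan is to derive the inequality directly from the chain rule for mutual information (Fact~\ref{fact:MI_chainrule}) together with the non-negativity of mutual information. First, I would apply the chain rule to the pair $(Z,Y)$ appearing in the second slot of $\MI[X;Y,Z]$, ordering $Z$ before $Y$. This is the two-variable instance of Fact~\ref{fact:MI_chainrule} and yields the decomposition
\[
\MI[X;Y,Z] \;=\; \MI[X;Z] + \MI[X;Y \mid Z].
\]
The ordering matters here: by placing $Z$ first, the conditional term that appears is exactly $\MI[X;Y\mid Z]$, which is the quantity we want to bound.

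The second ingredient is the standard fact that mutual information between any two random variables is non-negative, i.e.\ $\MI[X;Z]\geq 0$. I would derive this by rewriting $\MI[X;Z]$ as the KL-divergence between the joint distribution of $(X,Z)$ and the product of its marginals (matching the definition of KL-divergence given earlier in the preliminaries), and then invoking non-negativity of KL-divergence, which itself follows from Jensen's inequality applied to the convex function $-\log$ (equivalently, Gibbs' inequality). Subtracting $\MI[X;Z]\geq 0$ from the chain-rule identity above gives
\[
\MI[X;Y\mid Z] \;=\; \MI[X;Y,Z] - \MI[X;Z] \;\leq\; \MI[X;Y,Z],
\]
which is the desired inequality.

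Since this is a textbook fact assembled from two elementary primitives, I do not anticipate any genuine obstacle; the only small pitfall is making sure to expand the chain rule in the correct order (conditioning on $Z$, not on $Y$), since the reverse ordering would yield $\MI[X;Z\mid Y]$ on the right-hand side and would not immediately give the bound. A one-line proof along the lines above should suffice, and it matches the standard presentation of this fact in references such as the one cited in the paper.
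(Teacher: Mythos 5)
Your proof is correct and is the standard argument: decompose $\MI[X;Y,Z]=\MI[X;Z]+\MI[X;Y\mid Z]$ via the two-variable chain rule (Fact~\ref{fact:MI_chainrule} with $Y_1=Z$, $Y_2=Y$) and drop the non-negative term $\MI[X;Z]$. The paper states this fact without proof, citing a standard reference, so there is nothing to compare against; your derivation is exactly the expected one, and your remark about ordering $Z$ before $Y$ in the chain rule is the right thing to be careful about.
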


For $p\in[0,1]$, we denote by $\Ber(p)$ a Bernoulli random variable with parameter $p$.
\begin{fact}\label{fact:KL_bits}
  Let $p,q\in[0,1]$ and suppose that $\frac{1}{3}\leq q\leq \frac{2}{3}$. Then
  \[
  2(p-q)^2
  \leq
  \DKL{\Ber(p)}{\Ber(q)}
  \leq \frac{9(p-q)^2}{2\ln 2}.
  \]
\end{fact}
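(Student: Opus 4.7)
My plan is to handle the two inequalities independently. For the lower bound, I would invoke Pinsker's inequality. For Bernoulli distributions, $\|\Ber(p) - \Ber(q)\|_{\mathrm{TV}} = |p - q|$, and Pinsker's inequality in base-$2$ units gives
\[
\DKL{\Ber(p)}{\Ber(q)} \geq \frac{2(p-q)^2}{\ln 2} \geq 2(p-q)^2,
\]
where the second step uses $\ln 2 < 1$. Notably this step does not use the constraint $q \in [1/3, 2/3]$ at all.

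For the upper bound I would reduce to a one-variable estimate. Working in nats, set $u = p/q$ and $v = (1-p)/(1-q)$; the constraint $q \in [1/3, 2/3]$ forces $u, v \in [0, 3]$, and a direct expansion gives
\[
(\ln 2) \cdot \DKL{\Ber(p)}{\Ber(q)} = q \, u \ln u + (1-q) \, v \ln v.
\]
The key step is the one-variable inequality $u \ln u \leq (u - 1) + (u - 1)^2$ for every $u \in [0, 3]$, with the convention $0 \ln 0 = 0$. Setting $g(u) = u \ln u - (u - 1) - (u - 1)^2$, one has $g(0) = g(1) = 0$, $g'(1) = 0$, and $g''(u) = 1/u - 2$ flips sign only at $u = 1/2$. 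Hence $g$ is concave on $[1/2, \infty)$ with interior maximum $g(1) = 0$, and convex on $[0, 1/2]$ with endpoint values $g(0) = 0$ and $g(1/2) = 1/4 - (\ln 2)/2 < 0$; either way $g \leq 0$ throughout $[0, 3]$.

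Applying this inequality to both $u$ and $v$ and taking the $(q, 1-q)$-weighted combination, the linear contributions cancel because $q(u - 1) + (1-q)(v - 1) = qu + (1-q)v - 1 = 0$, and the quadratic contributions give
\[
q(u - 1)^2 + (1 - q)(v - 1)^2 = (p - q)^2 \Big( \frac{1}{q} + \frac{1}{1 - q} \Big) = \frac{(p - q)^2}{q(1 - q)} \leq \frac{9(p - q)^2}{2},
\]
using $q(1 - q) \geq 2/9$ on $[1/3, 2/3]$. Dividing by $\ln 2$ to return to bits yields the stated upper bound. The main (mild) technical step is the one-variable inequality: its quadratic coefficient at $u = 1$ is $1$ rather than the sharp Taylor value $1/2$, and this extra slack is precisely what is needed for the estimate to extend over the full interval $[0, 3]$ rather than only a small neighborhood of $u = 1$; beyond this, the proof is symbolic manipulation plus one invocation of Pinsker.
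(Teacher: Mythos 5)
Your proof is correct. Note that the paper does not actually prove this fact --- it is stated among the ``standard facts from information theory'' with a pointer to a textbook --- so there is no internal argument to compare against; your write-up serves as a valid self-contained verification. Both halves check out: Pinsker's inequality in nats gives $2(p-q)^2$, and dividing by $\ln 2<1$ only helps, so the lower bound indeed needs no restriction on $q$. For the upper bound, the substitution $p=qu$, $1-p=(1-q)v$, the pointwise inequality $u\ln u\le (u-1)+(u-1)^2$ (your convexity/concavity split at $u=1/2$ is sound, and the endpoint value $1/4-(\ln 2)/2<0$ is right), the cancellation $q(u-1)+(1-q)(v-1)=0$, and the identity $q(u-1)^2+(1-q)(v-1)^2=(p-q)^2/\bigl(q(1-q)\bigr)\le \tfrac{9}{2}(p-q)^2$ all verify, giving exactly $\DKL{\Ber(p)}{\Ber(q)}\le \tfrac{9(p-q)^2}{2\ln 2}$ in bits. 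One small remark: the restriction to $u\in[0,3]$ is not actually needed, since your own argument shows $g\le 0$ on all of $[0,\infty)$ (concavity handles $[1/2,\infty)$, convexity handles $[0,1/2]$); the hypothesis $q\in[1/3,2/3]$ is used only in the final step $q(1-q)\ge 2/9$.
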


\subsection{Communication complexity}
Let $f:\mathcal{X}\times\mathcal{Y}\to\power{}$ be a function, let $\mu$ be a distribution over its inputs and denote $(X,Y)\sim \mu$.
Throughout, we denote by $\Pi$ a two-player communication protocol, and by $\Pi(X,Y)$ the distribution over transcripts of the protocol
where the inputs are sampled according to the random variables $(X,Y)$. We denote the output of a specific transcript $\pi$ by
${\sf output}(\pi)$. Abusing notations, for inputs $x,y$, we denote by ${\sf output}(\Pi(x,y))$ the random variable which is the output
of $\Pi$ when run on inputs $x,y$.
\begin{definition}
  The internal information of the protocol $\Pi$ is defined as
  $\MI^{\sf internal}_{\mu}[\Pi] = \MI[\Pi; X | Y]+\MI[\Pi; Y | X]$.

  For an error parameter $0\leq \eps<\half$, we define the internal information cost of
  $f$ on $\mu$ with error $\eps$ by
  \[
  \IC^{\sf internal}_{\mu}[f,\eps] = \inf_{\Pi: \Prob{(x,y)\sim \mu}{f(x,x)\neq {\sf output}(\Pi(x,y))}\leq \eps} \MI^{\sf internal}_{\mu}[\Pi].
  \]
\end{definition}

\begin{definition}
  The external information of the protocol $\Pi$ is defined as
  $\MI^{\sf external}_{\mu}[\Pi] = \MI[\Pi; X,Y]$.

  For an error parameter $0\leq \eps<\half$, we define the external information cost of
  $f$ on $\mu$ with error $\eps$ by
  \[
  \IC^{\sf external}_{\mu}[f,\eps] = \inf_{\Pi: \Prob{(x,y)\sim \mu}{f(x,y)\neq \Pi(x,y)}\leq \eps} \MI^{\sf external}_{\mu}[\Pi].
  \]
\end{definition}

\begin{fact}\label{fact:external_greater_internal}
   For any function $f$ and $\eps>0$ it holds that
  $\IC^{\sf external}_{\mu}[f,\eps]\geq \IC^{\sf internal}_{\mu}[f,\eps]$.
\end{fact}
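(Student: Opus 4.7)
The plan is to prove the inequality at the level of individual protocols, $\MI^{\sf external}_\mu[\Pi] \geq \MI^{\sf internal}_\mu[\Pi]$, and then pass to the infimum over all protocols computing $f$ with error at most $\eps$ on both sides. Unpacking the definitions, this amounts to showing
\[
\MI[\Pi; X, Y] \geq \MI[\Pi; X \mid Y] + \MI[\Pi; Y \mid X]
\]
for an arbitrary two-player protocol $\Pi$ run on inputs $(X, Y) \sim \mu$.

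First I would use the chain rule (Fact~\ref{fact:MI_chainrule}) to write $\MI[\Pi; X, Y] = \MI[\Pi; X] + \MI[\Pi; Y \mid X]$. So it suffices to prove the single inequality $\MI[\Pi; X] \geq \MI[\Pi; X \mid Y]$; by symmetry the same argument would also give $\MI[\Pi; Y] \geq \MI[\Pi; Y \mid X]$, but only one direction is needed here. A short manipulation of entropies (expanding both sides as differences of Shannon entropies) shows that $\MI[\Pi; X] \geq \MI[\Pi; X \mid Y]$ is equivalent to the well-known statement $\MI[X; Y \mid \Pi] \leq \MI[X; Y]$, namely that conditioning on a protocol transcript cannot increase the correlation between Alice's and Bob's inputs.

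The main step, and the only one that uses anything specific about communication protocols, is establishing the latter inequality. I would do this by induction on the messages $M_1, \ldots, M_k$ of $\Pi$, proving the stronger claim $\MI[X; Y \mid M_{\leq i}] \leq \MI[X; Y \mid M_{<i}]$ for every prefix. Without loss of generality say $M_i$ is sent by Alice, so $M_i$ is a deterministic function of $(X, M_{<i}, R_A)$ with $R_A$ independent of $(X, Y)$ and of $R_B$. This yields the Markov chain $Y \to (X, M_{<i}) \to M_i$, so $\MI[M_i; Y \mid X, M_{<i}] = 0$. Applying the chain rule to $\MI[X, M_i; Y \mid M_{<i}]$ in the two possible orders then gives
\[
\MI[X; Y \mid M_{<i}] = \MI[M_i; Y \mid M_{<i}] + \MI[X; Y \mid M_{\leq i}],
\]
and since mutual information is nonnegative the desired inequality follows.

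The hard part, in the sense of being the only non-routine step, is the message-by-message reduction in the last paragraph; once it is in hand the rest is bookkeeping with the chain rule and passing to the infimum over valid protocols. Since any protocol achieving error at most $\eps$ in the external model is also valid in the internal one, taking infima on both sides of $\MI^{\sf external}_\mu[\Pi] \geq \MI^{\sf internal}_\mu[\Pi]$ yields $\IC^{\sf external}_\mu[f, \eps] \geq \IC^{\sf internal}_\mu[f, \eps]$, as claimed.
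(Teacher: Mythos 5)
Your proposal is correct. The paper offers no proof of this statement at all: it is recorded as a standard fact (cf.\ the remark after Equation~\eqref{eq:ICCC1} that it follows from basic information-theoretic calculations), so there is no ``paper proof'' to compare against; your argument --- reduce to $\MI[\Pi;X]\geq \MI[\Pi;X\mid Y]$ via the chain rule, observe this is equivalent to $\MI[X;Y\mid \Pi]\leq \MI[X;Y]$, and prove the latter message by message --- is exactly the standard textbook derivation.

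One small point worth tightening: the step ``$M_i$ is a deterministic function of $(X,M_{<i},R_A)$ with $R_A$ independent of $(X,Y,R_B)$, hence $Y\to (X,M_{<i})\to M_i$'' is not literally immediate, because conditioning on the partial transcript $M_{<i}$ could in principle correlate $R_A$ with $Y$. What you actually need is that conditioned on $M_{<i}$ (and $X$), the pair $(X,R_A)$ remains independent of $(Y,R_B)$ --- the rectangle property for partial transcripts --- which is itself proved by the same induction on messages, so it is natural to fold it into your inductive hypothesis. With that one-line strengthening the argument is complete.
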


We need to use the notion of smooth protocols, as defined in~\cite{BBCR}.
\begin{definition}\label{def:smooth}
  A two-player protocol $\Pi$ is called smooth if for every pair of inputs $x,y$, a step $i$ in the protocol,
  and a possible transcript $T$ up to the $(i-1)$-th step, it holds that the distribution of the next message
  $M_{i}$ satisfies that $\frac{1}{3}\leq \cProb{}{x,y,T}{M_i = 1}\leq \frac{2}{3}$.
\end{definition}

An important fact that we will use, is that one can transform a given protocol $\Pi$ into a smooth protocol $\Pi'$ that has
roughly the same error probability, whose information cost is the same as the original protocol $\Pi$. Such statement was
proved in~\cite[Lemma 23]{BW} for internal information, and the same argument also works for external information. We thus
have the following lemma.
\begin{lemma}\label{lem:smooth}
  Suppose $f\colon\power{n}\times\power{n}\to\power{}$ and that $\mu$ is a distribution over $\power{n}\times\power{n}$,
  and $\eps,\eps'>0$.
  Then any protocol $\Pi$ for $(f,\mu)$ with external information at most $M$ and error at most $\eps'$
  can be turned into a smooth protocol $\Pi'$ for $(f,\mu)$ external information at most $M$ and error at most $\eps'+\eps$.
\end{lemma}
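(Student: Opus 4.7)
The plan is to mimic the construction of~\cite{BW} (their Lemma~23), which proves the analogous statement for internal information, and verify that the same construction also preserves external information. The idea is to replace each biased message bit of $\Pi$ by a short sub-protocol of smooth bits generated using fresh private randomness. Concretely, at a step where Alice (say) is about to send a bit $b$ with conditional probability $p:=\Pr[b=1\mid x,T]$, she proceeds as follows: if $p\in[1/3,2/3]$ she sends $b$ unchanged; otherwise, by the symmetry $b\mapsto 1-b$ assume $p<1/3$, and Alice uses her private coins to jointly sample $(c_1,b)$ with $c_1\sim\Ber(1/3)$ (independent of $(x,y)$ given the transcript so far), marginally $b\sim\Ber(p)$, and coupled so that $c_1=0\Rightarrow b=0$. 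She transmits $c_1$; if $c_1=0$ the sub-protocol terminates with $b=0$, otherwise she recurses with updated bias $3p$. By construction every transmitted bit has conditional bias in $[1/3,2/3]$ for every $(x,y,T)$, so $\Pi'$ is smooth.

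To make the recursion depth finite, I cap it at $K=\Theta(\log(L/\eps))$, where $L$ upper-bounds the length of $\Pi$; for $\Pi$ of possibly unbounded length I first truncate $\Pi$ itself at a length $L$ with $\Pr[\mathrm{length}(\Pi)>L]\leq\eps/4$, which is possible because any valid protocol halts almost surely. When the recursion cap is reached the sub-protocol defaults to $b=0$, contributing at most $p\leq 3^{-(K+1)}$ error per biased bit; summed over the $L$ bits this adds at most $L\cdot 3^{-(K+1)}\leq\eps/2$, giving overall error $\leq\eps'+\eps$.

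External information is preserved because the full transcript of $\Pi'$ is a deterministic function of $(\Pi,R)$, where $R$ denotes all the private randomness used for smoothing and satisfies $R\perp(X,Y,\Pi)$ (it is fresh randomness sampled independently of everything else). By the data-processing inequality,
\[
\MI^{\sf external}_\mu[\Pi']=\MI[\Pi';X,Y]\,\leq\, \MI[\Pi,R;X,Y]\,=\,\MI[\Pi;X,Y]\,\leq\, M.
\]
The main obstacle will be the careful balancing of the two truncation parameters (the protocol-length cut-off $L$ and the recursion-depth cap $K$) so that smoothness holds for every $(x,y,T)$, including those outside $\supp(\mu)$, while the total added error stays within $\eps$. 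This balancing is exactly what~\cite{BW} works out for internal information; the only new step for external information is the display above, which uses only data-processing and the independence $R\perp(X,Y)$, so no new ingredients are needed.
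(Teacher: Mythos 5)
Your high-level plan (simulate each biased bit of $\Pi$ by a short block of smooth bits generated with fresh private randomness, cap the block length, and charge the truncation to the $\eps$ error budget) is indeed the route the paper has in mind when it cites~\cite[Lemma 23]{BW}, but your specific construction has a real problem: it is not a well-defined protocol. Whether a transmitted bit is one of the coupling bits $c_j$ or the ``direct send'' of $b$ — and hence where the block ends, what value it decodes to, and whose turn comes next — is determined by the current bias $3^j p$, and $p=\Pr[b=1\mid x,T]$ is known only to the speaker. The listener therefore cannot parse the transcript, cannot recover the simulated bit $b$ (which he needs in order to compute his own future messages in the simulation of $\Pi$), and cannot even tell when his turn arrives. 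Any fix must use a block structure and decoding rule that depend only on the transmitted bits (for instance, fixed-length blocks of noisy repetitions $b\oplus N_j$ with $N_j\sim\Ber(1/3)$, decoded by majority by both players); note that having Alice first announce the number of stages does not work, since that count is essentially $\lceil\log_3(1/p)\rceil$ and announcing it can by itself leak far more information about $x$ than the original bit did.

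The second gap is the information bound, which is exactly the part the lemma asks you to establish. Your display assumes $\Pi'$ is a deterministic function of $(\Pi,R)$ with $R\perp(X,Y,\Pi)$, but in your construction it is not: to generate the coupling bits consistently with the already-sampled $b$ one needs the conditional law of $(c_1,c_2,\ldots)$ given $b$, and this law depends on $p$, i.e.\ on $x$ (for example, given $b=0$ one has $c_1\sim\Ber\bigl(\tfrac{1}{3}\cdot\tfrac{1-3p}{1-p}\bigr)$). So $\Pi'$ is a function of $(\Pi,R,X)$, and the data-processing step collapses. Worse, conditioned on $\Pi$ the block bits — in particular the block lengths — remain correlated with $X$, so $\Pi'$ can genuinely reveal information beyond $\Pi$; since the lemma requires external information at most $M$ with no additive slack (and the protocol length can vastly exceed $M$), even a small per-block leak is fatal. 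Showing that the smoothing simulation does not increase the information cost is the substantive content of~\cite[Lemma 23]{BW} and of the paper's remark that the same argument applies to $\MI[\Pi;X,Y]$; it cannot be dispatched by the one-line data-processing display, so this step needs an actual argument tailored to a correctly parseable construction.
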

We remark that the transformation in~\cite[Lemma 23]{BW} may increase the communication complexity of a protocol, however this
is a non-issue for us as we are only concerned with the external information of the protocol.

\begin{fact}\label{fact:KL_abs_val}
  Let $P, Q$ be distributions over domain $X$. Then
  \[
  \sum\limits_{x}{P(x) \card{\log\left(\frac{P(x)}{Q(x)}\right)}}\leq \DKL{P}{Q} + 8.
  \]
\end{fact}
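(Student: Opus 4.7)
The plan is to split the sum according to the sign of $\log(P(x)/Q(x))$ and relate the ``absolute value'' sum to the KL divergence itself plus a small additive error coming from the ``negative'' part.

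First, I would partition the domain as $A = \{x : P(x) \geq Q(x)\}$ and $B = \{x : P(x) < Q(x)\}$. On $A$ we have $|\log(P(x)/Q(x))| = \log(P(x)/Q(x))$, while on $B$ we have $|\log(P(x)/Q(x))| = -\log(P(x)/Q(x)) = \log(Q(x)/P(x))$. Writing out $\DKL{P}{Q}$ with the same partition and comparing term by term gives the identity
\[
\sum_{x} P(x) \card{\log\left(\frac{P(x)}{Q(x)}\right)} \;=\; \DKL{P}{Q} \;+\; 2 \sum_{x \in B} P(x) \log\left(\frac{Q(x)}{P(x)}\right).
\]
So it suffices to show the residual sum over $B$ is $O(1)$.

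Second, I would apply the elementary inequality $\ln t \leq t - 1$ (valid for all $t > 0$), rewritten as $\log_2 t \leq (t-1)/\ln 2$. Using this with $t = Q(x)/P(x)$ for $x \in B$ gives
\[
P(x) \log\left(\frac{Q(x)}{P(x)}\right) \;\leq\; \frac{Q(x) - P(x)}{\ln 2}.
\]
Summing over $x \in B$ and dropping the (positive) contribution from $x \in A$ yields
\[
\sum_{x \in B} P(x) \log\left(\frac{Q(x)}{P(x)}\right) \;\leq\; \frac{1}{\ln 2} \sum_{x \in B} \bigl(Q(x) - P(x)\bigr) \;\leq\; \frac{1}{\ln 2},
\]
where the last step uses $\sum_{x \in B} Q(x) \leq 1$ and $\sum_{x \in B} P(x) \geq 0$.

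Combining the two steps, the total sum is at most $\DKL{P}{Q} + 2/\ln 2$, which is comfortably below $\DKL{P}{Q} + 8$. There is no real obstacle here; the argument is essentially a two-line calculation, and the constant $8$ in the statement is quite loose (one could replace it by any constant larger than $2/\ln 2 \approx 2.89$). The only subtle point is remembering to split on the sign of $\log(P/Q)$ so that the $\DKL{P}{Q}$ on the right arises naturally; once that observation is made, the tangent-line bound $\ln t \leq t-1$ finishes the job.
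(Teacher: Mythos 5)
Your argument is correct, and it reaches the same identity as the paper, namely
\[
\sum_{x} P(x)\,\card{\log\left(\tfrac{P(x)}{Q(x)}\right)}
= \DKL{P}{Q} + 2\sum_{x:\,P(x)<Q(x)} P(x)\log\left(\tfrac{Q(x)}{P(x)}\right),
\]
but you bound the residual sum differently. The paper dyadically partitions the set where $P\le Q$ into bands $A_j=\set{x:2^{-j-1}Q(x)<P(x)\le 2^{-j}Q(x)}$, bounds each summand by $2^{-j}Q(x)(j+1)$, and evaluates $\sum_j 2^{-j}(j+1)=4$ to obtain the residual $\le 4$ and hence the constant $8$. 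You instead apply the tangent-line inequality $\ln t\le t-1$ with $t=Q(x)/P(x)>1$ to get $P(x)\log_2(Q(x)/P(x))\le (Q(x)-P(x))/\ln 2$, and then sum to obtain $\le 1/\ln 2$, giving the constant $2/\ln 2\approx 2.89$. Your route is both more elementary (no dyadic decomposition needed) and quantitatively tighter; the paper's looser constant is simply a byproduct of the level-set bookkeeping. Either way the constant is not exploited beyond being $O(1)$ downstream, so the distinction is purely aesthetic, but your version is arguably the cleaner proof.
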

\begin{proof}
  Partition $A = \sett{x}{P(x)\leq Q(X)}$ into $A_j = \sett{x}{2^{-j-1}Q(x)<P(x)\leq 2^{-j} Q(x)}$ where $j=0,1,\ldots$.
  Then the left hand side is
  \[
  \sum\limits_{x\in \bar{A}}{P(x) \log\left(\frac{P(x)}{Q(x)}\right)}
  +\sum\limits_{j=0}^{\infty}\sum\limits_{x\in A_j} P(x) \log\left(\frac{Q(x)}{P(x)}\right)
  =
  \DKL{P}{Q}
  +2\sum\limits_{j=0}^{\infty}\sum\limits_{x\in A_j} P(x) \log\left(\frac{Q(x)}{P(x)}\right).
  \]
  The proof is now concluded by noting that
  \[
  \sum\limits_{j=0}^{\infty}\sum\limits_{x\in A_j} P(x) \log\left(\frac{Q(x)}{P(x)}\right)
  \leq \sum\limits_{j=0}^{\infty}\sum\limits_{x\in A_j} 2^{-j} Q(x)(j+1)
  \leq \sum\limits_{j=0}^{\infty} 2^{-j} (j+1) = 4.\qedhere
  \]
\end{proof}

\subsection{Probability}
We will need the following immediate corollary of Doob's martingale inequality.
\begin{fact}\label{fact:martingale}
  Suppose that $(X_i)_{i=1,\ldots,m}$ is a martingale and $\Expect{}{X_m^2}\leq M$. Then for every $\eps>0$,
  \[
  \Prob{}{\exists i\text{ such that } \card{X_i}\geq \sqrt{M/\eps}}\leq \eps.
  \]
\end{fact}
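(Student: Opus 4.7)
The plan is to derive Fact~\ref{fact:martingale} as a direct application of Doob's maximal inequality for submartingales, applied to the squared process.

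First I would observe that since $(X_i)_{i=1,\ldots,m}$ is a martingale and the map $t\mapsto t^2$ is convex, the process $(X_i^2)_{i=1,\ldots,m}$ is a non-negative submartingale. Indeed, by the conditional Jensen inequality,
\[
\Expect{}{X_{i+1}^2 \mid X_1,\ldots,X_i} \geq \left(\Expect{}{X_{i+1} \mid X_1,\ldots,X_i}\right)^2 = X_i^2.
\]

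Next I would apply Doob's maximal inequality to this non-negative submartingale: for every $\lambda > 0$,
\[
\Prob{}{\max_{1\leq i\leq m} X_i^2 \geq \lambda} \leq \frac{\Expect{}{X_m^2}}{\lambda} \leq \frac{M}{\lambda}.
\]
Setting $\lambda = M/\eps$ and noting that the event $\{\exists i: \card{X_i} \geq \sqrt{M/\eps}\}$ coincides with $\{\max_i X_i^2 \geq M/\eps\}$, I would conclude
\[
\Prob{}{\exists i\text{ such that } \card{X_i}\geq \sqrt{M/\eps}} \leq \eps,
\]
which is the desired bound.

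There is no real obstacle here; the statement is essentially a restatement of Doob's $L^2$ maximal inequality with the specific threshold $\sqrt{M/\eps}$, and the only minor bookkeeping is to verify that squaring a martingale yields a submartingale so that Doob applies in its standard form.
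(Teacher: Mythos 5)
Your proof is correct and takes essentially the same route as the paper: Doob's maximal inequality, applied at the level of the squared process $X_i^2$ with threshold $M/\eps$. Your explicit passage through the non-negative submartingale $(X_i^2)$ is in fact the clean instantiation of the paper's one-line argument, whose displayed chain conflates the $L^1$ and $L^2$ forms of Doob's inequality (as printed it bounds $\E[\card{X_m}]/(M/\eps)$, which only gives $\eps$ when $M\ge 1$), so your write-up is the more careful one.
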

\begin{proof}
  By Doob's martingale inequality,
  $
  \Prob{}{\max_{i} \card{X_i}\geq \sqrt{M/\eps}}
  \leq \frac{\Expect{}{\card{X_m}}}{M/\eps}
  \leq \frac{\sqrt{\Expect{}{X_m^2}}}{M/\eps}
  \leq \eps$.
\end{proof}

\begin{definition}
  Let $X,Y$ be random variables over the same universe $U$. The statistical distance between $X,Y$ is
  \[
  {\sf SD}(X,Y) = \half\sum\limits_{u\in U}\card{\Prob{}{X=u}-\Prob{}{Y=u}}.
  \]
\end{definition}
\begin{fact}\label{fact:TVD}
  Let $X, Y$ be discrete random variables over the same universe $U$. Then there is
  $A\subseteq U$ such that
  ${\sf SD}(X,Y) = 1 - \sum\limits_{x\in A} \Prob{}{X=x} - \sum\limits_{x\not\in A} \Prob{}{Y=x}$
\end{fact}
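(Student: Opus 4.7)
The plan is to exhibit the set $A$ explicitly as the ``support-of-$Y$-is-denser'' set, namely
\[
A := \sett{u\in U}{\Prob{}{X = u}\le \Prob{}{Y = u}},
\]
and then verify the claimed identity by a direct manipulation that routes through the pointwise-minimum formula for statistical distance.

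First I would observe that for each $u\in A$ one has $\Prob{}{X=u} = \min(\Prob{}{X=u},\Prob{}{Y=u})$, and for $u\notin A$ one has $\Prob{}{Y=u} = \min(\Prob{}{X=u},\Prob{}{Y=u})$. Summing over $u$,
\[
\sum_{x\in A}\Prob{}{X=x} + \sum_{x\notin A}\Prob{}{Y=x}
= \sum_{u\in U}\min\bigl(\Prob{}{X=u},\Prob{}{Y=u}\bigr).
\]

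Next I would use the elementary identity $\min(a,b) = \tfrac{1}{2}\bigl(a+b - |a-b|\bigr)$ applied with $a=\Prob{}{X=u}$ and $b=\Prob{}{Y=u}$. Since $X$ and $Y$ are probability distributions, the sums of $a$ and $b$ over $U$ each equal $1$, so
\[
\sum_{u\in U}\min\bigl(\Prob{}{X=u},\Prob{}{Y=u}\bigr)
= \tfrac{1}{2}\Bigl(2 - \sum_{u\in U}\bigl|\Prob{}{X=u}-\Prob{}{Y=u}\bigr|\Bigr)
= 1 - {\sf SD}(X,Y),
\]
by the definition of statistical distance. Rearranging gives exactly the claimed formula.

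There is no real obstacle here: the only place where one could slip is picking the ``wrong'' direction for $A$ (the inequality $\Prob{}{X=u}\le\Prob{}{Y=u}$ is what is needed so that the sum evaluates to the pointwise minimum, rather than to $1+{\sf SD}$). Once $A$ is chosen correctly, the identity is a one-line consequence of $\min(a,b)=\tfrac12(a+b-|a-b|)$ and the fact that $X$ and $Y$ are normalized.
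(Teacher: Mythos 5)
Your proof is correct, and it is the standard textbook argument for this identity. The paper states Fact~\ref{fact:TVD} without proof, so there is no authorial proof to compare against; your explicit choice of $A = \{u : \Pr[X=u] \le \Pr[Y=u]\}$, reduction to $\sum_u \min(\Pr[X=u],\Pr[Y=u])$, and the $\min(a,b)=\tfrac12(a+b-|a-b|)$ identity fill this gap cleanly.

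One small remark on your closing caveat: the issue with picking the ``wrong'' $A$ is slightly worse than you indicate. Choosing $A' = \{u : \Pr[X=u] \ge \Pr[Y=u]\}$ makes the two sums evaluate to $\sum_u \max(\Pr[X=u],\Pr[Y=u]) = 1 + {\sf SD}(X,Y)$, so the formula would yield $-{\sf SD}(X,Y)$ rather than $1+{\sf SD}(X,Y)$. Either way the point stands that the direction of the inequality defining $A$ is the one spot to be careful about.
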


\section{Separating internal and external information cost}\label{sec:int_ext}
In this section, we prove Theorem~\ref{thm:external_internal}. A key notion of our proof will
be the relative-discrepancy measure, introduced in~\cite{GKR}.
\begin{definition}\label{def:relative_discrepancy_lot}
  For a function $f\colon\power{n}\times\power{n}\to\power{}$ and a distribution $\mu$ over $\power{n}\times\power{n}$, we say
  $(f,\mu)$ has $(\eps,\delta)$ relative-discrepancy with respect to a distribution $\rho$, if for any rectangle
  $R = A\times B\subseteq \power{n}\times\power{n}$ for which $\rho(R)\geq \delta$, it holds that
  \begin{enumerate}
    \item $\mu(R\cap f^{-1}(0))\geq (\half - \eps)\rho(R)$,
    \item and $\mu(R\cap f^{-1}(1))\geq (\half - \eps)\rho(R)$.
  \end{enumerate}
\end{definition}

\begin{definition}\label{def:relative_discrepancy}
  For a function $f\colon\power{n}\times\power{n}\to\power{}$ and a distribution $\mu$ over $\power{n}\times\power{n}$, we say
  $(f,\mu)$ has $(\eps,\delta)$ relative-discrepancy if there is a distribution $\rho$ such that $(f,\mu)$ has
  $(\eps,\delta)$ relative-discrepancy with respect to $\rho$.
\end{definition}

In \cite{GKR}, the authors show that if $(f,\mu)$ has strong relative-discrepancy,  then $\CC_{\mu}(f,1/2 - \eps')$ must be high.
More precisely, they show if $(f,\mu)$ has $(\eps,\delta)$ discrepancy, then any protocol for $f$ on the distribution $\mu$ that achieves
advantage of $\eps'$, must communicate at least $\log\left(\frac{\eps' - \eps}{\delta}\right)$ bits. The main result of this section
strengthens this assertion, as follows.
\begin{thm}\label{thm:external}
 Let $M\in\mathbb{N}$, $\delta, \eps > 0$ and let $\mu$ be a distribution over $\power{n}\times\power{n}$.
  Suppose $f\colon\power{n}\times\power{n}\to\power{}$ such that $(f,\mu)$ has $(\eps,\delta)$ relative-discrepancy, and
  $\Pi$ is a protocol for computing $f$ such that $\MI^{{\sf ext}}_{\mu}[\Pi] \leq M$, then
  \[
  \Prob{(x,y)\sim\mu}{\Pi(x,y) = f(x,y)}\leq \frac{1}{2} + 2000\max\left(\eps,\frac{M}{\log(1/\delta)}\right).
  \]
\end{thm}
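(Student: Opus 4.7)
The plan is to adapt the relative-discrepancy argument of Ganor-Kol-Raz \cite{GKR} to the external information setting, via the two-step scheme indicated in the proof overview: first, approximate any protocol with low external information by one with only a bounded number of effective leaves (a ``universally low external information'' protocol), and second, apply a GKR-style discrepancy bound to the latter. The target bound of the form $\frac{1}{2} + O(\max(\eps, M/\log(1/\delta)))$ arises naturally from balancing the Markov-style approximation error with the discrepancy error of the bounded-leaf protocol, so the proof outline reduces to these two conceptual steps plus a careful choice of the tradeoff parameter.

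For the first step, I would apply Lemma~\ref{lem:smooth} to assume WLOG that $\Pi$ is smooth, paying a negligible additive error that can be absorbed into the generous constants of the theorem, and fix the public randomness appropriately. Set $\eta = \Theta(M/\log(1/\delta))$. The random variable $Z = \log(\Pr[\Pi \mid X,Y]/\Pr[\Pi])$ satisfies $\Expect{\mu}{Z} = \MI^{{\sf ext}}_{\mu}[\Pi] \leq M$, and by Fact~\ref{fact:KL_abs_val}, $\Expect{\mu}{|Z|} \leq M + 8$. Markov's inequality then gives that with $\mu$-probability at least $1-\eta$, $|Z| \leq O(M/\eta)$, i.e., $\Pr[\Pi = \pi] \geq 2^{-O(M/\eta)}\Pr[\Pi = \pi \mid X,Y]$. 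Using smoothness, one truncates $\Pi$ at any partial transcript whose marginal probability drops below $\theta := 2^{-O(M/\eta)}$ (a publicly-computable criterion), producing a protocol $\Pi'$ whose depth is $O(M/\eta)$, so that $\Pi'$ has at most $N := 2^{O(M/\eta)}$ leaves and agrees with $\Pi$ except on a set of $\mu$-mass at most $\eta$.

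For the second step, write the success probability of $\Pi'$ as $\frac{1}{2} + \frac{1}{2}\sum_\pi d_\pi$, where $d_\pi = \mu(R_\pi \cap f^{-1}(\text{out}(\pi))) - \mu(R_\pi \cap f^{-1}(1-\text{out}(\pi)))$, and split the leaves into ``good'' ($\rho(R_\pi) \geq \delta$) and ``bad'' ($\rho(R_\pi) < \delta$). The relative-discrepancy hypothesis yields $|d_\pi| \leq \mu(R_\pi) - (1-2\eps)\rho(R_\pi)$ on good leaves, while the trivial bound $|d_\pi| \leq \mu(R_\pi)$ applies on bad ones. Summing,
\[
\sum_\pi |d_\pi| \leq 1 - (1-2\eps)\rho(\mathrm{good}) = 2\eps + (1-2\eps)\rho(\mathrm{bad}) \leq 2\eps + N\delta,
\]
using that there are at most $N$ bad leaves, each with $\rho$-mass below $\delta$. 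Adding the $\eta$ loss from truncation,
\[
\Prob{(x,y)\sim\mu}{\Pi(x,y) = f(x,y)} \leq \tfrac{1}{2} + \eps + \tfrac{1}{2}N\delta + \eta,
\]
and the choice $\eta = \Theta(M/\log(1/\delta))$ ensures $N\delta = 2^{O(M/\eta)}\delta = o(\eta)$, delivering the desired bound after adjusting the hidden constants to match the factor $2000$.

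The main obstacle will be executing Step~2 carefully for smooth randomized protocols: one must show that truncation at the marginal-probability threshold yields a valid two-party protocol whose leaves are in bijection with partial-transcript rectangles of the original $\Pi$, with the claimed count $N$ and truncation error at most $\eta$. This is precisely where the paper's ``universally low external information'' notion is needed, together with Fact~\ref{fact:martingale} applied to the log-likelihood ratio process (which, by smoothness, has bounded step sizes and expected squared value $O(M)$) to control partial transcripts. Once Step~2 is in place, Step~3 is a routine extension of the communication-complexity GKR argument.
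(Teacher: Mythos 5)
Your two-step plan mirrors the paper's outline, but the concrete execution of both steps has genuine gaps. In Step 1, truncating at partial transcripts whose \emph{marginal} probability drops below $\theta=2^{-O(M/\eta)}$ is the wrong criterion: low external information does not make transcripts have non-negligible marginal probability. Pad any protocol with private random bits independent of the inputs --- the external information is unchanged, but every transcript has exponentially small marginal probability, so your rule truncates essentially every execution and the claim ``agrees with $\Pi$ except on $\mu$-mass $\eta$'' fails. The relevant quantity is the \emph{likelihood ratio} of the transcript given the input versus its marginal; this is not publicly computable (it depends on the private inputs), which is why the paper has each player track their own ratio $S_{A,i}=P^x_{A,\Pi,i}/\eta_{A,i}$ (resp.\ $S_{B,i}$) and enter ``strike mode'' when it leaves $[2^{-2M/\eps},2^{2M/\eps}]$, with the martingale bound (Fact~\ref{fact:martingale}) showing this happens rarely. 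Crucially, the output of that conversion is a \emph{worst-case, two-sided} bound on the ratios for every input and every transcript (universal external information), not a good event of $\mu$-mass $1-\eta$; this matters because the discrepancy distribution $\rho$ is unrelated to $\mu$, so a Markov-style good set under $\mu$ gives no control of what happens under $\rho$. Moreover, even a corrected truncation would not give you what Step 2 needs: a protocol with $2^{O(M/\eta)}$ \emph{leaves} is a protocol with communication $O(M/\eta)$, and no such reduction from low external information with small error is available (external information can be exponentially smaller than communication; the paper's conversion bounds likelihood ratios but does not bound the number of transcripts at all).

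This breaks Step 2 as written: your identity $\Pr[\text{success}]=\tfrac12+\tfrac12\sum_\pi d_\pi$ with $d_\pi$ defined via ``the'' rectangle $R_\pi$ and ``out$(\pi)$'' presumes a deterministic protocol in which each transcript corresponds to a rectangle of inputs and there are at most $N$ leaves. For the private-coin protocols relevant here, every input reaches every transcript with probability $P_A^x(\pi)P_B^y(\pi)$, there is no leaf-rectangle structure, and the number of transcripts is unbounded in terms of $M$, so the bound $\sum_\pi|d_\pi|\le 2\eps+N\delta$ has no analogue. The paper's Lemma~\ref{lem:step2} replaces leaf counting with a different decomposition: for each fixed transcript $\pi$, the input space is partitioned into roughly $(2M/\eps')^2$ rectangles $R_\pi[i,j]$ according to discretized likelihood ratios (this is exactly where the universal bound is used to cap the number of buckets), relative discrepancy is applied to the heavy buckets ($\rho(R_\pi[i,j])\ge\delta$), and the light buckets are controlled under $\rho$ via the worst-case bound $P_A^x(\pi)P_B^y(\pi)\le 2^{2M}\eta(\pi)$; the final summation is over transcripts weighted by $\rho(\pi)$ and $\eta(\pi)$, not over leaves. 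You gesture at this machinery in your closing paragraph, but the construction and counting you actually propose do not implement it, and the marginal-probability truncation plus deterministic leaf-counting route cannot be repaired into a proof of the stated bound.
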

Contrapositively, if $(f,\mu)$ has $(\eps,\delta)$ relative-discrepancy and $\Pi$ is a protocol for $(f,\mu)$ achieving
advantage $\eps'$, then the external information of $\Pi$ according to $\mu$ is at least $(\eps' - 2000\eps)\log(1/\delta)$.
Therefore, the relative-discrepancy measure allows us to prove lower bounds on the external information cost of a function,
which is always smaller than the communication complexity of a function.

To prove Theorem~\ref{thm:external_internal}, we instantiate Theorem~\ref{thm:external} with the ``bursting noise function'' from~\cite{GKR},
which we present next.
\paragraph{The bursting noise function.}
Let $k\in\mathbb{N}$ be thought of as large, and set $c = 2^{4^k}$. The bursting noise function, $f_{\sf burst}$ is a pointer chasing function on a tree of height $c$,
however the input distribution $\mu$ is supported only on $x,y$ that are very correlated. More precisely, for $b\in\{0,1\}$,
we define the distribution $\mu_b$ according to the following sampling procedure:
we think of a complete binary tree of depth $c$, and for each vertex in the tree, each player has a bit in their input. We think of vertices from
odd layers as being owned by Alice, and vertices from even layers as being owned by Bob.
Partition the layers of the tree into $c/k$ multi-layers (a multi-layer consists of $k$ consecutive layers),
and sample $i\in\{1,\ldots, c/k\}$ uniformly. For each multi-layer $j<i$, and for each vertex $u$ in multi-layer $j$,
we choose $y_u\in\power{}$ uniformly, and set $x_u = b\oplus y_u$.
In layer $i$, for each vertex $v$ in it, we choose $x_v,y_v\in\power{}$ independently and uniformly.

Next, we define the notion of a typical vertex. We say a vertex $p$ from layer $i$ is typical, if considering the part of the path from the
root to $p$ that is inside multi-layer $i$, on at least $80\%$ of it, on at least $80\%$ of the odd locations on that path it agrees with $x$,
and on at least $80\%$ of the even locations of the path it agrees with $y$.

For the rest of the layers, for each vertex $u$, let $p(u)$ denote the ancestor of $u$ from layer $i$. If $p(u)$ is typical, we again take
the bits to be uniform such as $x_u = b\oplus y_u$, and if $p(u)$ is atypical we take $x_u,y_u$ as independently chosen bits.

For $(x,y)\in{\sf supp}(\mu_b)$, we define $f_{\sf burst}(x,y) = b$. We take $\mu = \half \mu_0 + \half \mu_1$. We remark that
each one of $x$ and $y$ are $n$-bit Boolean strings where $n = \Theta(2^{c}) = \Theta(2^{2^{4^k}})$.

%

In~\cite{GKR}, the following two important properties are proved for the bursting noise function.
\begin{lemma}\label{lemma:GKR_internal}
  $\IC^{\sf internal}_{\mu}[f,2^{-k}] = O(k)$.
\end{lemma}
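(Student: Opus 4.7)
The plan is to exhibit an explicit deterministic protocol $\Pi$ for $f_{\sf burst}$ on $\mu$ with zero error and internal information at most $k + O(1)$, which easily meets the $2^{-k}$-error requirement.

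The protocol is simply pointer-chasing down the entire tree. Starting at the root, at each visited vertex $v$, if $v$ lies at an odd layer Alice sends $x_v$, and otherwise Bob sends $y_v$; the bit sent determines the child the walk proceeds to. After $c$ steps both players know both $x_v$ and $y_v$ at every vertex $v$ on the traversed path, and each outputs $\mathrm{maj}_{v \in \text{path}}(x_v \oplus y_v)$. Correctness: among the $c$ vertices on the path, only the $k$ vertices in the critical multi-layer $i$ have $x_v, y_v$ uniform and independent, while the remaining $c-k$ vertices satisfy $x_v \oplus y_v = b$. For multi-layers before $i$ this is built into the sampling of $\mu$; for multi-layers after $i$ it uses that the pointer-chased vertex $p^{\star}$ at the end of multi-layer $i$ is automatically typical --- within multi-layer $i$ the chosen direction at every odd (resp.\ even) vertex on the path equals $x_v$ (resp.\ $y_v$) by the pointer-chasing rule, so the $80\%$-agreement condition is satisfied at $100\%$, and every descendant on the path then inherits the $x_u = b \oplus y_u$ correlation. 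Since $c \gg k$, the majority equals $b$ deterministically.

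For the information cost, since $\Pi$ is deterministic $\MI[\Pi; X \mid Y] = \HH[\Pi \mid Y]$, and the plan is to bound this via
\[
\HH[\Pi \mid Y] \;\leq\; \HH[b \mid Y] + \HH[\Pi \mid Y, b] \;\leq\; 1 + \HH[\Pi \mid Y, b],
\]
together with the chain-rule decomposition $\HH[\Pi \mid Y, b] = \HH[\Pi_{<i} \mid Y, b] + \HH[\Pi_{=i} \mid Y, b, \Pi_{<i}] + \HH[\Pi_{>i} \mid Y, b, \Pi_{<i}, \Pi_{=i}]$. Given $(Y, b)$: the entire segment $\Pi_{<i}$ and the path it traces are fixed, so $\HH[\Pi_{<i} \mid Y, b] = 0$; in multi-layer $i$ only Alice's roughly $k/2$ bits at odd positions are fresh uniform randomness (Bob's bits in that slab already lie in $Y$), giving $\HH[\Pi_{=i} \mid Y, b, \Pi_{<i}] \leq k/2$; and after multi-layer $i$, typicality of $p^{\star}$ forces $x_u = b \oplus y_u$ on the whole continuation of the path, so both the remaining direction choices and transcript bits are determined, giving $\HH[\Pi_{>i} \mid \cdots] = 0$. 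Summing yields $\HH[\Pi \mid Y] \leq k/2 + 1$, and by symmetry $\HH[\Pi \mid X] \leq k/2 + 1$, so $\MI^{\sf internal}_\mu[\Pi] \leq k + O(1)$.

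The main obstacle is the careful entropy bookkeeping across the three multi-layer segments, in particular verifying that the pointer-chased vertex at the end of multi-layer $i$ always lies in the typical set (immediate once the pointer-chasing convention is unpacked) and that the random choice of that vertex does not secretly introduce more than $\approx k/2$ bits of fresh entropy per player in the later layers. Everything else reduces to routine applications of the chain rule.
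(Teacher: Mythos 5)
You should first note that the paper does not prove this lemma at all: it is imported verbatim from~\cite{GKR}, so the only ``proof'' in the paper is a citation. Your attempt is therefore a from-scratch argument, and it has a genuine gap at exactly the point you flag as a worry. The chain-rule decomposition $\HH[\Pi \mid Y,b] = \HH[\Pi_{<i}\mid Y,b] + \HH[\Pi_{=i}\mid Y,b,\Pi_{<i}] + \HH[\Pi_{>i}\mid \cdots]$ is not legitimate as written, because the segmentation of the transcript into ``before/at/after the burst'' depends on the burst index $i$, which is \emph{not} determined by $(Y,b)$: the marginal of $Y$ is uniform whatever $i$ is, so $i$ is independent of $(Y,b)$ and has entropy $\log(c/k)\approx 4^k$. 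Your transcript leaks essentially all of this entropy. Indeed, given $(y,b)$, Alice's transmitted bit at a path vertex $v$ equals $b\oplus y_v$ at every vertex above the burst and (because the chased end-of-burst vertex is typical) at every vertex below it, while inside the burst multi-layer her $\approx k/2$ path bits are fresh uniform coins. Hence the set of positions where her sent bits deviate from the pattern $b\oplus y_v$ is nonempty with probability $1-2^{-\Omega(k)}$ and, when nonempty, lies entirely in multi-layer $i$, so an observer holding $(y,b)$ recovers $i$ from the transcript except with probability $2^{-\Omega(k)}$. By Fano, $\HH[\Pi\mid Y]\geq \HH[\Pi\mid Y,B]\geq \MI[\Pi; I\mid Y,B]\geq (1-2^{-\Omega(k)})\log(c/k)-O(1) = \Omega(4^k)$, so the internal information of full pointer chasing is exponentially larger than $k$, not $k/2+O(1)$. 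The ``fresh entropy'' you feared is not in the later layers' bits themselves but in \emph{where} the noisy stretch sits, and with $c/k = 2^{4^k}/k$ multi-layers this term dominates everything.

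This is precisely the delicate issue that the actual upper-bound argument of~\cite{GKR} has to handle: a low internal-information protocol must either live with the cost of revealing the burst location (which is only affordable when the number of multi-layers is $2^{O(k)}$, not $2^{4^k}/k$) or be engineered so that the transcript does not pinpoint where the burst occurred; naive raw-bit pointer chasing over the parameters stated in this paper does neither. (Two smaller points: as written only the owner's bit is sent at each vertex, so ``both players know both $x_v$ and $y_v$ on the path'' is false --- each player can only form the XOR at the other player's vertices, which is repairable; and the correctness analysis itself, including the observation that the chased end-of-burst vertex is automatically typical, is fine.) To fix the proof you cannot simply patch the bookkeeping of this protocol; you would have to either change the protocol or work with the actual parameters and protocol of~\cite{GKR}, which is what the paper implicitly does by citing that result.
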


\begin{lemma}\label{lem:GKR}
  The pair $(f,\mu)$ has the $(\eps,\delta)$ relative-discrepancy property with respect to $\rho$ for $\eps = 2^{-k}$ and $\delta = \eps/ 2^{2^{k}}$.
\end{lemma}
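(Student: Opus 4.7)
The plan is to exhibit a witness distribution $\rho$ that treats the two labels $b\in\{0,1\}$ symmetrically and looks nearly identical to both $\mu_0$ and $\mu_1$ on every rectangle of $\rho$-mass at least $\delta$. A natural candidate is the distribution obtained by the same tree-sampling procedure used for $\mu_b$, except that each correlated XOR $x_u\oplus y_u$ is replaced by an independent unbiased coin --- equivalently, $\rho$ is (up to a support adjustment) the uniform distribution on $\supp(\mu_0)\cup \supp(\mu_1)$. Since $\mu=\tfrac12\mu_0+\tfrac12\mu_1$ with $\mu_b$ supported on $f^{-1}(b)$, the desired $(\eps,\delta)$ relative-discrepancy property becomes equivalent to $\mu_b(R)\geq (1-2\eps)\rho(R)$ for both $b\in\{0,1\}$ and every rectangle $R$ with $\rho(R)\geq\delta$.

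I would then decompose $\mu_b = \frac{k}{c}\sum_{i=1}^{c/k}\mu_{b,i}$, where $\mu_{b,i}$ is $\mu_b$ conditioned on the burst falling in multi-layer $i$, and reduce the claim to showing that for every $b$, $\mu_{b,i}(R)$ is $(1\pm O(\eps))$-close to $\rho(R)$ for all but an $\eps$-fraction of indices $i$. The burst multi-layer consists of $k$ layers' worth of fresh, independent, unconstrained input bits that are identically distributed under $\mu_{0,i}$, $\mu_{1,i}$, and $\rho$. Conditioning on a product event $R=A\times B$ can therefore distort the induced burst-layer distribution only by an amount controlled by the ``resolving power'' of $R$, which is $\log(1/\rho(R))\leq\log(1/\delta)=k+2^k$. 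Since the number of possible burst locations is $c/k = 2^{4^k}/k$, a rectangle of $\rho$-mass $\delta$ cannot pin down more than a $2^{-k}$-fraction of burst positions, which is exactly the slack we need.

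The main obstacle --- and the reason the construction needs $\delta$ as small as $\eps/2^{2^k}$ --- is controlling the $b$-dependent correlations that re-emerge \emph{below} the burst, on subtrees rooted at typical vertices where $x_u=b\oplus y_u$ is re-enforced. These correlations are what make $\mu_b$ globally far from $\rho$, and preventing a rectangle from exploiting them requires a delicate level-by-level argument: on each downstream layer, only an exponentially small typicality-filtered subset of vertices carries correlated inputs, and one shows via a martingale / second-moment computation on the ratio $\mu_{b,i}(\cdot)/\rho(\cdot)$ against the indicator of $R$ that each level contributes only a controlled multiplicative degradation, with the accumulated error telescoping into the claimed $\eps=2^{-k}$. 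This is essentially the discrepancy analysis of~\cite{GKR} applied to the witness $\rho$ above, and the main bookkeeping task is to ensure the constants line up so that the final guarantee is precisely $(\eps,\delta)=(2^{-k},2^{-k}/2^{2^k})$.
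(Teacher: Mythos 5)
You should first be aware that the paper contains no proof of this lemma: it is stated (together with Lemma~\ref{lemma:GKR_internal}) as a property ``proved in~\cite{GKR}'' for the bursting noise function, with $\rho$ being the witness distribution constructed there. So the relevant question is whether your sketch would stand on its own, and as written it does not. What is correct: since $\mu=\half\mu_0+\half\mu_1$ with $\mu_b$ supported on $f^{-1}(b)$, Definition~\ref{def:relative_discrepancy_lot} does reduce to showing $\mu_b(R)\geq(1-2\eps)\rho(R)$ for both $b$ and every rectangle $R$ with $\rho(R)\geq\delta$, and a decorrelated, label-symmetric witness is the natural candidate (though note that replacing every correlated pair by independent coins gives essentially the uniform distribution on $\power{n}\times\power{n}$, not the uniform distribution on $\supp(\mu_0)\cup\supp(\mu_1)$, and neither the paper nor your sketch verifies that this matches the $\rho$ for which the lemma is asserted).

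The genuine gap is in the core step. Your ``resolving power'' count --- $\log(1/\delta)\approx 2^k$ bits of conditioning versus the $\approx 4^k$ bits needed to specify the burst multi-layer --- only addresses a rectangle's ability to \emph{locate} the burst, but that is not what has to be ruled out: a rectangle $A\times B$ of $\rho$-mass $\delta$ could in principle create discrepancy by inducing a small aggregate statistical bias in the XOR pattern $x_u\oplus y_u$ over the exponentially many correlated vertices above and below the burst, without ever pinning down the burst position. Excluding this is precisely the delicate layer-by-layer induction that forms the technical heart of~\cite{GKR}, and your proposal does not supply it: the ``martingale / second-moment computation on the ratio $\mu_{b,i}(\cdot)/\rho(\cdot)$'' is not instantiated (no quantity, filtration, or per-level bound is specified, nor any reason the accumulated degradation over $2^{4^k}$ levels telescopes to $2^{-k}$ instead of blowing up), and the final sentence explicitly defers to ``essentially the discrepancy analysis of~\cite{GKR}.'' In other words, the key lemma-level content is cited rather than proved, which mirrors what the paper itself does, but means the proposal is an outline with the decisive argument missing rather than a proof.
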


First, we quickly show that Theorem~\ref{thm:external_internal} follows from Theorem~\ref{thm:external} and the above lemmas.
\begin{proof}[Proof of Theorem~\ref{thm:external_internal}]
Fix $\eta>0$ a small constant, and choose $k$ large enough. Using Lemma~\ref{lem:GKR} together with~\ref{thm:external} gives us that
$\IC^{\sf external}_{\mu}[f,\eta]\geq 2^{\Omega(k)}$, whereas Lemma~\ref{lemma:GKR_internal} gives us that
$\IC^{\sf internal}_{\mu}[f,\eta] = O(k)$.
\end{proof}

The rest of this section is devoted to the proof of Lemma~\ref{thm:external}, and we begin by giving a proof outline.
\paragraph{Outline of the proof of Theorem~\ref{thm:external}.}
The proof has two components. Fix a function $f$ and an input distribution $\mu$. In the first step
we show that any protocol $\Pi$ for $(f,\mu)$ with low external information, can be converted into a protocol $\Pi'$ such that
(a) $\Pi'$ has roughly the same error in computing $(f,\mu)$, and (b) $\Pi'$ never reveals too much information about the player's
input, with respect to any measure $\rho$; we refer to such protocols as having ``universally low external information'' (defined formally below).
In the second step, we show that if $(f,\mu)$ has the $(\eps,\delta)$ relative-discrepancy property, then a protocol
$\Pi$ with low universal external information can only have a small advantage of in computing $(f,\mu)$. Quantitative issues aside,
it is clear that one can combine steps (a) and (b) above to prove that a low external information protocol cannot have a significant
advantage in computing a function that has low relative-discrepancy.

\subsection{Universal external information}
Suppose $\Pi$ is a protocol between Alice and Bob. Suppose Alice speaks first, and denote her messages by $A = (A_1,\ldots, A_m)$, and
Bob's messages by $B = (B_1,\ldots,B_m)$. For each point $i\in [m]$ in the protocol, a possible exchange of messages $(a,b)\in\power{m}\times\power{m}$,
and a pair of inputs $x,y\in\power{k}$, denote
\[
P_{A,\Pi,i}^x(a,b) = \prod\limits_{j< i} \cProb{(X,Y)\sim \mu}{A_{<j} = a_{<j}, B_{<j} = b_{<j}, X=x}{A_j = a_j},
\]
\[
P_{B,\Pi,i}^y(a,b) = \prod\limits_{j< i} \cProb{(X,Y)\sim \mu}{A_{\leq j} = a_{\leq j}, B_{<j} = b_{<j}, Y=y}{B_j = b_j}.
\]
If this product runs through the whole protocol, i.e. $i = m+1$, we omit the subscript $i$ and simply write
$P_{A,\Pi}^x(a,b)$ and $P_{B,\Pi}^y(a,b)$.
\begin{definition}\label{def:low_universal}
  With the above notations, we say a protocol $\Pi$ has universal external information at most $M$,
  if there are non-negative functions $\eta_{A}(a,b)$ and $\eta_{B}(a,b)$ over transcripts, such that the following holds.
  \begin{enumerate}
    \item The function $\eta(a,b) = \eta_A(a,b)\eta_B(a,b)$ is a probability distribution.
    \item For any $(x,y)\in\power{k}\times\power{k}$
  and $(a,b)\in\power{m}\times\power{m}$ it holds that
  \begin{equation}\label{eq:unviversal_small}
  2^{-M}\leq \frac{P_{A,\Pi}^x(a,b) }{\eta_{A}(a,b)}\leq 2^M,
  \qquad \qquad
  2^{-M}\leq \frac{P_{B,\Pi}^y(a,b) }{\eta_{B}(a,b)}\leq 2^M.
  \end{equation}
  \end{enumerate}
\end{definition}
Informally, a protocol has low universal external information, if for all possible transcript $\pi = (a,b)$,
no input of Alice (or Bob) makes $\pi$ much more likely from their point of view.
We remark that having low universal external information is a very strong property. For example, it implies that
the external information of the protocol is low with respect to \emph{any} distribution.

\begin{lemma}\label{lem:low_universal}
  Suppose that a protocol $\Pi$ has universal external information at most $M$. Then
  for any distribution $\rho$ over $(x,y)$, we have that $\MI_{\rho}^{{\sf external}}[\Pi]\leq 2M$.
\end{lemma}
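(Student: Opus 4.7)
The plan is to exhibit the reference distribution $\eta$ from Definition~\ref{def:low_universal} as a ``surrogate marginal'' against which the expected KL divergence of the conditional transcript distribution can be bounded pointwise, and then to apply the usual variational bound on mutual information.

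First, I would fix a distribution $\rho$ on $(x,y)$ and denote by $P_{xy}(a,b)$ the distribution of the transcript $\Pi$ conditioned on inputs $(x,y)$. By the standard recursive unrolling of protocol transcripts (Alice and Bob alternately send messages depending only on their own input and the prior transcript), we have the product factorization
\[
P_{xy}(a,b) \;=\; P_{A,\Pi}^x(a,b)\cdot P_{B,\Pi}^y(a,b).
\]
Let $P_\Pi$ denote the $\rho$-marginal distribution of the transcript, i.e.\ $P_\Pi(a,b)=\Expect{(x,y)\sim\rho}{P_{xy}(a,b)}$. By Fact~\ref{fact:mutual_div_KL},
\[
\MI^{\sf external}_{\rho}[\Pi] \;=\; \MI[\Pi;X,Y] \;=\; \Expect{(x,y)\sim\rho}{\DKL{P_{xy}}{P_\Pi}}.
\]

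Second, I would use the elementary fact that for any distribution $\eta$ on transcripts,
\[
\Expect{(x,y)\sim\rho}{\DKL{P_{xy}}{P_\Pi}} \;=\; \Expect{(x,y)\sim\rho}{\DKL{P_{xy}}{\eta}}\;-\;\DKL{P_\Pi}{\eta} \;\leq\; \Expect{(x,y)\sim\rho}{\DKL{P_{xy}}{\eta}},
\]
which is just the statement that the marginal $P_\Pi$ minimizes expected KL divergence among all reference distributions. Taking $\eta(a,b) = \eta_A(a,b)\eta_B(a,b)$ from Definition~\ref{def:low_universal}, this reference is a genuine probability distribution on transcripts, so the inequality applies.

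Third, multiplying the two pointwise bounds in \eqref{eq:unviversal_small} gives for every $(x,y,a,b)$
\[
\frac{P_{xy}(a,b)}{\eta(a,b)} \;=\; \frac{P_{A,\Pi}^x(a,b)}{\eta_A(a,b)}\cdot\frac{P_{B,\Pi}^y(a,b)}{\eta_B(a,b)} \;\leq\; 2^{2M},
\]
so $\log\!\bigl(P_{xy}(a,b)/\eta(a,b)\bigr)\le 2M$ uniformly. Summing against $P_{xy}(a,b)$ gives $\DKL{P_{xy}}{\eta}\le 2M$ for every $(x,y)$, and taking expectation under $\rho$ concludes that $\MI^{\sf external}_{\rho}[\Pi]\leq 2M$.

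There is essentially no hard step here: the main content is noticing (i) the factorization $P_{xy}=P_A^x P_B^y$, which is exactly why the product form $\eta=\eta_A\eta_B$ in the definition suffices, and (ii) that any reference distribution upper bounds mutual information via expected KL. The ``universal'' strength of Definition~\ref{def:low_universal} buys us a single $\eta$ that works simultaneously against every $\rho$, which is precisely what is needed here.
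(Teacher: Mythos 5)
Your proposal is correct and follows essentially the same route as the paper: both use the factorization $P_{xy}=P_{A,\Pi}^x\cdot P_{B,\Pi}^y$ together with the product bounds of Definition~\ref{def:low_universal} to get $\DKL{P_{xy}}{\eta}\le 2M$ pointwise, and then the fact that replacing the true transcript marginal by the reference $\eta$ can only increase the expected KL divergence (the paper writes this as the difference being $\DKL{\Pi}{\eta}\ge 0$, which is exactly your ``marginal minimizes expected KL'' step). No gaps.
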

\begin{proof}
Let $\eta_A, \eta_B$ and $\eta = \eta_A\cdot \eta_B$ be from Definition~\ref{def:low_universal}. First, we argue that
for all inputs $x,y$ it holds that $\DKL{\Pi|_{X=x, Y=y}}{\eta}\leq 2M$. Indeed, by definition
\[
  \DKL{\Pi|_{X=x, Y=y}}{\eta}
  =\sum\limits_{\pi} \cProb{}{X=x, Y=y}{\Pi = \pi} \log\left(\frac{\cProb{}{X=x, Y=y}{\Pi = \pi}}{\eta(\pi)}\right).
\]
Noting that $\cProb{}{X=x, Y=y}{\Pi = \pi} =  P_{A,\Pi}^x(\pi)P_{B,\Pi}^y(\pi)$, we get from the universal external information property that
\[
\frac{\cProb{}{X=x, Y=y}{\Pi = \pi}}{\eta(\pi)}\leq 2^{2M},
\]
and plugging that in above yields that $\DKL{\Pi|_{X=x, Y=y}}{\eta}\leq 2M$.

The statement will thus follow if we show that $\MI^{{\sf external}}_{\rho}[\Pi]\leq \Expect{(x,y)\sim \rho}{\DKL{\Pi|_{X=x, Y=y}}{\eta}}$.
Indeed, using the definition of external information and Fact~\ref{fact:mutual_div_KL} we get that
  \[
  \MI^{{\sf external}}_{\rho}[\Pi]
  =\MI_{\rho}[X,Y;\Pi]
  =\Expect{(x,y)\sim\rho}{\DKL{\Pi|_{X=x,Y=y}}{\Pi}},
  \]
  and therefore
  \begin{align*}
  \Expect{(x,y)\sim \rho}{\DKL{\Pi|_{X=x, Y=y}}{\eta}} - \MI^{{\sf external}}_{\rho}[\Pi]
  &=\Expect{(x,y)\sim\rho}{\DKL{\Pi|_{X=x,Y=y}}{\eta} - \DKL{\Pi|_{X=x,Y=y}}{\Pi}}\\
  &=\Expect{(x,y)\sim\rho}{\sum\limits_{\pi}\Prob{}{\Pi(x,y) = \pi}\log\left(\frac{\Prob{X,Y\sim\rho}{\Pi = \pi}}{\eta(\pi)}\right)}\\
  &=\sum\limits_{\pi}\Prob{X,Y\sim\rho}{\Pi = \pi}\log\left(\frac{\Prob{X,Y\sim\rho}{\Pi = \pi}}{\eta(\pi)}\right)\\
  &= \DKL{\Pi}{\eta} \geq 0.\qedhere
  \end{align*}

\end{proof}

\subsection{Step (a): fixing external information leakage}\label{sec:fixing_leak}
Our goal in this section is to prove the following lemma, asserting that a low external information protocol may be converted into
a protocol with low universal external information with only small additional error.
\begin{lemma}\label{lem:convert_low_universal}
  Let $\eps,\eps' > 0$ and let $\mu$ be distributions over $\power{n}\times\power{n}$.
  Suppose $f\colon\power{n}\times\power{n}\to\power{}$ is a function, and $\Pi$ is a protocol for $(f,\mu)$
  that has error at most $\eps'$ and $\MI^{{\sf external}}_{\mu}[\Pi] \leq M$. Then there is a protocol $\Pi'$ for $(f,\mu)$ such that
  \begin{enumerate}
    \item The error of $\Pi'$ on $(f,\mu)$ is at most $\eps'+40\eps$.
    \item The universal external information of $\Pi'$ is at most $2M/\eps+1$.
  \end{enumerate}
\end{lemma}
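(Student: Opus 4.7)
The plan is to truncate $\Pi$ so that for each party the running log-likelihood ratio between their own conditional next-message distribution and the \emph{marginal} next-message distribution never leaves a bounded window. First, by Lemma~\ref{lem:smooth} we may pass to a smooth version of $\Pi$ at an additive cost of $\eps$ in the error. Set $M' := 2M/\eps$. At each round $j$ where Alice speaks with prefix $t_{<j}$, let $\widetilde{\Pi}_A(a_j\mid t_{<j}) := \Expect{X\mid T_{<j}=t_{<j}}{\Pi_A(a_j\mid X, t_{<j})}$ denote the marginal transition, and define Alice's running log-likelihood
\[
L_A^{(j)}(a,b,x) := \sum_{i \le j,\ \text{Alice}} \log \frac{\Pi_A(a_i \mid x, t_{<i})}{\widetilde{\Pi}_A(a_i \mid t_{<i})},
\]
and define $\widetilde{\Pi}_B, L_B^{(j)}$ symmetrically. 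Smoothness ensures each increment is bounded by $1$; moreover, $L_A^{(j)}$ is a submartingale with $\Expect{}{L_A^{(m)}} = \sum_{j\ \text{Alice}} \MI[A_j; X \mid T_{<j}] \le \MI^{\sf external}_\mu[\Pi] \le M$ by the chain rule for external information.

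The protocol $\Pi'$ is then defined as follows: the two parties simulate $\Pi$, but the instant Alice's running $L_A^{(j)}$ leaves $[-M', M']$ she switches to sampling her remaining messages from $\widetilde{\Pi}_A(\cdot \mid t_{<j})$; Bob does likewise with $L_B^{(j)}$. One additional abort-bit per message lets both parties detect a switch and output a default symbol in that case. Set
\[
\eta_A(a,b) := \prod_{j:\text{Alice}} \widetilde{\Pi}_A(a_j \mid t_{<j}), \qquad \eta_B(a,b) := \prod_{j:\text{Bob}} \widetilde{\Pi}_B(b_j \mid t_{<j});
\]
by the chain rule $\eta_A \cdot \eta_B$ equals the marginal transcript distribution of $\Pi$ and is therefore a probability distribution. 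For any $(x,a,b)$, Alice's strategy past the switch point coincides with $\widetilde{\Pi}_A$, so all factors after the switch cancel and
\[
\frac{P^x_{A,\Pi'}(a,b)}{\eta_A(a,b)} \;=\; 2^{L_A^{(\tau_A - 1)}(a,b,x)},
\]
where $\tau_A$ is Alice's switch step. By the definition of $\tau_A$ together with the per-step bound of $1$, this exponent lies in $[-M'-1,\, M'+1]$, giving universal external information at most $M'+1 = 2M/\eps + 1$.

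It remains to bound the probability of a switch, and here the main obstacle is the lower tail of $L_A^{(j)}$: although $L_A^{(j)}$ is a submartingale drifting upward, there is no a priori bound on how far it can dip. The crucial observation is that $\exp(-L_A^{(j)})$ is a nonnegative martingale starting at $1$, since at each Alice step
\[
\Expect{a_j \sim \Pi_A(\cdot \mid x, t_{<j})}{\frac{\widetilde{\Pi}_A(a_j \mid t_{<j})}{\Pi_A(a_j \mid x, t_{<j})}} \;=\; \sum_{a_j} \widetilde{\Pi}_A(a_j \mid t_{<j}) \;=\; 1.
\]
Doob's maximal inequality then yields $\Pr[\inf_j L_A^{(j)} \le -t] \le 2^{-t}$, which is $\ll \eps$ at $t = M'$, and integrating this tail bound gives $\Expect{}{(L_A^{(m)})^-} = O(1)$. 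The upper tail is now routine: $(L_A^{(j)})^+$ is a nonnegative submartingale, so Doob's inequality combined with $\Expect{}{(L_A^{(m)})^+} \le M + O(1)$ gives $\Pr[\sup_j L_A^{(j)} \ge M'] = O(\eps)$. Summing Alice's and Bob's switching events together with the $\eps$ spent on smoothing bounds the total additional error by $O(\eps)$, comfortably inside the $40\eps$ budget claimed.
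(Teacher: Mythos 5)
Your construction is essentially the paper's: after smoothing via Lemma~\ref{lem:smooth}, each player tracks the running log-likelihood ratio between their input-conditioned and transcript-averaged next-message distributions and, once it leaves a window of radius $2M/\eps$, switches to sampling from the averaged transitions; your witnesses $\eta_A,\eta_B$ are the same products of averaged transitions, and the final $+1$ coming from the last pre-switch message (bounded by $1$ via smoothness) is the same. Where you genuinely differ is the concentration step. The paper splits $L_A$ into its compensator (a sum of conditional KL divergences, handled by Markov using $\MI^{\sf external}_{\mu}[\Pi]\le M$) plus a centered martingale whose second moment is controlled through smoothness and Fact~\ref{fact:KL_bits}, and then applies the maximal inequality of Fact~\ref{fact:martingale}; you instead handle the lower tail by noting that $2^{-L_A^{(j)}}$ is a nonnegative martingale started at $1$ (Ville/Doob), and the upper tail by Doob's $L^1$ maximal inequality applied to the submartingale $\bigl(L_A^{(j)}\bigr)^+$, using only the first-moment identity $\E\bigl[L_A^{(m)}\bigr]\le M$. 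This is a valid and arguably cleaner route that avoids the second-moment/Pinsker step, with one caveat: your bounds $2^{-2M/\eps}\le\eps$ and $(M+O(1))/(2M/\eps)=O(\eps)$ require $M$ to be at least an absolute constant, a regime restriction not in the statement (the paper's own proof has a milder implicit requirement of the same flavor, namely $M\ge\eps$); this is harmless where the lemma is used.

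The one genuine problem is the ``additional abort-bit per message.'' Universal external information (Definition~\ref{def:low_universal}) demands the two-sided bound~\eqref{eq:unviversal_small} for \emph{every} input and \emph{every} transcript. If abort bits are transmitted they are part of the transcript, and whether Alice aborts at a given step is a deterministic function of her input and the prefix; hence there exist transcripts whose abort pattern is realizable for some input $x'$ but not for another input $x$, forcing $P^{x}_{A,\Pi'}(a,b)=0$ while any admissible $\eta_A$ must be positive on that transcript (otherwise the upper bound fails for $x'$). Thus no choice of $\eta_A$ certifies property 2 for the protocol as you defined it, and your identity $P^x_{A,\Pi'}(a,b)/\eta_A(a,b)=2^{L_A^{(\tau_A-1)}(a,b,x)}$ silently ignores these extra bits. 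The fix is simply to delete them, as the paper does: the switching player continues the simulation silently, and since on every trajectory on which no switch occurs $\Pi'$ coincides with $\Pi$, the error still increases by at most the switching probability, so no abort/default mechanism is needed.
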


We begin by explaining the idea in behind the design of $\Pi'$.
In $\Pi'$ we will simulate $\Pi$, except that each player will also measure how many bits of external information
they have leaked so far (this is possible to do since it only depends on the transcript up to that point, and their input). In case
this number of bits has exceeded a certain threshold, the player changes their behaviour and enters a ``strike'' in which they will
act in a way that does not reveal any additional external information regarding his/her input. Strictly speaking, once a player determines
they have leaked too much information, they will forget about their input and instead sample their message according to their message
distribution at that point in $\Pi$, conditioned on the transcript so far (but not on their specific input).

Let us now be more precise. Fix a protocol $\Pi$ such that $I^{\sf ext}_\mu[\Pi]\leq M$, and recall the definitions of $P^x_{A,\Pi,i}(a,b)$ and
$P^y_{B,\Pi,i}(a,b)$ above. We will also need to define their averaged counterparts, i.e.
\[
\eta_{A,i}(a,b) \defeq P_{A,\Pi,i}(a,b) = \prod\limits_{j< i} \cProb{(X,Y)\sim \mu}{A_{<j} = a_{<j}, B_{<j} = b_{<j}}{A_j = a_j},
\]
\[
\eta_{B,i}(a,b) \defeq P_{B,\Pi,i}(a,b) = \prod\limits_{j< i} \cProb{(X,Y)\sim \mu}{A_{\leq j} = a_{\leq j}, B_{<j} = b_{<j}}{B_j = b_j},
\]
(the $\eta_{A,i},\eta_{B,i}$ notations is not a coincidence, and we will use these functions to exhibit the fact that the protocol $\Pi'$ we
construct has low universal external information).
We note that $P_{A,\Pi,i}, P^{x}_{A,\Pi,i}$ only depend on the $i-1$-prefixes of $a$ and $b$, and $P_{B,\Pi,i}, P^{y}_{B,\Pi,i}$ only depends on the
$i$-prefix of $a$ and $i-1$-prefix of $b$. We will therefore sometimes abuse notations and drop the rest of $a,b$ from the notation.
We also note that $P^{x}_{A,\Pi,i}, P^{y}_{A,\Pi,i}$ depend only on $x$ and $y$ (and not on $\mu$). This is because, at each point in time, a player's message only depends
on their input, and the messages they received from the other player so far.

With these notations, we may consider for each $a,b$, the likelihood ratios
$S_{A,i}(a_{<i},b_{<i},x) = \frac{P^{x}_{A,i}(a_{<i},b_{<i})}{\eta_{A,i}(a,b)}$ and
$S_{B,i}(a_{\leq i},b_{< i},y) = \frac{P^{y}_{B,i}(a_{\leq i},b_{< i})}{\eta_{B,i}(a,b)}$. Intuitively,
these quantities measure how much more/ less likely
a given exchange of messages $(a,b)$ is, when knowing $x$ and $y$ respectively, compared to only knowing that $(x,y)\sim\mu$.
Thus, we may expect an external observer to learn many bits of information in case the protocol was executed and the resulting exchange of messages $a,b$ has
high likelihood ratios, say $S_{A,m}(a,b,x)\geq 2^M$ (in which case we expect an external observer to learn $\approx M$ bits of information).
This turns out to be true, and actually with slightly more work, one can show that the same holds if the likelihood ratios become large at some earlier point in the protocol, $i<m$.

With this intuition in mind, and noting that Alice (and analogously Bob) can compute $S_{A,i}(a_{<i},b_{<i},x)$ (analogously $S_{B,i}(a_{\leq i},b_{< i},y)$)
it makes sense that the players should alter their behaviour if at some point in their protocol, their likelihood ratio gets too high -- say,
larger than $2^{2 M/\eps}$. Indeed, this is what our protocol $\Pi'$ does.

\paragraph{The protocol $\Pi'$.} We simulate the protocol $\Pi$, with a small change in the beginning of each player's turn.
Consider a player intending to send their $i$th message -- say Alice. First,
Alice  computes $S_{A,i}(\pi_A,\pi_B,x)$ (where $\pi_A$ are the messages of Alice so far, and $\pi_B$ are the messages of Bob so far).
If this quantity is larger than $2^{2 M/\eps}$, or at most $2^{-2M/\eps}$, Alice moves into \emph{``strike mode''}, and otherwise proceeds
as usual according to the protocol $\Pi$.
Upon entering ``strike mode'', Alice will sample her subsequent messages only conditioned on the transcript of the protocol up to that point
without taking her input $x$ into consideration.
I.e., to send her $j$th message, for $j\geq i$, Alice considers the transcript of the protocol thus far, $a_{<j}, b_{<j}$,
and the distribution $A_j(X,Y)~|~A_{<j}(X,Y) = a_{<j}, B_{<j}(X,Y) = a_{<j}$ where $(X,Y) \sim \mu$,
and samples her next message according to it. Bob implements an analogous check during his turns.

\skipi
In the remainder of this section, we argue that the probability that a player ever enters ``strike mode'' in $\Pi'$ is small, and so
$\Pi'$ retains roughly the same error as $\Pi$ on $(f,\mu)$. We then show that  $\Pi'$ has universal external information at most $2M/\eps+1$.
We remark that for technical reasons, we will need to assume that our original protocol is smooth (as in Definition~\ref{def:smooth}). Thankfully, by Lemma~\ref{lem:smooth},
we may indeed do so while only slightly increasing the error of the protocol.

\subsubsection{The error of $\Pi'$ on $\mu$ is comparable to the error of $\Pi$}
In this section we prove the following lemma.
\begin{lemma}\label{lem:advantage_same}
  The probability that at least one of the players enters ``strike mode'' in the protocol $\Pi'$ when ran on $\mu$ is at most $38\eps$.
\end{lemma}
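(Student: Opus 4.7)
The plan is to analyze the random processes $(S_{A,i})$ and $(S_{B,i})$ directly under $\mu$ (running the original protocol $\Pi$). Since $\Pi'$ coincides with $\Pi$ up until the first strike, the probability a strike ever happens in $\Pi'$ equals the probability under $\Pi$ that one of these likelihood ratios ever leaves $[2^{-2M/\eps},2^{2M/\eps}]$. Setting $L_{A,i}=\log S_{A,i}$, it suffices to bound each of $\Prob{\mu}{\max_i L_{A,i}\ge 2M/\eps}$, $\Prob{\mu}{\min_i L_{A,i}\le -2M/\eps}$, and the two analogous Bob quantities by $O(\eps)$, and then union-bound.

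The core observation is that, under $\mu$, $1/S_{A,i}$ is a non-negative martingale with respect to the natural filtration indexed by Alice's turns. Indeed, the multiplicative update of $S_{A,i}$ at Alice's $i$-th turn is $\cProb{(X,Y)\sim\mu}{a_{<i},b_{<i},X=x}{A_i=a_i}/\cProb{(X,Y)\sim\mu}{a_{<i},b_{<i}}{A_i=a_i}$, which involves only Bob's \emph{past} messages $b_{<i}$ and not $b_i$; conditionally on the past and on $X=x$, $A_i$ is drawn from the numerator distribution, so the conditional expectation of the reciprocal factor is exactly $1$. The same calculation, together with Cauchy--Schwarz, shows that $S_{A,i}$---and hence $L_{A,i}$---is a non-negative submartingale.

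The lower tail is then handled by Doob's $L^{1}$ maximal inequality applied to the martingale $1/S_{A,i}$: $\Prob{\mu}{\min_{i} S_{A,i}\le 2^{-2M/\eps}}=\Prob{\mu}{\max_{i} 1/S_{A,i}\ge 2^{2M/\eps}}\le 2^{-2M/\eps}$, which is negligible compared to $\eps$. For the upper tail I would apply Doob's submartingale maximal inequality to $L_{A,i}$: $\Prob{\mu}{\max_{i} L_{A,i}\ge 2M/\eps}\le \Expect{\mu}{L_{A}^{+}}/(2M/\eps)$, writing $L_A$ for the final value of the process. Since $L_{A}=\log(\cProb{}{X,B}{A}/\cProb{}{B}{A})$, one has $\Expect{\mu}{L_A}=\MI[A;X\mid B]\le \MI[A,B;X,Y]=\MI^{\sf external}_{\mu}[\Pi]\le M$ by the chain rule, and Fact~\ref{fact:KL_abs_val} upgrades this to $\Expect{\mu}{\card{L_{A}}}\le M+8$. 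Hence the Alice upper-tail probability is at most $(M+8)\eps/(2M)=O(\eps)$.

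The analysis for $L_{B,i}$ is identical with the roles of the players swapped. A union bound over the four events (Alice/Bob $\times$ upper/lower tail) yields the desired $38\eps$ bound once the constants are tracked. The main obstacle, I expect, is the martingale bookkeeping: one needs to recognise that it is $1/S_A$ (and not $S_A$) that is a genuine martingale under $\mu$, which lets the lower tail be handled essentially for free by Doob's $L^{1}$ inequality, and to combine this with Fact~\ref{fact:KL_abs_val} to bridge between the external-information bound $\Expect{\mu}{L_{A}}\le M$ and the one-sided expectation $\Expect{\mu}{L_{A}^{+}}$ that Doob's submartingale inequality requires.
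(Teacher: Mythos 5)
Your main line of argument is sound and takes a genuinely different route from the paper. The paper centers $\log S_{A,i}$ by its conditional means: it decomposes it into a predictable part (a sum of conditional KL divergences whose total expectation is the quantity $(\rom{1})\le M$, handled by Markov, giving the event $W_1$) plus a martingale $G^A_i$, and controls $\max_i\card{G^A_i}$ via the $L^2$ maximal inequality (Fact~\ref{fact:martingale}), where the variance bound of Claim~\ref{claim:expected_sum_square} is exactly the place that uses smoothness of $\Pi$ (Lemma~\ref{lem:smooth} and Fact~\ref{fact:KL_bits}). You instead apply the one-sided $L^1$ Doob inequality to $\log S_{A,i}$ itself and use the reciprocal process $1/S_{A,i}$, a genuine nonnegative martingale with initial value $1$, for the lower tail; this avoids smoothness and second moments altogether for this lemma, which is a real simplification. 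Two local repairs are needed. First, $\log S_{A,i}$ is not nonnegative, and its submartingale property does not follow ``hence'' from $S_{A,i}$ being a submartingale (log is concave); the correct justification is that its conditional increments are $\DKL{p^x_{A,\Pi,i}}{p_{A,\Pi,i}}\ge 0$ — and the one-sided Doob bound $\Pr[\max_i L_{A,i}\ge\lambda]\le\Expect{}{L_{A,m}^{+}}/\lambda$ needs no nonnegativity anyway. Second, Fact~\ref{fact:KL_abs_val} cannot be applied verbatim to $P^x_{A,\Pi}$ and $\eta_A$, since neither is a probability distribution over transcripts; for fixed $(x,y)$ apply it with $P=\Pi|_{X=x,Y=y}$ and $Q=\eta_A\cdot P^y_{B,\Pi}$, which is a distribution and satisfies $\log(P/Q)=L_A$, yielding $\Expect{\mu}{\card{L_A}}\le \Expect{\mu}{L_A}+8\le M+8$. (Also, $\Expect{\mu}{L_A}$ is the paper's per-round sum $(\rom{1})=\sum_j \MI[A_j;X\mid A_{<j},B_{<j}]$ rather than literally $\MI[A;X\mid B]$, but the bound by $M$ stands.)

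The one substantive discrepancy is quantitative. Your per-player tails are $(M+8)\eps/(2M)$ and $2^{-2M/\eps}$, which are $O(\eps)$ only when $M$ is at least a constant and $M\gtrsim\eps\log(1/\eps)$; so the stated $38\eps$ is not recovered uniformly in $M,\eps$, whereas the paper's events $W_1,W_2$ have probabilities $\eps$ and $18\eps$ for every $M$ because both are proportional to $(\rom{1})\eps/M\le\eps$ (the paper itself only implicitly needs $M\ge\eps$ when it writes $M/\eps+\sqrt{M/\eps}\le 2M/\eps$). In every place the lemma is invoked $M$ is large, so your constants would suffice downstream, but to obtain the lemma verbatim you would either need an argument whose additive constants vanish relative to $M$ (as in the paper's centering-plus-variance route) or add the mild hypothesis on $M$ under which your bounds close.
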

First, by Lemma~\ref{lem:smooth} we may assume henceforth that the protocol $\Pi$ is smooth and has error at most $\eps'+\eps$.
Thus, once we prove Lemma~\ref{lem:advantage_same} it will follow that the error of $\Pi'$
is at most $\eps'+39\eps$. The rest of this section is therefore devoted to the proof of Lemma~\ref{lem:advantage_same}.

By Fact~\ref{fact:mutual_div_KL}
and the definition of KL-divergence
\begin{align}
M
\geq \MI_{\mu}[X,Y; \Pi]
&= \Expect{(x,y)\sim \mu}{\DKL{\Pi|_{X=x,Y=y}}{\Pi}} \notag\\\label{eq:1}
&= \Expect{(x,y)\sim\mu}{\sum\limits_{a,b} \Prob{}{\Pi(x,y) = (a,b)} \log\left(\frac{\Prob{}{\Pi(x,y) = (a,b)}}{\Prob{}{\Pi = (a,b)}}\right)}.
\end{align}
For $a,b\in\power{m}$, we define
$p_{A,\Pi,i}(a,b) = \cProb{(X,Y)\sim\mu}{A_{<i} = a_{<i}, B_{<i} = b_{<i}}{ A_i = a_i}$, and similarly
we define for Bob $p_{B,\Pi,i}(a,b) = \cProb{(X,Y)\sim\mu}{A_{\leq i} = a_{\leq i}, B_{<i} = b_{<i}}{ B_i = b_i}$.
Also, let
\[
p^x_{A,\Pi,i}(a,b) = \cProb{(X,Y)\sim\mu}{A_{<i} = a_{<i}, B_{<i} = b_{<i}, X = x}{ A_i = a_i}
\]
and
\[
p^{y}_{B,\Pi,i}(a,b) = \cProb{(X,Y)\sim\mu}{A_{\leq i} = a_{\leq i}, B_{<i} = b_{<i}, Y = y}{ B_i = b_i}.
\]
We remark that $p^x_{A,\Pi,i}(a,b), p_{A,\Pi,i}(a,b)$ depend only on the $i$-prefix of $a$ and the $i-1$ prefix of
$b$, and $p^{y}_{B,\Pi,i}(a,b), p_{B,\Pi,i}(a,b)$ depend on the $i$-prefixes of both $a$ and $b$. Thus, abusing notations,
we sometimes plug in strings of length $i$ into $p_{B,\Pi,i}(a,b)$ and so on.

\skipi
With these notations, we have
\[
\Prob{}{\Pi = (a,b)} = \prod\limits_{i=1}^{m} p_{A,\Pi,i}(a,b)\prod\limits_{i=1}^{m} p_{B,\Pi,i}(a,b),
\]
and for every fixed $x,y$ it holds that $\Prob{}{\Pi(x,y) = (a,b)} = \prod\limits_{i=1}^{m} p^x_{A,\Pi,i}(a,b)\prod\limits_{i=1}^{m} p^{y}_{B,\Pi,i}(a,b)$.
Thus, ~\eqref{eq:1} gives us that
\begin{equation}\label{eq:0}
\sum\limits_{i=1}^{m} \Expect{(x,y)\sim \mu}{\sum\limits_{a,b}
{\Prob{}{\Pi(x,y) = (a,b)}
\left(\log\left(\frac{p^x_{A,\Pi,i}(a,b)}{p_{A,\Pi,i}(a,b)}\right)
+\log\left(\frac{p^y_{B,\Pi,i}(a,b)}{p_{B,\Pi,i}(a,b)}\right)\right)}}
\leq M.
\end{equation}
We consider the two terms on the left hand side separately, i.e. define
\begin{equation}\label{eq:2}
(\rom{1})
=\sum\limits_{i=1}^{m} \Expect{(x,y)\sim \mu}{\sum\limits_{a,b}
{\Prob{}{\Pi(x,y) = (a,b)} \log\left(\frac{p^x_{A,\Pi,i}(a,b)}{p_{A,\Pi,i}(a,b)}\right)}},
\end{equation}
\begin{equation}\label{eq:3}
(\rom{2})
=\sum\limits_{i=1}^{m} \Expect{(x,y)\sim \mu}{\sum\limits_{a,b}
{\Prob{}{\Pi(x,y) = (a,b)} \log\left(\frac{p^y_{B,\Pi,i}(a,b)}{p_{B,\Pi,i}(a,b)}\right)}}.
\end{equation}
We now take a moment to reinterpret these two quantities.
Define $Z_{x,a,b,A,i}, Z_{y,a,b,B,i}\colon \{0,1\}\to \mathbb{R}$ as
\[
Z_{x,a,b,A,i}(c) = \log\left(\frac{p^x_{A,\Pi,i}(a_{<i},c , b_{\leq i})}{p_{A,\Pi,i}(a_{<i}, c, b_{\leq i})}\right),
\qquad
Z_{y,a,b,B,i}(c) = \log\left(\frac{p^y_{A,\Pi,i}(a_{\leq i},b_{<i}, c)}{p_{B,\Pi,i}(a_{\leq i},b_{<i}, c)}\right).
\]
Here, $c$ is to be thought of as the next message of the respective player, conditioned on their input and the transcript so far.
Note that the distribution of $Z_{x,a,b,A,i}$, only depends on the $(i-1)$ prefix of $a,b$ and the distribution of $Z_{x,a,b,B,i}$
only depends on the $i$ prefix of $a$ and $(i-1)$-prefix of $b$.
Let $E_{x,a,b,A,i}, E_{y,a,b,B,i}$ be
their expectations, respectively, i.e.
\[
E_{x,a,b,A,i} =
\sum\limits_{c\in\power{}}
{
p^x_{A,\Pi,i}(a_{<i}, c,b_{\leq i})
Z_{x,a,b,A,i}(c)},
\qquad
E_{x,a,b,B,i} =
\sum\limits_{c\in\power{}}
{
p^y_{B,\Pi,i}(a_{\leq i}, b_{<i}, c)
Z_{x,a,b,B,i}(c)}.
\]
With these notations, we note that~\eqref{eq:2} and~\eqref{eq:3} translate to
\begin{equation}\label{eq:4}
(\rom{1}) =
\sum\limits_{i=1}^{m}
\Expect{(x,y)\sim \mu}{\Expect{\substack{a_{<i}\sim A_{<i}(x,y)\\ b_{<i}\sim B_{<i}(x,y)}}
{E_{x,a,b,A,i}}},
\qquad
(\rom{2}) =
\sum\limits_{i=1}^{m}
\Expect{(x,y)\sim \mu}{\Expect{\substack{a_{\leq i}\sim A_{\leq i}(x,y)\\ b_{<i}\sim B_{<i}(x,y)}}
{E_{y,a,b,B,i}}}.
\end{equation}

\paragraph{Analyzing the probability to enter ``strike mode''.}
With the notations we have set, we have that
\[
\log(S_{A,i}(a,b,x)) = \sum\limits_{j< i} Z_{x,a,b,A,i}(a_i),
\]
and similarly for Bob, and so probability that one of the players in $\Pi'$ enters ``strike mode'' is
\[
\Prob{\substack{(x,y)\sim\mu\\ a\sim A,b\sim B}}{
\exists i\in [m]\text{ such that }\card{\sum\limits_{j < i} Z_{x,a,b,A,i}(a_i)}\geq\frac{2M}{\eps}
\text{ or }\card{\sum\limits_{j < i} Z_{y,a,b,B,i}(b_i)}\geq \frac{2M}{\eps}}.
\]
The rest of the proof is dedicated to upper bounding the probability that
$\card{\sum\limits_{j\leq i} Z_{x,a,b,A,i}(a_i)} \geq \frac{2M}{\eps}$,
and the probability that $\card{\sum\limits_{j< i} Z_{x,a,b,B,i}(b_i)}\geq \frac{2M}{\eps}$,
each by $19\eps$. Lemma~\ref{lem:advantage_same} thus follows from the union bound. We focus on
upper bounding the probability for Alice, and the argument for Bob is analogous.

\begin{claim}\label{claim:pinsker_conclusion}
For each $x,y,a,b$ and $i$, we have that $E_{x,a,b,A,i},E_{x,a,b,B,i}$ are non-negative and furthermore
\begin{align*}
&E_{x,a,b,A,i}\geq 2(
p^x_{A,\Pi,i}(a_{<i}, 1,b_{\leq i})
-p_{A,\Pi,i}(a_{<i}, 1,b_{\leq i}))^2,\\
&E_{y,a,b,B,i}\geq
2(
p^y_{B,\Pi,i}(a_{\leq i},b_{< i}, 1)
-
p_{B,\Pi,i}(a_{\leq i},b_{< i}, 1))^2.
\end{align*}
\end{claim}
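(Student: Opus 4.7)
The plan is to recognize that $E_{x,a,b,A,i}$ (and, symmetrically, $E_{y,a,b,B,i}$) is literally a KL-divergence between two Bernoulli distributions, and then invoke Fact~\ref{fact:KL_bits}. Writing $p = p^x_{A,\Pi,i}(a_{<i},1,b_{\leq i})$ and $q = p_{A,\Pi,i}(a_{<i},1,b_{\leq i})$, the definition of $E_{x,a,b,A,i}$ unpacks as
\[
E_{x,a,b,A,i}
= p\log\frac{p}{q} + (1-p)\log\frac{1-p}{1-q}
= \DKL{\Ber(p)}{\Ber(q)},
\]
and analogously $E_{y,a,b,B,i} = \DKL{\Ber(p')}{\Ber(q')}$ for the corresponding Bernoulli parameters for Bob. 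Non-negativity is then immediate from Gibbs' inequality (equivalently, the first inequality of Fact~\ref{fact:KL_bits} with $p=q$, or just the non-negativity of relative entropy).

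For the quantitative lower bound, I would first invoke the reduction in Section~\ref{sec:fixing_leak} which (via Lemma~\ref{lem:smooth}) lets us assume that $\Pi$ is smooth. Smoothness of $\Pi$ says that for every transcript prefix and every pair of inputs $(x,y)$, the conditional probability that the next bit is $1$ lies in $[1/3,2/3]$. Since Alice's next message distribution does not depend on $y$, this says exactly that $p = p^x_{A,\Pi,i}(a_{<i},1,b_{\leq i}) \in [1/3,2/3]$. I would then observe that $q = p_{A,\Pi,i}(a_{<i},1,b_{\leq i})$ is obtained by averaging $p^x_{A,\Pi,i}(a_{<i},1,b_{\leq i})$ over $x$ sampled from the appropriate conditional distribution of $X$ given the transcript, so $q \in [1/3,2/3]$ as well. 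Hence by the left inequality of Fact~\ref{fact:KL_bits},
\[
E_{x,a,b,A,i} = \DKL{\Ber(p)}{\Ber(q)} \geq 2(p-q)^2,
\]
which is exactly the claimed bound. The argument for $E_{y,a,b,B,i}$ is identical, using that smoothness controls Bob's conditional next-message distribution and that his message distribution depends only on $y$ and the transcript so far.

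There is no real obstacle here: everything reduces to a single application of Fact~\ref{fact:KL_bits} once we check that both Bernoulli parameters lie in $[1/3,2/3]$. The only subtlety worth stating carefully is that the averaged parameter $q$ inherits the smoothness range from the pointwise parameter $p^x$ by a convexity/averaging remark, so the hypothesis $1/3 \leq q \leq 2/3$ of Fact~\ref{fact:KL_bits} is satisfied on every transcript prefix.
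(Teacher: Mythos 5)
Your proposal is correct and matches the paper's own argument: the paper likewise identifies $E_{x,a,b,A,i}$ as the KL-divergence between Alice's next-bit distribution conditioned on $(x, a_{<i}, b_{<i})$ and conditioned on the transcript alone, gets non-negativity from there, and then applies Fact~\ref{fact:KL_bits} using the standing smoothness assumption (imported via Lemma~\ref{lem:smooth} at the start of the section). Your extra remark that the averaged parameter $q$ inherits the $[1/3,2/3]$ range by convexity is a detail the paper leaves implicit, but it is the right justification for the hypothesis of Fact~\ref{fact:KL_bits}.
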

\begin{proof}
  We show the argument for $E_{x,a,b,A,i}$, and the argument for $E_{x,a,b,B,i}$ is identical. Note that
  \[
  E_{x,a,b,A,i} = \DKL{A_i|_{X=x, A_{<i} = a_{<i}, B_{<i} = b_{<i}}}{A_i|_{A_{<i} = a_{<i}, B_{<i} = b_{<i}}},
  \]
  from which the non-negativity is clear.  Since $\Pi$ is a smooth protocol, we may use Fact~\ref{fact:KL_bits} and conclude that
  \[
  E_{x,a,b,A,i}\geq 2(p^x_{A,\Pi,i}(a_{<i}, 1,b_{\leq i})
-p_{A,\Pi,i}(a_{<i}, 1,b_{\leq i}))^2.\qedhere
  \]
\end{proof}

We note that Claim~\ref{claim:pinsker_conclusion} combined with~\eqref{eq:0} and~\eqref{eq:4} immediately implies:
\begin{corollary}\label{corr:immediate}
We have that
\[
  M\geq (\rom{1})\geq
  2\sum\limits_{i=1}^{m}
  \Expect{(x,y)\sim \mu}{
  \Expect{\substack{a_{<i}\sim A_{<i}(x,y)\\ b_{<i}\sim B_{<i}(x,y)}}
  {(p^x_{A,\Pi,i}(a_{<i}, 1,b_{\leq i})
-p_{A,\Pi,i}(a_{<i}, 1,b_{\leq i}))^2}}.
\]
\end{corollary}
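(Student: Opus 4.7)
The plan is to assemble the corollary directly from the ingredients already on the page: equation~\eqref{eq:0}, the decomposition~\eqref{eq:4} of $(\rom{1})$ and $(\rom{2})$, and the two bounds of Claim~\ref{claim:pinsker_conclusion}. Everything needed is essentially in place; the only small move is to split $M \ge (\rom{1}) + (\rom{2})$ into two useful halves by separately exploiting the non-negativity of $E_{y,a,b,B,i}$ on one side and the quadratic lower bound on $E_{x,a,b,A,i}$ on the other.

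First I would establish the upper bound $M \ge (\rom{1})$. By the expression~\eqref{eq:4} for $(\rom{2})$, this quantity is an expectation (over $(x,y)\sim\mu$ and over prefixes $a_{\leq i}, b_{<i}$ drawn according to $\Pi$ with inputs $(x,y)$) of the terms $E_{y,a,b,B,i}$. Claim~\ref{claim:pinsker_conclusion} tells us each such term is non-negative, so $(\rom{2})\ge 0$. Combined with~\eqref{eq:0}, which says $(\rom{1})+(\rom{2})\le M$, this immediately yields $(\rom{1})\le M$.

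For the lower bound on $(\rom{1})$, I would write $(\rom{1})$ using~\eqref{eq:4} and replace each $E_{x,a,b,A,i}$ by the quantity $2\bigl(p^x_{A,\Pi,i}(a_{<i},1,b_{\leq i}) - p_{A,\Pi,i}(a_{<i},1,b_{\leq i})\bigr)^2$ given by the first inequality in Claim~\ref{claim:pinsker_conclusion}. Pulling the factor of $2$ outside the sum and the expectations yields exactly the right-hand side of the corollary. Note that we do not need to renormalize anything, because $(a_{<i},b_{<i})$ are already the prefixes drawn from $\Pi$ when run on $(x,y)$, which is precisely the averaging specified in~\eqref{eq:4}.

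There is no genuine obstacle here: the statement is a direct recombination of the chain-rule identity~\eqref{eq:0}, the rewriting~\eqref{eq:4}, and the Pinsker-type lower bound of Claim~\ref{claim:pinsker_conclusion}. The only thing to watch is the bookkeeping of indices, i.e.\ that in the Alice terms the relevant prefixes are $a_{<i}$ and $b_{<i}$ (so that $b_{\leq i}$ appearing in the square matches the convention used in the definition of $p^x_{A,\Pi,i}$ and $p_{A,\Pi,i}$, both of which depend only on the $i$-prefix of $a$ and the $(i-1)$-prefix of $b$), and that the expectation over Bob's $i$-th bit is absorbed by the smoothness-driven Pinsker estimate rather than left as a separate averaging variable.
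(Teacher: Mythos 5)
Your proposal is correct and matches the paper's argument, which likewise obtains the corollary ``immediately'' by combining the chain-rule identity~\eqref{eq:0}, the rewriting~\eqref{eq:4}, and Claim~\ref{claim:pinsker_conclusion}: non-negativity of the $E_{y,a,b,B,i}$ terms gives $(\rom{2})\geq 0$ and hence $M\geq(\rom{1})$, while the Pinsker-type bound on $E_{x,a,b,A,i}$ gives the quadratic lower bound on $(\rom{1})$. Your handling of the index conventions (the fact that $p^x_{A,\Pi,i}$ and $p_{A,\Pi,i}$ depend only on the $i$-prefix of $a$ and the $(i-1)$-prefix of $b$) is also consistent with the paper's stated notational abuse.
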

\skipi

Consider a random choice of $(x,y)\sim\mu$, $a\sim A(x,y)$ and $b\sim B(x,y)$.
Note that the sequence $Q^A_i = Z_{x,a,b,A,i}(a_i) - E_{x,a,b,A,i}$ forms the sum-martingale
$G^A_i = \sum\limits_{j\leq i} Q^{A}_j$, and the similarly the sequence $Q^B_i = Z_{x,a,b,B,i}(b_i) - E_{x,a,b,B,i}$
forms the sum martingale $G^B_i = \sum\limits_{j\leq i} Q^{B}_j$, both with respect to the natural filtration defined by the transcript
of the protocol at each step. The following claim upper bounds the expectation of the square of these sum-martingales in the end.

\begin{claim}\label{claim:expected_sum_square}
  $\Expect{}{(G^A_m)^2}\leq 18(\rom{1})\leq 18 M$.
\end{claim}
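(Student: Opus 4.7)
The plan is to exploit the martingale difference structure of $(Q^A_j)_{j=1}^m$ together with the smoothness of $\Pi$. By construction $Q^A_j = Z_{x,a,b,A,j}(a_j) - E_{x,a,b,A,j}$, and conditional on the $\sigma$-algebra $\mathcal{F}_{j-1}$ generated by $(x,y,a_{<j},b_{<j})$, the bit $a_j$ is Bernoulli with parameter $p := p^{x}_{A,\Pi,j}(a_{<j},1,b_{<j})$, while $\Expect{}{Z_{x,a,b,A,j}(a_j)\,|\,\mathcal{F}_{j-1}} = E_{x,a,b,A,j} = \DKL{\Ber(p)}{\Ber(q)}$ with $q := p_{A,\Pi,j}(a_{<j},1,b_{<j})$. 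Hence $(Q^A_j)$ is a martingale difference sequence with respect to $(\mathcal{F}_j)$, and by orthogonality of martingale differences,
\[
\Expect{}{(G^A_m)^2} \;=\; \sum_{j=1}^{m}\Expect{}{(Q^A_j)^2}.
\]

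The heart of the argument is the pointwise bound asserting that for every realization of $\mathcal{F}_{j-1}$,
\[
\Expect{}{(Q^A_j)^2 \,\big|\, \mathcal{F}_{j-1}} \;\leq\; 18\, E_{x,a,b,A,j}.
\]
The left side is the conditional variance of $Z_{x,a,b,A,j}(a_j)$, which by the standard Bernoulli variance identity equals $p(1-p)\bigl(\log(p/q)-\log((1-p)/(1-q))\bigr)^2$. Since $\Pi$ is smooth (Definition~\ref{def:smooth}), we have $q\in[1/3,2/3]$, and Fact~\ref{fact:KL_bits} lower bounds the right side by $2(p-q)^2$. The claim therefore reduces to checking
\[
p(1-p)\bigl(\log(p/q)-\log((1-p)/(1-q))\bigr)^2 \;\leq\; 36(p-q)^2
\]
uniformly over $p\in[0,1]$ and $q\in[1/3,2/3]$, which is an elementary calculus exercise: a Taylor expansion at $p=q$ handles the interior regime, and at the endpoints $p\in\{0,1\}$ the left side vanishes outright due to the $p(1-p)$ factor, so the boundary is trivial.

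With the pointwise inequality in hand, taking total expectation, summing over $j$, and using the formula for $(\rom{1})$ displayed in~\eqref{eq:4} together with the bound $(\rom{1}) \leq M$ from~\eqref{eq:0} yields
\[
\Expect{}{(G^A_m)^2} \;\leq\; 18\sum_{j=1}^{m}\Expect{}{E_{x,a,b,A,j}} \;=\; 18\,(\rom{1}) \;\leq\; 18M,
\]
as claimed.

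The main obstacle is obtaining a clean universal constant in the pointwise variance-versus-divergence bound. Without smoothness of $\Pi$ the ratio could diverge, since $|\log(p/q)|$ is unbounded as $q$ approaches $\{0,1\}$; smoothness pinning $q$ inside $[1/3,2/3]$ is precisely what guarantees a constant ratio, and this is the place where the reduction to smooth protocols via Lemma~\ref{lem:smooth} at the outset of Section~\ref{sec:fixing_leak} is indispensable.
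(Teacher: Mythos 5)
Your proof follows essentially the same skeleton as the paper's: martingale orthogonality reduces $\Expect{}{(G^A_m)^2}$ to $\sum_j \Expect{}{(Q^A_j)^2}$; a per-step bound of the form $\Expect{}{(Q^A_j)^2 \mid \mathcal{F}_{j-1}} \leq 18\,E_{x,a,b,A,j}$ is established; and Fact~\ref{fact:KL_bits} ties $(p-q)^2$ to the divergence. The one place you diverge from the paper is the per-step bound itself, and that is where there is a gap. You reduce it, via the Bernoulli variance identity, to the analytic inequality $p(1-p)\bigl(\log(p/q)-\log((1-p)/(1-q))\bigr)^2 \leq 36(p-q)^2$ claimed uniformly over $p\in[0,1]$ and $q\in[1/3,2/3]$, and then declare it an ``elementary calculus exercise: a Taylor expansion at $p=q$ plus the endpoints.'' That is not a proof. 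The Taylor expansion only controls a neighborhood of $p=q$, the endpoints only control $p\in\{0,1\}$, and nothing is said about the region in between. (The inequality \emph{is} true with plenty of slack --- the worst case is at $p=q$, where the ratio tends to $1/(q(1-q)\ln^2 2)\leq 9/(2\ln^2 2)<10$ --- but establishing it globally over $p$ requires genuine work, e.g.\ tracking the mean-value point of the logit function as $p$ ranges away from $q$.)

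The paper sidesteps all of this by noting that smoothness constrains $p$, not just $q$, to $[1/3,2/3]$: Alice's next-message distribution conditioned on $x$ and the transcript-so-far is exactly the smooth Bernoulli of Definition~\ref{def:smooth}, and its mixture $p_{A,\Pi,i}$ over $x$ is likewise in $[1/3,2/3]$. This puts the ratio $p^x_{A,\Pi,i}/p_{A,\Pi,i}$ in $[1/2,2]$, so the elementary bound $|\log_2 z|\leq 2|z-1|$ for $z\geq 1/2$ gives $\Expect{}{Z_{x,a,b,A,i}(a_i)^2}\leq 36(p-q)^2$ in two lines; since $\mathrm{Var}(Z)\leq\Expect{}{Z^2}$, this dominates your variance as well. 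You can repair your proof either by supplying the missing global analysis of the calculus inequality, or --- more cheaply and in the spirit of the paper --- by invoking the smoothness constraint on $p$ and routing through the second moment as just described.
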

\begin{proof}
Using the martingale property we have that
\[
\Expect{}{(G^A_m)^2}
=\sum\limits_{i=1}^m \Expect{}{(Q^{A}_i)^2}
\leq \sum\limits_{i=1}^m \Expect{}{Z_{x,a,b,A,i}(a_i)^2}.
\]
Note that fixing $x, a_{<i}, b_{<i}$ we have that
\begin{align*}
\Expect{a_i}{Z_{x,a,b,A,i}(a_i)^2}
&=\sum\limits_{c\in\power{}} p^x_{A,\Pi,i}(a_{<i}, c, b_{<i}) \log^2\left(\frac{p^x_{A,\Pi,i}(a_{<i}, c, b_{<i})}{p_{A,\Pi,i}(a_{<i}, c, b_{<i})}\right).
\end{align*}
Using the smoothness of $\Pi$, we have that $\frac{p^x_{A,\Pi,i}(a_{<i}, c, b_{<i})}{p_{A,\Pi,i}(a_{<i}, c, b_{<i})}$ is at least
$\half$ for all $c\in\power{}$, and since $\card{\log(z)}\leq 2\card{z-1}$ for all $z\geq 1/2$ we get that
\[
\Expect{a_i}{Z_{x,a,b,A,i}(a_i)^2}
\leq 4\sum\limits_{c\in\power{}} p^A_{a_{<i},b_{<i},i}(x,c)
\left(
\frac{p^x_{A,\Pi,i}(a_{<i}, c, b_{<i})-p_{A,\Pi,i}(a_{<i}, c, b_{<i})}{p_{A,\Pi,i}(a_{<i}, c, b_{<i})}\right)^2.
\]
By the smoothness of $\Pi$ we have $p_{A,\Pi,i}(a_{<i}, c, b_{<i})\geq 1/3$ for all $c\in\power{}$, so the above
inequality implies that
$\Expect{a_i}{Z_{x,a,b,A,i}(a_i)^2}\leq 36\left(pp^x_{A,\Pi,i}(a_{<i}, 1, b_{<i})-p_{A,\Pi,i}(a_{<i}, 1, b_{<i})\right)^2$.
The claim now follows by combining this with Corollary~\ref{corr:immediate}.
\end{proof}

We are now ready to prove Lemma~\ref{lem:advantage_same}.
\begin{proof}[Proof of Lemma~\ref{lem:advantage_same}]
Consider a random choice of $(x,y)\sim\mu$, $a\sim A(x,y)$ and $b\sim B(x,y)$. Let
$W_1$ be the event that $\sum\limits_{i=1}^{m} E_{x,a,b,A,i}\geq M/\eps$ and let
$W_2$ be the event that $\card{G^A_i}\geq \sqrt{M/\eps}$ for some $i$. By Markov's inequality we have that
\[
\Prob{}{W_1}\leq \frac{(\rom{1})}{M/\eps}\leq \eps,
\]
where we used Corollary~\ref{corr:immediate}. For $W_2$, using Fact~\ref{fact:martingale} and Claim~\ref{claim:expected_sum_square} gives that
\[
\Prob{}{W_2}\leq \frac{\Expect{}{(G^A_m)^2}}{M/\eps}\leq 18\eps.
\]
Note that $\card{\sum\limits_{j< k} Z_{x,a,b,A,i}(a_i)} = \card{G^A_{k-1}} + \sum\limits_{i=1}^{m} E_{x,a,b,A,i}$, so if none of $W_1,W_2$
hold, then $\card{\sum\limits_{j< i} Z_{x,a,b,A,i}(a_i)}\leq M/\eps + \sqrt{M/\eps} \leq 2M/\eps$ and Alice never enters ``strike mode''
in the duration on the execution of $\Pi'$. Therefore, the probability Alice enters into ``strike mode'' on an execution of $\Pi'$
is at most $\Prob{}{W_1} + \Prob{}{W_2}\leq 19\eps$. The same goes for Bob, and Lemma~\ref{lem:advantage_same} follows from the union bound.
\end{proof}

\subsubsection{The protocol $\Pi'$ has universal external information at most $2M/\eps + 1$}
Let $\eta_A = \eta_{A,m+1}$ and $\eta_B = \eta_{B,m+1}$.
It is easy to see that the $\eta(a,b) = \eta_{A}(a,b)\eta_{B}(a,b)$ is a distribution.
We show that it exhibits that $\Pi'$ has low universal external information.

Suppose towards contradiction that this is not the case. Then there are inputs $x,y$ and possible transcripts $a,b$
and a step $i$ such that~\eqref{eq:unviversal_small} fails -- suppose without loss of generality that
$\frac{P_{A,\Pi'}^x(a,b)}{\eta_A(a,b)} > 2^{2M/\eps + 1}$.

We claim that on an execution of $\Pi'$ on $x,y$ that
yields the transcript $(a,b)$, it must be the case that Alice entered ``strike mode''. Otherwise,
we have that:
\[
\frac{P_{A,\Pi'}^x(a,b)}{\eta_A(a,b)}
= \frac{P_{A,\Pi,m}^x(a,b)}{\eta_{A,m}(a,b)}\cdot
\frac{\cProb{}{a_{<m}, b_{<m}, x}{A_m = a_m\text{ in $\Pi$}}}{\cProb{}{a_{<m}, b_{<m}}{A_m = a_m\text{ in $\Pi$}}}.
\]
Since Alice did not enter ``strike mode'', the first fraction is between $2^{-2M/\eps}$ and $2^{2M/\eps}$, and since
$\Pi$ is smooth, the second fraction is between $1/2$ and $2$. It follows that~\eqref{eq:unviversal_small} holds
for $2M/\eps + 1$ for Alice, in contradiction.

Let $i$ be the step in the protocol in which Alice decided to enter ``strike mode''.
By the minimality of $i$ and smoothness of $\Pi$ we conclude that
\[
S_{A,i}(a_{<i},b_{<i},x) = S_{A,i-1}(a_{<i-1},b_{<i-1},x) \frac{\cProb{}{a_{<i-1}, b_{<i-1}, x}{A_{i-1} = a_{i-1}\text{ in $\Pi$}}}{\cProb{}{a_{<i-1}, b_{<i-1}}{A_{i-1} = a_{i-1}\text{ in $\Pi$}}}
\leq 2^{2M/\eps}\cdot 2.
\]
Now, note that by the behaviour of Alice in strike mode it follows that
  \[
  P_{A,\Pi'}^x(a,b) = P_{A,\Pi,i}^x(a,b)\cdot \prod\limits_{k=i}^{m}\cProb{(X,Y)\sim \mu}{a_{<k}, b_{<k}}{A_{k} = a_{k}\text{ in $\Pi$}},
  \]
  and clearly
  \[
  \eta_{A,m+1}(a,b) = \eta_{A,i}(a,b)\prod\limits_{k=i}^{m} \cProb{(X,Y)\sim \mu}{a_{<k}, b_{<k}}{A_{k} = a_{k}\text{ in $\Pi$}},
  \]
  so
  \[
  \frac{P_{A,\Pi'}^x(a,b)}{\eta_{A,m}(a,b)}
  =S_{A,i}(a_{<i},b_{<i},x)
  \leq 2^{2M/\eps+1},
  \]
  which is a contradiction.\qed
\subsection{Step (b): strong relative discrepancy implies high universal external information}
Next, we prove the following lemma, asserting that a protocol with low universal external information cannot compute functions
that have low relative-discrepancy.
\begin{lemma}\label{lem:step2}
  Let $\delta, \eps,\eps' > 0$ and $M\in\mathbb{N}$.
  Let $\mu$ be a distribution over $\power{n}\times\power{n}$, and
  suppose that $f\colon\power{n}\times\power{n}\to\power{}$ is such that $(f,\mu)$ has $(\eps,\delta)$ relative-discrepancy.
  If $\Pi$ is a protocol for $(f,\mu)$ whose universal external information at most
  $M$, then for $(X,Y)\sim \mu$,
  \[
  {\sf SD}(\Pi(X,Y)|_{f(X,Y) = 0}, \Pi(X,Y)|_{f(X,Y) = 1})\leq 20\left(\eps+\eps'+\frac{2^{4M}}{\eps'^2}\delta\right).
  \]
\end{lemma}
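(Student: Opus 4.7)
The plan is to reduce the statistical distance to a bilinear form indexed by transcripts, and then control that form via a dyadic level-set decomposition that converts the universal external information guarantee into an approximate rectangle partition of the input space, on which the relative discrepancy property can be invoked rectangle-by-rectangle.

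Let $\eta = \eta_A\eta_B$ witness the universal external information of $\Pi$, and let $\rho$ be the distribution witnessing the $(\eps,\delta)$ relative-discrepancy of $(f,\mu)$. Define the likelihood ratios $u_A(x,\pi) := P^x_{A,\Pi}(\pi)/\eta_A(\pi)$ and $u_B(y,\pi) := P^y_{B,\Pi}(\pi)/\eta_B(\pi)$, both in $[2^{-M}, 2^M]$ by Definition~\ref{def:low_universal}. Applying the relative-discrepancy property to $R = \power{n}\times\power{n}$ yields $\gamma := \mu(f^{-1}(0)) \in [\half - \eps,\half + \eps]$. Combining this with Fact~\ref{fact:TVD} applied to $\Pi|_{f=0}$ versus $\Pi|_{f=1}$, I would reduce the bound on ${\sf SD}$ (up to the $1/(\gamma(1-\gamma))$ normalization) to controlling
\[
\sum_\pi \eta(\pi)\, \bigl|E_\mu[(\mathbf{1}_{f(X,Y)=0} - \gamma)\,u_A(X,\pi)\,u_B(Y,\pi)]\bigr|.
\]

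For each fixed $\pi$, I would discretize both $u_A(\cdot,\pi)$ and $u_B(\cdot,\pi)$ into dyadic level sets with ratio $1+\Theta(\eps')$, inducing a partition of $\power{n}\times\power{n}$ into $L^2 = O((M/\eps')^2)$ combinatorial rectangles $R^\pi_{i,j}$ on each of which $u_A u_B$ is constant up to multiplicative factor $1+O(\eps')$, with base value $c_{i,j}\in[2^{-2M},2^{2M}]$. The inner bilinear form then decomposes as $\sum_{i,j} c_{i,j}\,(\mu(R^\pi_{i,j}\cap f^{-1}(0)) - \gamma \mu(R^\pi_{i,j}))$ plus an $O(\eps')$ discretization error controlled by $c_{i,j}\mu(R^\pi_{i,j})$. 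On each rectangle, apply the relative discrepancy property: when $\rho(R)\geq\delta$ this yields $|\mu(R\cap f^{-1}(0)) - \gamma\mu(R)| \leq \eps(\mu(R)+\rho(R)) + \half|\mu(R) - \rho(R)|$; when $\rho(R) < \delta$ I use the trivial bound $\mu(R)$, exploiting that there are only $L^2$ such rectangles per transcript and $c_{i,j} \leq 2^{2M}$.

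The last step is to aggregate across $(i,j)$ and $\pi$ (weighted by $\eta(\pi)$) via the following telescoping identity: for any distribution $\nu$ over $(x,y)$,
\[
\sum_\pi \eta(\pi) \sum_{i,j} c_{i,j}\, \nu(R^\pi_{i,j})\ \leq\ \sum_{x,y} \nu(x,y)\sum_\pi P^x_{A,\Pi}(\pi) P^y_{B,\Pi}(\pi) = 1,
\]
which follows from $c_{i,j} \leq u_A(x,\pi)u_B(y,\pi)$ on $R^\pi_{i,j}$ and the product form $\eta = \eta_A\eta_B$. Applied with $\nu = \mu$ and $\nu = \rho$, this produces the $O(\eps + \eps')$ contribution from large rectangles; the small-rectangle contributions are at most $O(2^{2M} L^2 \delta) = O(2^{4M}\delta/\eps'^2)$ after absorbing the polynomial factor in $M$ into the exponential. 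The main obstacle will be the $|\mu(R) - \rho(R)|$ cross term in the large-rectangle bound---since relative discrepancy only lower-bounds $\mu(R)$ in terms of $\rho(R)$ and not the other way around, this term is not immediately controlled by $\eps$ alone. Resolving this requires invoking both telescoping identities simultaneously: their difference $\sum_\pi \eta(\pi)\sum_{i,j} c_{i,j}(\mu-\rho)(R^\pi_{i,j})$ vanishes up to $O(\eps')$ discretization error, which absorbs the otherwise problematic cross term into the error budget.
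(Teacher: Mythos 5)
Your rectangle decomposition (partitioning by dyadic levels of the likelihood ratios $u_A,u_B$ with ratio $1+\Theta(\eps')$, heavy vs.\ light according to $\rho$-mass against $\delta$, telescoping identities against $\mu$ and $\rho$) matches the paper's, and your reduction ${\sf SD}=\frac{1}{2\gamma(1-\gamma)}\sum_\pi\eta(\pi)\bigl|E_\mu[(\mathbf{1}_{f=0}-\gamma)u_Au_B]\bigr|$ is algebraically correct. However, the resolution you propose for the cross term does not close the gap.

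After applying the rectangle-level bound $\bigl|\mu(R\cap f^{-1}(0))-\gamma\mu(R)\bigr|\leq\frac12\bigl|\mu(R)-\rho(R)\bigr|+\eps(\mu(R)+\rho(R))$ and pushing the absolute value inside the $(i,j)$-sum, what you need to control is $\sum_\pi\eta(\pi)\sum_{i,j}c_{i,j}\bigl|\mu(R^\pi_{i,j})-\rho(R^\pi_{i,j})\bigr|$: a sum of absolute values. The telescoping identities only tell you that the \emph{signed} aggregate $\sum_\pi\eta(\pi)\sum_{i,j}c_{i,j}\bigl(\mu(R^\pi_{i,j})-\rho(R^\pi_{i,j})\bigr)$ is $O(\eps')$; they say nothing about the $\ell^1$-version. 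In fact, the $\ell^1$-version is, up to discretization and the light-rectangle remainder, precisely ${\sf SD}\bigl(\Pi(X,Y)_{(X,Y)\sim\mu},\,\Pi(X,Y)_{(X,Y)\sim\rho}\bigr)$, which can be $\Theta(1)$: relative discrepancy constrains the intersection of $\mu$ with the fibers of $f$ on $\rho$-heavy rectangles, but places no constraint forcing the transcript distributions induced by $\mu$ and $\rho$ to be close. So the cross term is not absorbed, and the argument stalls.

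The paper's route sidesteps the sign problem entirely. Instead of the $\ell^1$ characterization, it uses Fact~\ref{fact:TVD} to write ${\sf SD}=1-\sum_\pi\cProb{}{f=b_\pi}{\Pi=\pi}$ for a per-transcript bit $b_\pi$, and then only needs a \emph{lower} bound on each $\Prob{\mu}{\Pi=\pi,\,f=b_\pi}$. The key point is that the relative-discrepancy inequality furnishes the same lower bound for \emph{both} values of $b$:
\[
\Prob{(x,y)\sim\mu}{\Pi(x,y)=\pi,\ f(x,y)=b}\ \geq\ \Bigl(\tfrac12-\eps\Bigr)(1+\eps')^{-2}\Bigl(\rho(\pi)-\tfrac{2^{4M}}{\eps'^2}\delta\,\eta(\pi)\Bigr)\quad\text{for }b\in\{0,1\},
\]
so the choice of $b_\pi$ is irrelevant, and summing over $\pi$ makes $\rho(\pi)$ and $\eta(\pi)$ each telescope to $1$ without ever comparing $\mu$- against $\rho$-weight. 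To repair your proof you would replace the formulation $\sum_\pi\eta(\pi)|\cdots|$ with $1-\sum_\pi\min\bigl(\cProb{}{f=0}{\Pi=\pi},\cProb{}{f=1}{\Pi=\pi}\bigr)$ and then observe that your one-sided heavy-rectangle estimate already lower-bounds each term of the $\min$ by the same $b$-independent quantity; the rest of your machinery then applies.
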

\begin{proof}
  By Definition~\ref{def:low_universal}, there are functions $\eta_A(a,b),\eta_B(a,b)\geq 0$ such that
  $\eta(a,b) = \eta_A(a,b)\eta_B(a,b)$ is a distribution, and inequality~\eqref{eq:unviversal_small} holds for all $x,y,a,b$. For each
  possible transcript $\pi$, we partition
  $x,y$ into rectangles according to the ratios $\frac{P_{A}^{x}(\pi)}{\eta_{B}(\pi)}$ and $\frac{P_{B}^{y}(\pi)}{\eta_{B}(\pi)}$.
  Namely, for $-M/\eps'\leq i,j< M/\eps'$ we denote
  \[
  R_{\pi}^X[i] = \sett{x}{(1+\eps')^{i}\leq \frac{P_{A}^{x}(\pi)}{\eta_{A}(\pi)}\leq (1+\eps')^{i+1}},
  ~~~
  R_{\pi}^Y[j] = \sett{y}{(1+\eps')^{j}\leq \frac{P_{B}^{y}(\pi)}{\eta_{B}(\pi)}\leq (1+\eps')^{j+1}},
  \]
  and define $R_{\pi}[i,j] = R_{\pi}^X[i]\times R_{\pi}^Y[j]$.  We note that the number of rectangles is at most
  $(2M/\eps')^2$, and that they partition the entire domain.

  Let $\rho$ be a distribution from Definition~\ref{def:relative_discrepancy} exhibiting the fact that $(f,\mu)$ has $(\eps,\delta)$ relative-discrepancy.
  We say a rectangle is heavy if
  $\rho(R_{\pi}[i,j])\geq \delta$ and otherwise we say it is light. For future reference, note that the total $\rho$-weight on light rectangles
  is at most $(2M/\eps')^2 \delta$.

  Fix $\pi$ and let $H_{\pi} = \sett{(i,j)}{ R_{\pi}[i,j]\text{ is heavy}}$.
  Then for all $b\in\power{}$ we have
  \begin{align*}
    \Prob{(x,y)\sim \mu}{ \Pi(x,y) = \pi, f(x,y) = b}
    &=\sum\limits_{\substack{(x,y)\\ f(x,y) = b}}{\mu(x,y) P_{A}^{x}(\pi)P_{B}^{x}(\pi)}\\
    &\geq \sum\limits_{(i,j)\in H_{\pi}}\sum\limits_{\substack{(x,y)\in R_{\pi}[i,j]\\ f(x,y) = b}}{\mu(x,y) P_{A}^{x}(\pi)P_{B}^{x}(\pi)}\\
    &\geq \sum\limits_{(i,j)\in H_{\pi}}\sum\limits_{\substack{(x,y)\in R_{\pi}[i,j]\\ f(x,y) = b}}{(1+\eps')^{i+j}\mu(x,y) \eta_A(\pi)\eta_B(\pi)},
  \end{align*}
  where in the last inequality we used the definition of $R_{\pi}[i,j]$. Thus, we get that
  \[
  \Prob{(x,y)\sim \mu}{ \Pi(x,y) = \pi, f(x,y) = b} \geq \eta(\pi)\sum\limits_{(i,j)\in H_{\pi}} (1+\eps')^{i+j}\mu(R_{\pi}[i,j]\cap f^{-1}(b)).
  \]
  By the relative-discrepancy property we have that $\mu(R_{\pi}[i,j]\cap f^{-1}(b))\geq \left(\half - \eps\right)\rho(R_{\pi}[i,j])$, and so
  \begin{equation}\label{eq:5}
  \Prob{(x,y)\sim \mu}{ \Pi(x,y) = \pi, f(x,y) = b}
  \geq
  \left(\half - \eps\right)\eta(\pi)\sum\limits_{(i,j)\in H_{\pi}} (1+\eps')^{i+j}\rho(R_{\pi}[i,j]),
  \end{equation}
  and we analyze the last sum. Note that
  \begin{align}\label{eq:6}
  \sum\limits_{(i,j)\in H_{\pi}} (1+\eps')^{i+j}\rho(R_{\pi}[i,j])
  &=\sum\limits_{(i,j)\in H_{\pi}}\sum\limits_{(x,y)\in R_{\pi}[i,j]}  (1+\eps')^{i+j}\rho(x,y)\notag\\
  &\geq\sum\limits_{(i,j)\in H_{\pi}} \sum\limits_{(x,y)\in R_{\pi}[i,j]}(1+\eps')^{-2}\frac{P_A^x(\pi)P_B^y(\pi)}{\eta_A(\pi)\eta_B(\pi)}\rho(x,y)\notag\\
  &=\sum\limits_{(i,j)\in H_{\pi}} \sum\limits_{(x,y)\in R_{\pi}[i,j]}(1+\eps')^{-2}\frac{1}{\eta(\pi)}\rho(x,y)P_A^x(\pi)P_B^y(\pi)\notag\\
  &=\frac{(1+\eps')^{-2}}{\eta(\pi)} \Prob{(X,Y)\sim \rho}{\Pi(X,Y) = \pi, (X,Y) \text{ in a heavy rectangle of }\pi}\notag\\
  &=\frac{(1+\eps')^{-2}}{\eta(\pi)} \left(\rho(\pi) - \Prob{(X,Y)\sim \rho}{\Pi(X,Y) = \pi, (X,Y) \text{ in light rectangle}}\right).
  \end{align}
  We now upper bound the last probability. By conditioning we have that
  \[
    \Prob{(X,Y)\sim \rho}{\Pi(X,Y) = \pi, (X,Y) \text{ in a light rectangle of }\pi}
    =\sum\limits_{(i,j)\not\in H_\pi}\sum\limits_{(x,y)\in R_{\pi}[i,j]} \rho(x,y) \rho|_{x,y}(\pi),
  \]
  and by our earlier notations we have
  $\rho|_{x,y}(\pi) = P_A^x(\pi) P_B^y(\pi)\leq 2^{2M} \eta_A(\pi)\eta_B(\pi) = 2^{2M} \eta(\pi)$, so
  \[
  \Prob{(X,Y)\sim \rho}{\Pi(X,Y) = \pi, (X,Y) \text{ in a light rectangle of }\pi}
  \leq 2^{2M} \eta(\pi) \sum\limits_{(i,j)\not\in H_\pi}\sum\limits_{(x,y)\in R_{\pi}[i,j]} \rho(x,y),
  \]
  which is at most $2^{2M} \eta(\pi)\cdot (2M/\eps')^2 \delta \leq \frac{2^{4M}}{\eps'^2}\delta\eta(\pi)$.
  Plugging this into~\eqref{eq:6}, and then~\eqref{eq:6} into~\eqref{eq:5} yields that
  \begin{equation}\label{eq:7}
  \Prob{(x,y)\sim \mu}{ \Pi(x,y) = \pi, f(x,y) = b}
  \geq \left(\half - \eps\right)(1+\eps')^{-2}\left(\rho(\pi) - \frac{2^{4M}}{\eps'^2} \delta  \eta(\pi)\right).
  \end{equation}

  We note that summing this up over all $\pi$, we get that
  \begin{equation}\label{eq:9}
  \Prob{(x,y)\sim \mu}{f(x,y) = b}\geq \left(\half - \eps\right)(1+\eps')^{-2}(1-\frac{2^{4M}}{\eps'^2}\delta)
  \geq \half - \eps -\eps' - \frac{2^{4M}}{\eps'^2}\delta.
  \end{equation}
  We can now bound the statistical distance between $\Pi(X,Y)|_{f(X,Y) = 1}$ and
  $\Pi(X,Y)|_{f(X,Y) = 0}$ where $(X,Y)\sim \mu$. By Fact~\ref{fact:TVD}, there is $A\subseteq {\sf Supp}(\Pi)$ such that
  this statistical distance is equal to
  \[
  1
  -
  \left(\sum\limits_{\pi\in A}
  \cProb{}{f(X,Y) = 1}{\Pi(X,Y) = \pi}
  +\sum\limits_{\pi\not\in A}
  \cProb{}{f(X,Y) = 0}{\Pi(X,Y) = \pi}\right).
  \]
  Denote $p =\Prob{(X,Y)\sim \mu}{f(X,Y) = 1}$, and note that by~\eqref{eq:9} we have that
  $\card{p - \half}\leq \eps + \eps' + \frac{2^{4M}}{\eps'^2}\delta$. We get from the above that the statistical distance is
  equal to
  \begin{align}\label{eq:8}
  &2\left(
  \half -
  \frac{1}{2p}\sum\limits_{\pi\in A}
  \Prob{}{f(X,Y) = 1, \Pi(X,Y) = \pi}
  -\frac{1}{2(1-p)}\sum\limits_{\pi\not\in A}
  \Prob{}{f(X,Y) = 0, \Pi(X,Y) = \pi}
  \right) \notag\\
  \leq
  &2\left(
  \half -
  \sum\limits_{\pi\in A}
  \Prob{}{f(X,Y) = 1, \Pi(X,Y) = \pi}
  -\sum\limits_{\pi\not\in A}
  \Prob{}{f(X,Y) = 0, \Pi(X,Y) = \pi}
  \right)\notag\\
  &+8(\eps + \eps' + \frac{2^{4M}}{\eps'^2}\delta),
  \end{align}
  and it is enough to lower bound the two sums. Define $b_{\pi} = 1$ if $\pi\in A$, and $b_{\pi} = 0$ if $\pi\not\in A$.
  Then together the two sums can be written as
  \[
  \sum\limits_{\pi} \Prob{(X,Y)\sim \mu}{ f(X,Y) = b_{\pi}, \Pi(X,Y) = \pi}\\
  \geq \sum\limits_{\pi} \left(\half - \eps\right)(1+\eps')^{-2}\left(\rho(\pi) - \frac{2^{4M}}{\eps'^2}\delta\eta(\pi)\right),
  \]
  where we used~\eqref{eq:7}. Since the sum of $\rho(\pi)$, as well as the sum of $\eta(\pi)$, is $1$, we
  get that the last expression is equal to $ \left(\half - \eps\right)(1+\eps')^{-2}(1-\frac{2^{4M}}{\eps'^2}\delta)\geq \half - \eps-\eps' -\frac{2^{4M}}{2\eps'^2}\delta$,
  and plugging this into~\eqref{eq:8} yields the result.
\end{proof}

\subsection{Proof of Theorem~\ref{thm:external}}
We can now combine Lemmas~\ref{lem:convert_low_universal} and~\ref{lem:step2} to deduce Theorem~\ref{thm:external}, as outlined below.

Suppose $(f,\mu)$ has $(\eps,\delta)$ relative-discrepancy with respect to $\rho$.
Let the error of $\Pi$ be denoted by $\eps'$; if $\eps'\geq \half - 2000\eps$ we are done, so assume otherwise.
Denote by $\eta = \half - \eps'$ the advantage of $\Pi$.
Using Lemma~\ref{lem:convert_low_universal} (choosing the $\eps$ there to be $\eta/160$), we get from $\Pi$ a protocol $\Pi'$ whose error
is at most $\eps'+\eta/4$ and has universal external information at most $M' = \frac{400M}{\eta}$. Thus,
the advantage of $\Pi'$ is at least $\half - (\eps' + \eta/4) = 3\eta/4$.
Using Lemma~\ref{lem:step2} (with $\eps'$ there to be $\eta/40$), we get that the advantage of $\Pi'$ is most
$20\eps+\frac{1}{2}\eta+\frac{20\cdot 40^2\cdot 2^{4M'}}{\eta^2}\delta$. Combining
the upper and lower bound on the advantage of $\Pi'$, we get
$3\eta/4\leq 20\eps+\frac{1}{2}\eta+\frac{20\cdot 40^2 \cdot 2^{4M'}}{\eta^2}\delta$.
Simplifying and using $\eta\geq 2000\eps$, we get that
$\frac{\eta}{10}\leq \frac{20\cdot 40^2 \cdot 2^{4M'}}{\eta^2}\delta$, and so
$\delta \geq  2^{-4M'}\eta^3/(10\cdot 20\cdot 40^2)\geq 2^{-5M'}$. Taking logarithm gives $5M'\geq \log(1/\delta)$ and
so by definition of $M'$ we get that $\eta\leq \frac{2000 M}{\log(1/\delta)}$.\qed

\section{Separating amortized zero-error communication complexity and external information}\label{sec:ammortized_separation}
In this section we prove Theorem~\ref{thm:main}, restated below.
\begin{reptheorem}{thm:main}
  For large enough $m$, there is
  a function $H\colon\power{N}\times\power{N}\to\power{}$ and a distribution $\mathcal{D}_H$ over inputs such that
  $\lim_{q\rightarrow \infty} \frac{1}{q} {\sf CC}_{\mathcal{D}_H^q}(H^q,0)\leq O(\sqrt{m}\log^2 m)$ and
  $\IC^{\sf external}_{\nu}[H,1/16]\geq \Omega\left(m\right)$.
\end{reptheorem}
We begin by presenting our construction, and then analyze it.
In particular, Lemmas~\ref{lem:small_ammortized} and~\ref{lem:large_external} imply the properties asserted by
Theorem~\ref{thm:main}.
\subsection{The construction}

\subsubsection{AND-OR trees}
Let $I_1,\ldots,I_{\sqrt{m}}$ be the partition of $[m]$ into equal sized sets given by $I_i = \sett{i\cdot\sqrt{m} + j}{j=1,\ldots, \sqrt{m}}$.
We define $h\colon\power{m}\to\power{}$ by
\[
h(z) = \bigwedge_{i=1}^{\sqrt{m}}\bigvee_{j\in I_i} z_j.
\]

We will need the following easy fact.
\begin{fact}\label{fact:certificate_h}
  For each $z\in\power{m}$, there exists a certificate for $h(z)$ of size $C(m)$, where $C(m) = O(\sqrt{m})$.
\end{fact}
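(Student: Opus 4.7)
The plan is to exhibit explicit certificates for both output values of $h$ and check that each has size $O(\sqrt{m})$.

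First I would handle the case $h(z) = 1$. Since $h$ is the AND of $\sqrt{m}$ ORs, $h(z) = 1$ means that for every block $I_i$ there exists at least one index $j_i \in I_i$ with $z_{j_i} = 1$. A natural 1-certificate is the list $(j_1, \ldots, j_{\sqrt{m}})$ together with the claim $z_{j_i} = 1$ for each $i$. This requires $\sqrt{m}$ indices, each of $\log m$ bits, giving a certificate of total length $O(\sqrt{m} \log m)$. One can in fact do slightly better by fixing an ordering on each $I_i$ and specifying only the position within the block (each taking $\tfrac{1}{2}\log m$ bits), but the order of magnitude is the same.

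Next I would handle the case $h(z) = 0$. Here at least one OR evaluates to 0, so there is some block $I_{i^*}$ on which $z$ vanishes identically. The certificate is then the index $i^*$ (of size $\tfrac{1}{2}\log m$ bits) together with the promise that all $\sqrt{m}$ coordinates in $I_{i^*}$ are zero (which can be verified by listing these $\sqrt{m}$ bits, or simply by knowing $i^*$ and evaluating the OR). This is also of size $O(\sqrt{m})$.

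The only subtlety is the precise convention for what counts as a \emph{certificate}: typically a certificate is an assignment of values to a small set of input variables sufficient to force the function's output. Under this convention, both certificates above are valid — in the $1$-case one fixes $\sqrt{m}$ variables (one per block) to the value $1$, and in the $0$-case one fixes all $\sqrt{m}$ variables in the chosen block $I_{i^*}$ to the value $0$. In either case at most $\sqrt{m}$ variables are fixed, so the certificate complexity is bounded by $\sqrt{m}$, and we may take $C(m) = \sqrt{m} = O(\sqrt{m})$. There is no real obstacle here; the argument is essentially a definition-unfolding.
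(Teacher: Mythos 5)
Your proof is correct and is exactly the intended argument: the paper states this as an easy fact without proof, the point being that a $1$-certificate fixes one true variable per block and a $0$-certificate fixes all $\sqrt{m}$ variables of one all-zero block, so $C(m)\le\sqrt{m}$ in either case. The convention subtlety you flag is handled properly, and nothing further is needed.
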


Consider the function $h_{\land} \colon \power{m}\times \power{m}\to\power{}$ whose input is $(u,v)$, and
it is defined by $h_{\land}(u,v) = h(u_1\land v_1,\ldots,u_m\land v_m)$.
We will need the following result due to Jayram, Kumar and Sivakumar:
\begin{theorem}\label{thm:JKS}\cite{JKS}
  There exists a distribution $\mathcal{D}_h$ over $\power{m}\times \power{m}$, such that
  $\IC^{\sf internal}_{\mathcal{D}_h}[h_{\land},1/8]\geq m/100$.
\end{theorem}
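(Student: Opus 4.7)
The plan is to follow the two-level direct-sum argument pioneered by Bar-Yossef--Jayram--Kumar--Sivakumar (BYJKS) for set disjointness, and extended by Jayram--Kumar--Sivakumar to AND-OR trees. Both layers ultimately reduce to the single base inequality that the two-bit AND function has $\Omega(1)$ internal information on a carefully chosen distribution.

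First, I would construct the hard distribution from decomposable two-bit primitives. Let $\zeta_0$ be the uniform distribution on $\{(0,0),(0,1),(1,0)\}$ (so the bit-AND is $0$ almost surely) and let $\zeta_1$ be the point mass on $(1,1)$ (so the bit-AND is $1$). The crucial property of $\zeta_0$ is \emph{collapsibility}: one can sample it via a public bit $c\in\{0,1\}$ such that, conditioned on $c$, exactly one of the two parties holds a private random bit while the other holds $0$. For each outer block $i\in[\sqrt m]$, the per-block hard distribution $\mathcal{D}_\vee$ picks a uniformly random special coordinate $J_i\in I_i$, a uniform bit $b_i\in\{0,1\}$, puts $\zeta_{b_i}$ at $J_i$ and $\zeta_0$ independently on the remaining coordinates of $I_i$; under $\mathcal{D}_\vee$ the inner OR $\bigvee_{j\in I_i}(u_j\wedge v_j)$ equals $b_i$ identically. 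Finally, I would take $\mathcal{D}_h$ to make $h_\land$ unbiased by sampling an active outer index $I\in[\sqrt m]$ and a bit $b\in\{0,1\}$ uniformly, drawing block $I$ from $\mathcal{D}_\vee$ conditioned on $b_I=b$, and drawing each other block from $\mathcal{D}_\vee$ conditioned on $b_{i'}=1$; under $\mathcal{D}_h$ we then have $h_\land=b$.

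Second, I would carry out the outer direct-sum reduction. Fix any protocol $\Pi$ computing $h_\land$ with error $\le 1/8$ under $\mathcal{D}_h$. For each $i\in[\sqrt m]$ I would build a protocol $\Pi_i$ for the inner OR-of-ANDs subproblem on $\mathcal{D}_\vee$ as follows: the players use public randomness to draw $J_{i'}$ and the collapsibility coins of $\zeta_0$ on every block $i'\neq i$, and use private randomness to fill in their own bits, ensuring that the other blocks are distributed as $\mathcal{D}_\vee$ conditioned on $b_{i'}=1$. On block $i$ they plug in their genuine $\mathcal{D}_\vee$ input and run $\Pi$, outputting its answer; this equals the inner OR of block $i$ because all other inner ORs equal $1$. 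A chain-rule calculation---using independence across blocks and the fact that, thanks to collapsibility, the embedded inputs on the other blocks contribute zero internal information to either party---yields
\[
\MI^{\sf internal}_{\mathcal{D}_h}[\Pi]\ \ge\ \sum_{i=1}^{\sqrt m} \MI^{\sf internal}_{\mathcal{D}_\vee}[\Pi_i].
\]

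Third, I would apply the BYJKS disjointness lower bound to each inner subproblem: any protocol solving $\bigvee_{j\in I_i}(u_j\wedge v_j)$ to constant error on $\mathcal{D}_\vee$ has internal information $\Omega(\sqrt m)$. This is itself a direct-sum step, reducing $\sqrt m$-coordinate disjointness to the two-bit AND on $\tfrac12\zeta_0+\tfrac12\zeta_1$, together with the BYJKS ``information cost of AND'' lemma (a Hellinger/cut-and-paste argument) which yields $\Omega(1)$ for the two-bit primitive. Combining the two steps gives $\IC^{\sf internal}_{\mathcal{D}_h}[h_\land,1/8]\ge \sqrt m\cdot \Omega(\sqrt m)=\Omega(m)$, and careful tuning of constants produces the advertised $m/100$ bound. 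The main obstacle will be the bookkeeping of the outer reduction: one has to design $\mathcal{D}_h$ so that (i) the collapsibility of $\mathcal{D}_\vee$ and of its conditioning on $b=1$ guarantees that every embedded block contributes zero internal information to either player, and (ii) the error of $\Pi$ under $\mathcal{D}_h$ transfers cleanly to a constant-error guarantee for every $\Pi_i$ under $\mathcal{D}_\vee$. Point (ii) is handled by balancing the weight of the active block against the conditioned blocks so that the ``good'' event $\{b_{i'}=1\text{ for all }i'\neq i\}$ has probability bounded away from $0$ uniformly in $i$, and by absorbing the resulting constant-factor error inflation into the $1/8$ budget.
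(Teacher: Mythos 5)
First, a point of reference: the paper does not prove this statement at all --- Theorem~\ref{thm:JKS} is imported as a black box from~\cite{JKS} --- so there is no internal proof to compare against. Your reconstruction does follow the architecture of the cited Jayram--Kumar--Sivakumar argument: collapsible distributions built from the two-bit AND primitive, an inner BYJKS-style direct sum for each OR-of-ANDs block, an outer direct sum over the AND of the blocks, and a Hellinger/cut-and-paste bound for AND at the bottom.

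There is, however, a genuine gap at the outer level, namely the claimed inequality $\MI^{\sf internal}_{\mathcal{D}_h}[\Pi]\ge\sum_{i}\MI^{\sf internal}_{\mathcal{D}_\vee}[\Pi_i]$. Under your $\mathcal{D}_h$ only one \emph{hidden} block (the active index $I$) carries a genuine $\mathcal{D}_\vee$-distributed input; the other $\sqrt m-1$ blocks are conditioned on $b_{i'}=1$. Running the embedded protocol $\Pi_i$ on a genuine $\mathcal{D}_\vee$ input produces exactly the distribution $\mathcal{D}_h$ conditioned on $I=i$, so $\MI^{\sf internal}_{\mathcal{D}_\vee}[\Pi_i]$ is an information quantity under $\mathcal{D}_h|_{I=i}$, whose marginal on block $i$ (half its mass on the all-$\zeta_0$ block) is very different from the marginal of block $i$ under $\mathcal{D}_h$ itself (essentially $\mathcal{D}_\vee$ conditioned on $b=1$). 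The natural chain-rule route --- condition on the auxiliary embedding variables (legitimate, since they are independent of the transcript given the inputs) and split according to $I$ --- therefore yields only the \emph{average} over $i$ of the embedded information costs, i.e.\ a bound of order $\sqrt m$, not the sum of $\sqrt m$ terms needed for $\Omega(m)$. To recover the factor $\sqrt m$ one must count, for every block $i$, the information revealed about a block distributed according to the conditioned distribution $\mathcal{D}_\vee|_{b=1}$, and correspondingly prove the inner $\Omega(\sqrt m)$ bound for protocols whose input is drawn from this OR$=1$-conditioned distribution, with the correctness/distinguishing requirement harvested from the rare active block and controlled only on average over $i$ (a Markov argument). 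That conditioned-distribution information bound is precisely the technical core of~\cite{JKS}; your sketch replaces it with ``a chain-rule calculation'', and the accompanying remark about balancing weights so that the event $\{b_{i'}=1\text{ for all }i'\neq i\}$ has probability bounded away from $0$ does not match your own construction, where this event holds with probability $1$ and the real issue is the averaging over the hidden active index. A smaller instance of the same issue occurs at the inner level: the Hellinger argument for AND needs correctness at $(1,1)$, which under $\mathcal{D}_\vee$ is available only on average over the planted coordinate, so there too information must be measured on the $\zeta_0$ part while correctness is extracted from the planted part.
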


\subsubsection{Our construction: AND-OR trees with a hint}
Fix a function $h$ as defined above, and let $\mathcal{D}_h$ be the distribution from Theorem~\ref{thm:JKS}.

Let $f_{\sf hint} \colon\power{n}\times\power{n}\to\power{}$ be a function, and $\mu = \half\mu_0 + \half\mu_1$ be a distribution over inputs such that
$(f_{\sf hint},\mu)$ has low relative discrepancy (we encourage the reader to think of the bursting noise function for sufficiently large $n$). Here, for each $b\in\power{}$,
$\mu_b$ is supported on $f_{\sf hint}^{-1}(b)$.
We define the function $H\colon\power{m}\times\power{m}\times(\power{n}\times\power{n})^{C(m)\log m}\to\power{}$ as
follows. View the input as $(u,x,v,y)\in\power{m}\times\power{n C(m)\log m}\times\power{m}\times \power{n C(m)\log m}$, and define
\[
H(u,x,v,y) = h_{\sf \land}(u,v)
\]

At first glance, one may wonder what is the role of the $x,y$-part of the input in the function $H$, as the definition of the function ignores them altogether.
The idea is that in the input distribution we consider, the inputs $x,y$ will be used as a pointer to a certificate of $h$ on the input
$z = (u_1\land v_1,\ldots,u_m\land v_m)$, and we will be able to compute this pointer with low amortized communication complexity
(but with error). Once the players compute the certificate set $I[z]$ of $h$ on $z$, they are able to communicate $C(m)\log m$ bits in
order to learn $z_i$ for each $i\in I[z]$. Thus, if the players were successful in computing the certificate set $I[z]$,
they above protocol would compute $H(u,x,v,y)$ with no error.

\paragraph{The input distribution $\mathcal{D}_H$.}
A sample according to the distribution $\mathcal{D}_H$ is drawn in the following way. First, sample $(u,v)\sim \mathcal{D}_{h}$, and consider the point
$z = (u_1\land v_1,\ldots,u_m\land v_m)$. By Fact~\ref{fact:certificate_h}, there is a certificate for $h(z)$ of size at most $C(m)$. Choose one such
certificate (in some canonical way), and let $\alpha(u,v) = (\alpha_1,\ldots,\alpha_{C(m)\log m})$ be a binary encoding of it.
For each $i=1,\ldots,C(m)\log m$ independently, we choose $(x^{i},y^{i})\sim \mu_{\alpha_i}$.
The sample of $\mathcal{D}_H$ is now $(u,x,v,y)$.

\paragraph{Choice of the parameters.}
Let $k\in\mathbb{N}$ be a large parameter, and choose $f_{\sf hint}$ to be the bursting noise function on the tree of height $2^{4^k}$ and $n$ accordingly.
Finally, we choose $m = 2^{k/10}$.

\subsection{The zero-error protocol}
In this section, we show a zero-error protocol of low amortized communication complexity as asserted in Theorem~\ref{thm:main}.
To show that, we will need the following result from~\cite{BravermanRao}.

\begin{thm}\label{thm:BR}[\cite{BravermanRao}]
  $\lim_{q\rightarrow \infty} \CC_{\mu^q}[f^q,\eps]/q = \IC^{\sf internal}_\mu[f,\eps]$.
\end{thm}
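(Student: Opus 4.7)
The plan is to prove the two matching inequalities separately, since the content of the theorem is really a lower bound and a matching upper bound whose proofs are completely different.

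For the lower bound direction $\lim_q \CC_{\mu^q}[f^q,\eps]/q \geq \IC^{\sf internal}_\mu[f,\eps]$, I would combine two standard facts. First, for every protocol $\pi$, $\MI^{\sf internal}_{\mu^q}[\pi] \leq \CC_{\mu^q}[\pi]$, because the players cannot learn more information than the number of bits they exchange. Second, the tensorization of internal information $\IC^{\sf internal}_{\mu^q}[f^q,\eps] = q \cdot \IC^{\sf internal}_\mu[f,\eps]$ recalled in \eqref{eq:ICCC2} (whose proof is itself a short direct-sum argument using the chain rule in Fact~\ref{fact:MI_chainrule}). Combining these gives $\CC_{\mu^q}[f^q,\eps] \geq \IC^{\sf internal}_{\mu^q}[f^q,\eps] = q \cdot \IC^{\sf internal}_\mu[f,\eps]$, so dividing by $q$ closes this direction (and in fact yields it for every finite $q$).

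For the upper bound direction $\lim_q \CC_{\mu^q}[f^q,\eps]/q \leq \IC^{\sf internal}_\mu[f,\eps]$, I would start from a near-optimal single-copy protocol $\pi$ for $(f,\mu)$ with $\MI^{\sf internal}_\mu[\pi] \leq I + \gamma$, where $I = \IC^{\sf internal}_\mu[f,\eps - \gamma]$ and $\gamma>0$ is a small slack parameter. Running $\pi$ independently in each of the $q$ coordinates produces a protocol $\pi^{\otimes q}$ for $f^q$ with per-coordinate error $\leq \eps - \gamma$ and total internal information at most $q(I+\gamma)$, but possibly huge communication. The key task is then to \emph{compress} $\pi^{\otimes q}$ into a protocol of communication at most $q(I+\gamma) + o(q)$, at the cost of introducing only $o(1)$ additional error per coordinate.

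The compression is implemented round by round. At each step, say Alice wants to transmit the next message bit in each of the $q$ coordinates; her distribution on this bit (conditioned on her input and the transcript) differs from Bob's predicted distribution by some KL divergence in each coordinate. Using shared randomness together with a correlated-sampling / rejection-sampling procedure à la Slepian--Wolf, she can transmit the vector of $q$ bits with expected communication close to the sum of these KL divergences plus a sublinear overhead. Summed across all rounds and all coordinates, the total KL contribution matches the internal information, which is exactly $q(I+\gamma)$, while the additive overhead per round can be kept to $o(1)$ on average by exploiting the independence of the $q$ coordinates to batch the compression and to invoke concentration. The main obstacle is precisely here: a naive single-shot simulation gives only an $O(\sqrt{qI\cdot qC})$ communication bound, which does not amortize to $I$ per copy; obtaining the tight additive $o(q)$ overhead requires using the $q$-fold product structure so that the independent coordinates ``self-concentrate.'' Carrying out this tensorized compression with the right accounting of errors and of encoding overheads is exactly the content of the Braverman--Rao argument.
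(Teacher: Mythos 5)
The paper itself does not prove this statement---it is imported directly from Braverman--Rao and used as a black box---so there is no internal proof to compare against; I can only judge your argument on its own terms. Your lower-bound direction is complete and correct: it is the standard combination of $\MI^{\sf internal}_{\mu^q}[\pi]\le \CC_{\mu^q}[\pi]$ with the tensorization identity \eqref{eq:ICCC2}, and as you note it holds for every finite $q$.

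The upper-bound direction, however, is a roadmap rather than a proof, and the gap sits exactly where the theorem's content lies. You reduce everything to a round-by-round correlated/rejection-sampling compression whose expected cost per round is the relevant KL divergence plus lower-order overhead, and then state that carrying this out ``is exactly the content of the Braverman--Rao argument''---but the statement you were asked to prove \emph{is} the Braverman--Rao theorem, so deferring to it at this point is circular. To close the gap you would need to (i) state and prove the one-shot sampling lemma (using public randomness, Bob, who knows $Q$, obtains a sample from Alice's distribution $P$ with expected communication roughly $\DKL{P}{Q}+O\bigl(\sqrt{\DKL{P}{Q}}\bigr)+O(\log 1/\gamma)$ up to sampling error $\gamma$); (ii) batch it across the $q$ coordinates within each round and sum over rounds so that the square-root and additive per-round overheads are genuinely $o(q)$---note that the infimum defining $\IC^{\sf internal}_\mu[f,\eps]$ ranges over protocols of unbounded length, so without truncating the protocol to a bounded number of rounds and bounded communication the per-round constants do not automatically amortize away; and (iii) finish the error bookkeeping: your construction gives per-copy error $\eps-\gamma+o(1)$ and per-copy communication about $\IC^{\sf internal}_\mu[f,\eps-\gamma]+\gamma$, so you still need continuity of $\IC^{\sf internal}_\mu[f,\cdot]$ in the error parameter (e.g.\ via convexity) to let $\gamma\to 0$ and recover the claim at error exactly $\eps$. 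None of these steps is unfixable---they are precisely the Braverman--Rao proof---but as written your proposal assumes its hardest part.
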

We also need the following easy fact.
\begin{fact}\label{fact:dist_transfer}
  Suppose $\mu,\mu_0,\mu_1$ are distributions such that $\mu = \half\mu_0 + \half\mu_1$,
  and let $\nu$ be any convex combination of $\mu_0,\mu_1$. Then
  $\IC^{\sf internal}_\nu[f,2\eps]\leq 2\IC^{\sf internal}_\mu[f,\eps]+6$.
\end{fact}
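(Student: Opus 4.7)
The plan is to take an optimal protocol $\Pi$ for $(f,\mu)$ achieving error $\eps$ and internal cost arbitrarily close to $\IC^{\sf internal}_\mu[f,\eps]$, and to argue that the \emph{same} protocol, reinterpreted under $\nu$, computes $f$ with error at most $2\eps$ and internal information cost at most $2\IC^{\sf internal}_\mu[f,\eps] + 6$. The error bound is immediate from the pointwise inequality $\nu(x,y) = \alpha\mu_0(x,y) + (1-\alpha)\mu_1(x,y) \leq \mu_0(x,y) + \mu_1(x,y) = 2\mu(x,y)$, so $\Prob{(x,y)\sim\nu}{f(x,y)\neq {\sf output}(\Pi(x,y))} \leq 2\eps$. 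All of the work lies in controlling the information cost under $\nu$.

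The key move is to introduce an auxiliary binary random variable $B$ encoding which of $\mu_0,\mu_1$ the input was sampled from, and to use chain-rule manipulations. Concretely, under $\mu$, declare $B$ to be uniform on $\{0,1\}$ with $(X,Y)\mid B=b \sim \mu_b$; then $\Pi$ is independent of $B$ given $(X,Y)$, so $\MI_\mu[\Pi; B\mid XY]=0$ and
\[
\tfrac{1}{2}\MI_{\mu_0}[\Pi; X\mid Y] + \tfrac{1}{2}\MI_{\mu_1}[\Pi; X\mid Y] \;=\; \MI_\mu[\Pi; X\mid BY] \;\leq\; \MI_\mu[\Pi; XB\mid Y] \;\leq\; \MI_\mu[\Pi; X\mid Y] + 1,
\]
using $\HH[B]\leq 1$. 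Hence $\MI_{\mu_b}[\Pi; X\mid Y] \leq 2\MI_\mu[\Pi; X\mid Y] + 2$ for each $b$, and analogously with the roles of $X$ and $Y$ swapped.

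Now repeat the chain-rule trick under $\nu$: let $B\sim\mathsf{Ber}(1-\alpha)$ and $(X,Y)\mid B=b \sim \mu_b$, so $(X,Y)$ marginally has law $\nu$. Then
\[
\MI_\nu[\Pi; X\mid Y] \;\leq\; \MI_\nu[\Pi; XB\mid Y] \;=\; \MI_\nu[\Pi; B\mid Y] + \MI_\nu[\Pi; X\mid BY] \;\leq\; 1 + \alpha\,\MI_{\mu_0}[\Pi; X\mid Y] + (1-\alpha)\,\MI_{\mu_1}[\Pi; X\mid Y],
\]
and substituting the per-component bound just established yields $\MI_\nu[\Pi; X\mid Y]\leq 2\MI_\mu[\Pi;X\mid Y] + 3$. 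The symmetric bound $\MI_\nu[\Pi; Y\mid X]\leq 2\MI_\mu[\Pi; Y\mid X] + 3$ follows identically, and summing gives $\MI^{\sf internal}_\nu[\Pi] \leq 2\MI^{\sf internal}_\mu[\Pi] + 6$.

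The main subtlety to get right is the bookkeeping on $B$: the bound $\MI_\mu[\Pi; B\mid XY]=0$ relies on $B$ being introduced as pure extra randomness conditionally on $(X,Y)$, while the $\HH[B]\leq 1$ slack is what costs the additive constant. Everything else is routine chain rule plus taking an infimum over protocols to pass from $\MI^{\sf internal}_\mu[\Pi]$ to $\IC^{\sf internal}_\mu[f,\eps]$, at the price of an arbitrarily small additive loss that can be absorbed into the $+6$.
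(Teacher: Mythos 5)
Your proposal is correct and follows essentially the same route as the paper: introduce the mixture indicator (the paper's $D$, your $B$), note that conditioning on or removing it changes each conditional mutual information term $\MI[\Pi;X\mid Y]$, $\MI[\Pi;Y\mid X]$ by at most one bit, first to pass from $\mu$ to $\mu_0,\mu_1$ and then from $\mu_0,\mu_1$ to $\nu$, while the error bound follows from $\nu\leq 2\mu$ pointwise. The only cosmetic difference is that you track the two directions separately (getting $+3$ each) whereas the paper bounds $\max(\MI^{\sf internal}_{\mu_0}[\Pi],\MI^{\sf internal}_{\mu_1}[\Pi])\leq 2\MI^{\sf internal}_{\mu}[\Pi]+4$ and then adds $2$, yielding the same $+6$.
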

\begin{proof}
  Let $\Pi$ be a protocol for $(f,\mu)$ with $\eps$-error. Note that when executed on $\mu_0$ (or on $\mu_1$), the
  protocol has at most $2\eps$ error, and hence its error on $\nu$ is also at most $2\eps$. We next upper bound the internal
  information cost of $\Pi$ when executed on $\nu$.

  Consider the random variables $D, X, Y$, where we first sample $D\in \power{}$
  uniformly, then sample $(X,Y)\sim \mu_D$. Then
  $\MI[\Pi; X | Y] \geq \MI[\Pi; X | Y, D] - 1$ and similarly $\MI[\Pi; Y | X] \geq \MI[\Pi; Y | X, D] - 1$, so we get that
  \[
  \MI^{\sf internal}_\mu[\Pi] \geq \half \MI^{\sf internal}_{\mu_0}[\Pi]+\half \MI^{\sf internal}_{\mu_1}[\Pi] - 2,
  \]
  hence $\max(\MI^{\sf internal}_{\mu_0}[\Pi],\MI^{\sf internal}_{\mu_1}[\Pi])\leq 2\MI^{\sf internal}_\mu[\Pi]+4$.

  Since $\nu$ is a convex combination of $\mu_0$ and $\mu_1$, we may write $\nu = \lambda \mu_0 + (1-\lambda)\mu_1$
  and define random variables $D',X',Y'$ where: $D'=0$ with probability $\lambda$ and otherwise $D'=1$, and then we
  sample $(X',Y')\sim\mu_D$. We thus have
  \[
  \MI[\Pi(X',Y'); X' | Y'] \leq \MI[\Pi(X',Y'); X' | Y', D'] +1,~~
  \MI[\Pi(X',Y'); Y' | X'] \leq \MI[\Pi(X',Y'); Y' | X', D'] +1,
  \]
  and so
  \[
  \MI^{\sf internal}_\nu[\Pi]
  \leq \lambda \MI^{\sf internal}_{\mu_0}[\Pi] + (1-\lambda)\MI^{\sf internal}_{\mu_1}[\Pi]+2
  \leq \max(\MI^{\sf internal}_{\mu_0}[\Pi],\MI^{\sf internal}_{\mu_1}[\Pi])+2
  \leq 2\MI^{\sf internal}_\mu[\Pi]+6.\qedhere
  \]
\end{proof}

\begin{lemma}\label{lem:small_ammortized}
  $\lim_{q\rightarrow \infty} \frac{1}{q} \CC_{\mathcal{D}_H^q}(H^q,0) \leq O(\sqrt{m}\log^2 m)$
\end{lemma}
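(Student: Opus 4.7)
The plan is to combine an amortized protocol for the hint function $f_{\sf hint}$ (guaranteed by Theorem~\ref{thm:BR}) with a cheap verification exchange that enforces zero error, plus a rare fallback to a trivial protocol. Fix an index $i\in[C(m)\log m]$; across the $q$ copies of $H$, the pairs $\{(x^{(j),i},y^{(j),i})\}_{j=1}^q$ are i.i.d.\ from a marginal distribution $\nu_i = p_i\mu_1+(1-p_i)\mu_0$ that is a convex combination of $\mu_0,\mu_1$, where $p_i=\Pr_{(u,v)\sim\mathcal{D}_h}[\alpha(u,v)_i=1]$. By Lemma~\ref{lemma:GKR_internal} and Fact~\ref{fact:dist_transfer}, $\IC^{\sf internal}_{\nu_i}[f_{\sf hint},2^{-k+1}]\leq 2\IC^{\sf internal}_\mu[f_{\sf hint},2^{-k}]+6 = O(k)$. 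Hence Theorem~\ref{thm:BR} applied to $\nu_i$ produces, for each $i$, an amortized protocol $\Pi_i$ for $f_{\sf hint}^q$ on $\nu_i^q$ with amortized communication $O(k)$ per copy in the limit and total error at most $2^{-k+1}$.

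The full protocol proceeds in four phases. In Phase~1 the players run $\Pi_1,\ldots,\Pi_{C(m)\log m}$ independently on disjoint input pairs, producing estimates $\tilde{\alpha}^{(j)}\in\{0,1\}^{C(m)\log m}$ of $\alpha^{(j)}$ for each $j$. A union bound over $i$ gives $\Pr[\tilde{\alpha}^{(j)}\neq\alpha^{(j)}]\leq C(m)\log m\cdot 2^{-k+1}=O(m^{-9.5}\log m)$ per copy, using $k=10\log m$ and $C(m)=O(\sqrt{m})$. In Phase~2 the players locally decode $\tilde{\alpha}^{(j)}$ into a claimed $h$-certificate $(\tilde{I}^{(j)},\tilde{b}^{(j)})$ with $|\tilde{I}^{(j)}|\leq C(m)$. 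In Phase~3 they exchange the $O(\sqrt{m})$ bits $\{(u^{(j)}_i,v^{(j)}_i):i\in\tilde{I}^{(j)}\}$ and check whether $\tilde{I}^{(j)}$ is indeed a valid certificate witnessing $h_{\land}(u^{(j)},v^{(j)})=\tilde{b}^{(j)}$. If the check passes they output $\tilde{b}^{(j)}$; otherwise (Phase~4 fallback) they exchange all of $(u^{(j)},v^{(j)})$, using $2m$ bits to determine $h_{\land}(u^{(j)},v^{(j)})$ exactly. Since the players only output a value that has been verified by a genuine certificate of $h$, the entire protocol is zero-error.

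The expected communication per copy of $H$ decomposes as: Phase~1 contributes $O(k\cdot C(m)\log m)=O(\sqrt{m}\log^2 m)$ amortized bits in the limit $q\to\infty$, Phase~3 contributes $O(\sqrt{m})$ bits, and the fallback contributes at most $2m\cdot O(m^{-9.5}\log m)=o(1)$ bits in expectation. Summing and taking $q\to\infty$ yields $\lim_{q\to\infty}\tfrac{1}{q}\CC_{\mathcal{D}_H^q}[H^q,0]\leq O(\sqrt{m}\log^2 m)$, as desired.

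The most delicate step is invoking Theorem~\ref{thm:BR} on the marginal distribution $\nu_i$ rather than on $\mu$ itself; Fact~\ref{fact:dist_transfer} is the crucial tool that transports the internal information bound from $\mu$ to any convex combination of $\mu_0$ and $\mu_1$ at the cost of only a constant factor. A secondary point is that within a single copy of $H$ the $C(m)\log m$ hint instances are \emph{not} mutually independent (they all depend on $(u^{(j)},v^{(j)})$), which is why the amortization must be carried out across copies for each fixed $i$ separately rather than by bundling all $qC(m)\log m$ hint instances into a single amortized protocol.
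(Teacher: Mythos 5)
Your proposal is correct and follows essentially the same route as the paper's proof: amortize $f_{\sf hint}$ across the $q$ copies separately for each hint index via Theorem~\ref{thm:BR} together with Fact~\ref{fact:dist_transfer}, decode the candidate certificate, verify it by exchanging the $O(\sqrt{m})$ certified bits of $(u,v)$, and fall back to the trivial $2m$-bit exchange on the rare failures, which costs $o(1)$ per copy. The accounting and the zero-error argument match the paper's.
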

\begin{proof}
  Let $q\in\mathbb{N}$ be the number of copies of $H$ we wish to compute (thought of as very large), and denote the
  $q$-inputs to $H$ by
  $(u(1),x(1),v(1),y(1)),\ldots,(u(q),x(q),v(q),y(q))$. Set $r = C(m) \log m$; we now define $r$
  distinct $q$-tuples of inputs for $f$. For each $i = 1,\ldots, r$, consider the $q$-tuple of inputs for
  $f$ resulting from taking the $i$th input from each one of $(x(1),y(1)),\ldots,(x(q),y(q))$, i.e.
  $a(i) = (x(1)^i,\ldots,x(q)^i)$ and $b(i) = (y(1)^i,\ldots,y(q)^i)$. We note that for each $i$,
  the distribution of $(a(i),b(i))$ is a product distribution $\nu_i^q$ where $\nu_i$ is some convex
  combination of $\mu_0,\mu_1$.

  Combining Lemma~\ref{lemma:GKR_internal} and Fact~\ref{fact:dist_transfer} we get that for each
  $i$, $\text{IC}^{\sf internal}_{\nu_i}[f,2^{1-k}] = O(k)$, and so by Theorem~\ref{thm:BR}, there is $q(i)$ such that
  for all $q\geq q(i)$, we may find a protocol $\Pi_i$ communicating $O(q k)$ bits in expectation whose error
  on each copy of $f^{q}$ on $\nu_i^q$ is at most $2^{1-k}$. For the rest of the proof, we take
  $q\geq \max_{i=1,\ldots,r} q(i)$.

  We are now ready to present the protocol for $H$. For each $i$, the players use the protocol
  $\Pi_i$ in order to compute $f^q(a(i),b(i))$. Thus, the players now have a candidate answer for
  $f(x(j)^i,y(j)^i)$ for each $j=1,\ldots,q$ and $i=1,\ldots,r$.
  The players now check for each $j=1,\ldots,q$ the subset of coordinates that $(f(x(j)^i,y(j)^i))_{i=1,\ldots,r}$
  encodes, call it $A_i$, and then communicate all of the bits in $u(j),v(j)$ that are in it, i.e.
  $u(j)^\ell, v(j)^\ell$ for $\ell\in A_i$. If this partial assignment to $h_{\land}$ is indeed a
  certificate -- the players declare the copy $i$ to be successful and thus know the value
  $H(u(j),x(j),v(j),y(j))$. Otherwise, if copy $j$ is not successful, the players use the trivial protocol
  for $H$ and exchange $u(j),v(j)$ fully to compute $H$ on that copy (that simply exchanges the players' inputs).

  \paragraph{Correctness.} It is easy to see that the players are always correct on each copy. They are correct on a ``successful copy''
  by the definition of certificates, and are correct on an ``unsuccessful copy'' since they exchange all of the relevant input bits needed to
  compute $h_{\land}$ on each such copy.

  \paragraph{Average communication complexity.} Let $E_i$ be the event that copy $i$ is successful, and
  let $Q = \sum\limits_{i=1}^{q} 1_{E_i}$ be the number of successful copies. Also, denote by $A$ the total number of
  bits communicated by the simulation of $\Pi_i$ for $i=1,\ldots,r$. With these notations, we note that the random variable
  $A + (q-Q) 2m$ bounds the total number of bits communicated by the protocol.

  First, note that by choice of $\Pi_i$ we have that $\Expect{}{A}\leq O(q k r)$. Secondly we lower bound
  the expectation of $Q$.
  Note that for each $i,j$, the probability the players computed $f(x(j)^i,y(j)^i)$ correctly is at least
  $1-2^{1-k}$. Thus, for each $j$, the probability that they computed $f(x(j)^i,y(i)^i)$ correctly for
  all $i=1,\ldots,r$ is at least $1-r2^{1-k}\geq 1-2^{-9k/10}$. Thus, $\Expect{}{Q}\geq (1-2^{-9k/10})q$.

  Overall, we get that the expected number of bits communicated is at most
  \[
  \Expect{}{A + (q-Q) 2m}
  \leq O(q k r) + 2mq 2^{-9k/10}
  = O(qk r),
  \]
  where in the last inequality we used the fact that $m=2^{k/10}$. Therefore, we get that
  \[
  \lim_{q\rightarrow \infty} \frac{1}{q} \CC_{\mathcal{D}_H^q}(H^q,0)
  =O(kr)
  =O(\sqrt{m} \log^2 m).\qedhere
  \]

\end{proof}

\subsection{Lower bounding the external information of $H$}
In this section, we prove the lower bound on the external information of $(H,\mathcal{D}_H)$ as asserted in Theorem~\ref{thm:main}.
The main step in this proof is Lemma~\ref{lem:drop_hint}, which asserts that any protocol for $(H,\mathcal{D}_H)$ with low external
information can be converted into a protocol for $(h_{\sf final},\mathcal{D}_h)$ with low external information. The second step
of the proof is to argue that the latter is impossible, and is essentially the content of Theorem~\ref{thm:JKS} (up to the choice
of the parameters).

\subsubsection{Dropping the hints}
\begin{lemma}\label{lem:drop_hint}
  Let $M\in\mathbb{N}$, $\eps,\eps',\xi,\delta>0$, let $h_{\sf \land} \colon\power{m}\times\power{m}\to\power{}$ be as above
  and let $f \colon\power{n}\times\power{n}\to\power{}$ be a function with $(\eps,\delta)$ relative discrepancy.

  Then any protocol $\Pi$ for $(H,\mathcal{D}_H)$ with error $\eps'$ and $\MI_{\mathcal{D}_H}^{{\sf external}}[\Pi]\leq M$,
  can be converted into a protocol $\Pi'$ for $(h_{\sf \land},\mathcal{D}_h)$ satisfying:
  \begin{enumerate}
    \item The protocol $\Pi'$ has universal external information at most $\frac{4M}{\xi}$.
    \item The error of $\Pi'$ is at most $\eps' + 61\xi + 20C(m)\log m \cdot \eps + 2^{33 C(m)\log m M/\xi^2} \cdot \delta$.
  \end{enumerate}
\end{lemma}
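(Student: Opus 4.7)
The plan is to first apply Lemma~\ref{lem:convert_low_universal} to convert $\Pi$ into a nearby protocol $\tilde\Pi$ with universally low external information, and then to simulate $\tilde\Pi$ on an input $(u,v)\sim\mathcal{D}_h$ by filling in the hint coordinates $(x,y)$ with ``fake'' samples from $\mu^{C(m)\log m}$ generated from shared randomness. The key conceptual point is that, because $(f,\mu)$ has $(\eps,\delta)$ relative-discrepancy, Lemma~\ref{lem:step2} forces the transcript of any low-UEI protocol on $(f,\mu)$ to depend only very weakly on the value of $f(x,y)$. Therefore, replacing the ``real'' hint $(x^i,y^i)\sim\mu_{\alpha_i(u,v)}$ by a freshly drawn $(x^i,y^i)\sim\mu$ barely changes the transcript distribution of $\tilde\Pi$, so the fake-input simulation computes $h_{\land}(u,v)$ almost as reliably as $\tilde\Pi$ computes $H$ on $\mathcal{D}_H$.

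Concretely, I would apply Lemma~\ref{lem:convert_low_universal} with its error parameter set to $\xi$, yielding $\tilde\Pi$ with universal external information at most $M':=2M/\xi+1\le 4M/\xi$ and error at most $\eps'+40\xi$ on $(H,\mathcal{D}_H)$. The protocol $\Pi'$ then proceeds on input $(u,v)$ as follows: Alice produces a random string $\sigma$ using her private randomness and sends it to Bob as her first message; both players deterministically extract from $\sigma$ an independent sample $(x^i,y^i)\sim\mu$ for each $i=1,\ldots,C(m)\log m$, with Alice using the $x^i$'s and Bob using the $y^i$'s. They then simulate $\tilde\Pi$ on $(u,x,v,y)$ and output its output. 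Because $\sigma$ is Alice's first message and is independent of $u$, the UEI witnesses for $\Pi'$ can be taken to be $\eta_A'(\sigma,\pi)=q(\sigma)\eta_A(\pi)$ and $\eta_B'(\sigma,\pi)=\eta_B(\pi)$, where $q(\sigma)$ is the probability that Alice sends $\sigma$ and $(\eta_A,\eta_B)$ are the UEI witnesses for $\tilde\Pi$. One checks that $\eta_A'\eta_B'$ is a valid transcript distribution and that both ratios $P^u_A(\sigma,\pi)/\eta_A'(\sigma,\pi)$ and $P^v_B(\sigma,\pi)/\eta_B'(\sigma,\pi)$ are bounded by $2^{M'}\le 2^{4M/\xi}$.

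The error of $\Pi'$ is bounded by a hybrid argument. For $k=0,1,\ldots,C(m)\log m$, define the distribution $\mathcal{D}^{(k)}$ on $(u,v,x,y)$ by drawing $(u,v)\sim\mathcal{D}_h$, then setting $(x^i,y^i)\sim\mu$ for $i\le k$ and $(x^i,y^i)\sim\mu_{\alpha_i(u,v)}$ for $i>k$, so that $\mathcal{D}^{(0)}=\mathcal{D}_H$ and $\mathcal{D}^{(C(m)\log m)}$ is the distribution $\Pi'$ effectively runs on. Consecutive hybrids $\mathcal{D}^{(k-1)}$ and $\mathcal{D}^{(k)}$ differ only in the law of $(x^k,y^k)$. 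For each fixed $(u,v,x^{\neq k},y^{\neq k})$, the protocol $\tilde\Pi$ restricts to a two-player protocol on inputs $(x^k,y^k)$ whose UEI is still at most $M'$, since the UEI condition of Definition~\ref{def:low_universal} is pointwise in the inputs (and the same witnesses $\eta_A,\eta_B$ continue to work). Lemma~\ref{lem:step2} applied to this derived protocol and to $f$ (using its $(\eps,\delta)$ relative-discrepancy) with auxiliary parameter $\xi_k>0$ bounds the SD between the transcripts under $(x^k,y^k)\sim\mu_0$ and $(x^k,y^k)\sim\mu_1$ by $20\bigl(\eps+\xi_k+2^{4M'}\delta/\xi_k^2\bigr)$. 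Since $\mu=\half\mu_0+\half\mu_1$, the SD between the transcripts under $\mathcal{D}^{(k-1)}$ and $\mathcal{D}^{(k)}$ (averaged over the frozen coordinates) is at most half of that. Summing over the $C(m)\log m$ hybrids by the triangle inequality and choosing the $\xi_k$ to balance contributions (for example $\xi_k=\xi/(c\cdot C(m)\log m)$ for a suitable constant $c$) yields a total SD bound of the form $20C(m)\log m\cdot\eps + O(\xi) + 2^{O(C(m)\log m\cdot M/\xi^2)}\delta$. Combining with the $\eps'+40\xi$ error of $\tilde\Pi$ and folding the constants gives the claimed bound.

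The main technical obstacle is managing the UEI property across the two transformations. Naively folding public randomness into the protocol would double-count its distribution and invalidate the UEI witnesses, which is why $\sigma$ must be sent explicitly by Alice at the start of $\Pi'$, so that $q(\sigma)$ is absorbed on exactly one side of the witness decomposition. The second delicate point is the preservation of UEI when restricting $\tilde\Pi$ to a protocol on a single coordinate $(x^k,y^k)$ inside the hybrid; this works precisely because the UEI condition holds pointwise over all inputs, so freezing most coordinates leaves the same witnesses valid. Once these two points are verified, the quantitative bound follows from a clean invocation of Lemma~\ref{lem:step2} at each of the $C(m)\log m$ hybrid steps.
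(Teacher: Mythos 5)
Your proof follows the same two-step structure as the paper's: first apply Lemma~\ref{lem:convert_low_universal} to obtain a protocol with universal external information at most $M'=4M/\xi$, then run a hybrid argument over the $r=C(m)\log m$ hint coordinates, invoking Lemma~\ref{lem:step2} at each step. The two places you deviate are both fine and arguably cleaner. First, the paper does not send the fake hints via a first message $\sigma$; it simply observes that $(\tilde X,\tilde Y)$ is independent of $(U,V)$ and fixes a single good pair $(\tilde x,\tilde y)$ by averaging, obtaining $\Pi'(u,v)=\Gamma(u,\tilde x,v,\tilde y)$ --- this sidesteps the need to adjust the UEI witnesses by $q(\sigma)$, since fixing inputs leaves the original $\eta_A,\eta_B$ valid with the same bound. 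Second, for the per-hybrid step the paper takes a slight detour: it uses Lemma~\ref{lem:low_universal} to bound the external information of $\Gamma$ on the hybrid distribution, conditions, applies Markov to get a ``typical'' conditional information bound $M''=2rM'/\xi$, and only then feeds $M''$ into Lemma~\ref{lem:step2} (losing an $\xi/r$ probability of atypicality along the way). You instead observe directly that the UEI condition is pointwise in the inputs, so the conditioned single-coordinate protocol has UEI at most $M'$, and apply Lemma~\ref{lem:step2} with $M'$. This is both simpler and gives a tighter exponent ($4M'$ in place of $4M''=8rM'/\xi$), which is more than enough to recover the stated bound after folding constants. The rest --- the $\half$ factor from $\mu=\half\mu_0+\half\mu_1$, the triangle inequality over hybrids, and combining with the $\eps'+40\xi$ error from Lemma~\ref{lem:convert_low_universal} --- matches the paper.
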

The rest of this section is devoted to the proof of Lemma~\ref{lem:drop_hint}.
Denote $r = C(m)\log m$.

Recall the input
distribution $\mathcal{D}_H$ of the function $H$, and consider the distribution $\tilde{\mathcal{D}}_H$ defined as follows.
To draw a sample, we take $(u,v)\sim \mathcal{D}_{h}$, and for each $i=1,\ldots,r$ take $(x^i,y^i)\sim \mu$
independently, and output $(u,x,v,y)$.

By Lemma~\ref{lem:convert_low_universal}, we may convert $\Pi$ into a protocol
$\Gamma$ that has error at most $\eps'' = \eps' + 40\xi$ on $(H,\mathcal{D}_H)$
and has universal external information at most $M' = 4M/\xi$. We would like to show that the
distribution over the transcript $\Gamma(U,X,V,Y)$ when $(U,X,V,Y)\sim \mathcal{D}_H$,
is statistically close to the distribution of the transcript $\Gamma(U,\tilde{X},V,\tilde{Y})$
where $(U,\tilde{X},V,\tilde{Y})\sim \tilde{\mathcal{D}}_H$.

To do so, we consider the hybrid ensembles of random variables. That is, sample $(U,X,V,Y)\sim \mathcal{D}_H$
as well as $\tilde{X},\tilde{Y}\sim \mu^{C(m)\log m}$, so that the distribution of $(U,\tilde{X},V,\tilde{Y})$
is $\tilde{\mathcal{D}}_H$. Denote
\[
\mathcal{X}^i = (\tilde{X}_1,\ldots,\tilde{X}_{i}, X_{i+1}, X_{i+2},\ldots, X_r), \qquad
\mathcal{Y}^i = (\tilde{Y}_1,\ldots,\tilde{Y}_{i}, Y_{i+1}, Y_{i+2},\ldots, Y_r)
\]
and define $\Gamma_i = \Gamma(U,\mathcal{X}^i,V,\mathcal{Y}^i)$. We show that
the statistical distance between $\Gamma_i$ and $\Gamma_{i+1}$ is small.

\begin{claim}\label{claim:hybrid}
  For all $i \in\set{0,\ldots,r-1}$, we have that
  ${\sf SD}(\Gamma_i, \Gamma_{i+1})\leq 20\eps + 21\frac{\xi}{r} + 20\frac{r^2 2^{8M'r/\xi}}{\xi^2}\delta$.
\end{claim}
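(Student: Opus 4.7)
The strategy is to reduce the hybrid step to a single application of Lemma~\ref{lem:step2}. Fix the ``outside'' inputs $F = \bigl(u, v, \{(x^{(j)}, y^{(j)})\}_{j \neq i+1}\bigr)$, and let $\Gamma^*_F$ denote the restriction of $\Gamma$ to a two-player protocol whose Alice input is $x^{i+1}$ and Bob input is $y^{i+1}$ (with the other parts of each player's input treated as fixed constants known to that player). Under $\Gamma_i$, conditioned on $F$, the input at position $i+1$ is drawn from $\mu_{\alpha_{i+1}(u,v)}$; under $\Gamma_{i+1}$, conditioned on $F$, it is drawn from $\mu = \tfrac{1}{2}\mu_0 + \tfrac{1}{2}\mu_1$. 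Since both ensembles share the marginal on $F$, the standard inequality ${\sf SD}(\Gamma_i, \Gamma_{i+1}) \le \mathbb{E}_F\bigl[{\sf SD}(\Gamma_i \mid F,\, \Gamma_{i+1}\mid F)\bigr]$ reduces matters to bounding this conditional statistical distance.

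The key observation is that $\Gamma^*_F$ inherits universal external information at most $M'$ from $\Gamma$. Indeed, Definition~\ref{def:low_universal} asserts the ratio bounds $2^{-M'} \le P^x_{A,\Gamma}(a,b)/\eta_A(a,b) \le 2^{M'}$ for \emph{every} Alice input $x$ in the ambient protocol, and similarly for Bob. Applying this to the subclass of inputs in which the non-$(i+1)$ coordinates are fixed to the values prescribed by $F$ yields the same bounds for $\Gamma^*_F$, with the very same witnesses $\eta_A, \eta_B$. Hence $\Gamma^*_F$ is a two-player protocol with universal external information $\le M'$, to which Lemma~\ref{lem:step2} applies directly with the function $f_{\sf hint}$, the distribution $\mu$, discrepancy parameters $(\eps,\delta)$, and a slack parameter that I would take to be $\xi/r$. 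This produces
\[
  {\sf SD}\bigl(\Gamma^*_F(X,Y) \mid f(X,Y)=0,\ \Gamma^*_F(X,Y) \mid f(X,Y)=1\bigr)
  \le 20\Bigl(\eps + \tfrac{\xi}{r} + \tfrac{r^2 \cdot 2^{4M'}}{\xi^2}\delta\Bigr).
\]

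Because $\mu_b$ is supported on $f^{-1}(b)$, the conditional distribution $\Gamma^*_F(X,Y)\mid f(X,Y)=b$ coincides with $\Gamma^*_F$ applied to an input drawn from $\mu_b$. Using $\mu = \tfrac12(\mu_0+\mu_1)$, an elementary calculation gives ${\sf SD}(\Gamma^*_F|_{\mu_b}, \Gamma^*_F|_{\mu}) = \tfrac12 {\sf SD}(\Gamma^*_F|_{\mu_0}, \Gamma^*_F|_{\mu_1})$, so the conditional statistical distance between $\Gamma_i$ and $\Gamma_{i+1}$ given $F$ is at most half of the bound above. Averaging over $F$ gives the claim, up to the precise constants and the exact form of the exponential factor stated.

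The main obstacle is the restriction step: one must check that ``universal external information'' is preserved when an outer portion of the inputs is conditioned on. This is why the pointwise (for every $(x,y)$) form of Definition~\ref{def:low_universal} was set up — it lets the same $\eta_A,\eta_B$ serve as witnesses after restriction, so that Lemma~\ref{lem:step2} can be deployed as a black box. A secondary subtlety is tuning the slack $\eps'=\xi/r$ inside Lemma~\ref{lem:step2} so that, once the $r$ hybrid steps are later summed via the triangle inequality in the proof of Lemma~\ref{lem:drop_hint}, the total contribution from these terms is $O(\xi)$ rather than $\omega(1)$.
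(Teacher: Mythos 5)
Your proof is correct, and its skeleton matches the paper's: write the hybrid difference as a mixture over the ``outside'' data $F=(u,v,\{(x^j,y^j)\}_{j\neq i+1})$, view $\Gamma$ with those inputs fixed as a two-player protocol in the $(i+1)$-st coordinate, apply Lemma~\ref{lem:step2} with slack $\eps'=\xi/r$, and use that $\mu_b$ is supported on $f^{-1}(b)$ to pass from the $f$-conditioned transcripts to the $\mu_b$-versus-$\mu$ comparison. Where you genuinely diverge is in how the hypothesis of Lemma~\ref{lem:step2} is certified for the restricted protocol. The paper does not use inheritance of the universal property: it invokes Lemma~\ref{lem:low_universal} to bound the external information of $\Gamma$ on the hybrid distribution by $2M'$, passes to the conditional mutual information given $F$ via Fact~\ref{fact:MI_cond}, and applies Markov's inequality, so that only for a $(1-\xi/r)$-fraction of fixings is the information at most $M''=2rM'/\xi$; this costs an extra additive $\xi/r$ for the bad event and puts $2^{4M''}=2^{8M'r/\xi}$ into the bound. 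You instead observe that Definition~\ref{def:low_universal} is a pointwise (all inputs, all transcripts) condition, so the same witnesses $\eta_A,\eta_B$ certify universal external information at most $M'$ for \emph{every} restriction $\Gamma^*_F$ -- no averaging, no bad event. This is cleaner, yields the stronger factor $2^{4M'}$ (plus a factor $\tfrac12$ from the mixture identity), which implies the stated bound, and it also more faithfully matches what Lemma~\ref{lem:step2} formally requires, namely a bound on the \emph{universal} external information rather than on the ordinary external information of the restricted protocol, a point the paper's Markov step glosses over.
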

\begin{proof}
  Since $\Gamma$ has universal external information at most $M'$, it follows from Lemma~\ref{lem:low_universal}
  that the external information of $\Gamma$ when run on $(U,\mathcal{X}^{i+1},V,\mathcal{Y}^{i+1})$ is at most $2M'$, i.e.
  \[
  \MI[\Gamma(U,\mathcal{X}^{i+1}, V,\mathcal{Y}^{i+1}); \mathcal{X}^{i+1}, \mathcal{Y}^{i+1}, U,V]\leq 2M'.
  \]
  For a tuple $Z = (Z_1,\ldots,Z_r)$ and $i\in[r]$, we denote by $Z_{\neq i}$ the tuple obtained from $Z$ by dropping
  the $i$th coordinate. By Fact~\ref{fact:MI_cond}, the left hand side is at least
  \[
  \MI[\Gamma(U,\mathcal{X}^{i+1},V,\mathcal{Y}^{i+1});
  \tilde{X}_{i+1}, \tilde{Y}_{i+1}
  ~|~ \mathcal{X}^{i+1}_{\neq i+1}, Y^{i+1}_{\neq i+1}, U, V].
  \]
  For each $x_{\neq i+1}, y_{\neq i+1}, u,v$ we define
  \[
  M[x_{\neq i}, y_{\neq i}, u,v]
  =\MI[\Gamma(U,\mathcal{X}^{i+1},V, \mathcal{Y}^{i+1});
  \tilde{X}_{i+1}, \tilde{Y}_{i+1}
  ~|~ \mathcal{X}^{i+1}_{\neq i+1}=x_{\neq i+1}, \tilde{Y}^{i+1}_{\neq i+1}=y_{\neq i+1}, U=u, V=v].
  \]
  Then we have that $\Expect{}{M[\mathcal{X}^{i+1}_{\neq i+1}, \tilde{Y}^{i+1}_{\neq i+1}, U,V]}\leq 2M'$, and thus
  by Markov's inequality we have that $M[\mathcal{X}^{i+1}_{\neq i+1}, \tilde{Y}^{i+1}_{\neq i+1}, U,V]\leq 2r M'/\xi = M''$
  with probability at least $1-\xi/r$; denote this event by $E$.

  \skipi
  Take $(u,v,x_{\neq i+1},y_{\neq i+1})\in E$ and condition on
  $\mathcal{X}^{i+1}_{\neq i+1} = x_{\neq i+1}$ , $\mathcal{Y}^{i+1}_{\neq i+1} = y_{\neq i+1}$, $U=u$ and $V=v$.
  Note that the distribution of $(\tilde{X}_{i+1}, \tilde{Y}_{i+1})$ is $\mu$, and that
  $\Gamma(U,\mathcal{X}^{i+1}, V,\mathcal{Y}^{i+1})$ is a protocol
  whose input is $\tilde{X}_{i+1}, \tilde{Y}_{i+1}$ and has external information
  $M[x_{\neq i}, y_{\neq i}, u,v]$ with its the inputs. Therefore,
  by Lemma~\ref{lem:step2} (applied with $\eps' = \xi/r$) we get that
  \[
  {\sf SD}\left(
  \Gamma(U,\mathcal{X}^{i+1},V, \mathcal{Y}^{i+1})|_{f(\tilde{X}_{i+1}, \tilde{Y}_{i+1}) = 0},
  \Gamma(U,\mathcal{X}^{i+1},V, \mathcal{Y}^{i+1})|_{f(\tilde{X}_{i+1}, \tilde{Y}_{i+1}) = 1}\right)
  \leq 20\left(\eps + \frac{\xi}{r} + \frac{r^2 2^{4M''}}{\xi^2}\delta\right).
  \]
  Let $\alpha_1,\ldots,\alpha_r$ be the encoding of the certificate chosen for $h(u,v)$, and let
  $b = \alpha_{i+1}$. Then it follows that
  \[
  {\sf SD}\left(
  \Gamma(U,\mathcal{X}^{i+1},V, \mathcal{Y}^{i+1})|_{f(\tilde{X}_{i+1}, \tilde{Y}_{i+1}) = b},
  \Gamma(U,\mathcal{X}^{i+1},V, \mathcal{Y}^{i+1})\right)
  \leq 20\left(\eps + \frac{\xi}{r} + \frac{r^2 2^{4M''}}{\xi^2}\delta\right).
  \]
  Taking average over $\mathcal{X}^{i+1}_{\neq i+1}, \mathcal{Y}^{i+1}_{\neq i+1}, U,V$
  and noting that the event $E$ fails with probability at most $\xi/r$ (and then the statistical distance is at most $1$),
  we get that
  \[
  {\sf SD}\left(
  \Gamma(U,\mathcal{X}^{i+1},V, \mathcal{Y}^{i+1})|_{f(\tilde{X}_{i+1}, \tilde{Y}_{i+1}) = \alpha_{i+1}(U,V)},
  \Gamma(U,\mathcal{X}^{i+1},V, \mathcal{Y}^{i+1})\right)
  \leq 20\eps + 21\frac{\xi}{r} + 20\frac{r^2 2^{4M''}}{\xi^2}\delta.
  \]
  The statement of the claim now follows since the distribution $\Gamma(U,\mathcal{X}^{i+1},V, \mathcal{Y}^{i+1})|_{f(\tilde{X}_{i+1}, \tilde{Y}_{i+1}) = \alpha_{i+1}(U,V)}$
  is precisely $\Gamma_i$, and the distribution of $\Gamma(U,\mathcal{X}^{i+1},V, \mathcal{Y}^{i+1})$ is precisely $\Gamma_{i+1}$.
\end{proof}

\begin{claim}\label{claim:hybrid_final}
  ${\sf SD}(\Gamma(U,X,V,Y), \Gamma(U,\tilde{X},V,\tilde{Y}))\leq 20r\eps + 21\xi + 20\frac{r^3 2^{8M'r/\xi}}{\xi^2}\delta$.
\end{claim}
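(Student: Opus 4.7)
The plan is to invoke Claim~\ref{claim:hybrid} as a one-step replacement bound and stitch together the full statistical-distance estimate via a telescoping (hybrid) argument. The key observation is that, by construction of the sequences $\mathcal{X}^i = (\tilde X_1,\ldots,\tilde X_i, X_{i+1},\ldots,X_r)$ and $\mathcal{Y}^i = (\tilde Y_1,\ldots,\tilde Y_i, Y_{i+1},\ldots,Y_r)$, the two endpoints of the chain $\Gamma_0, \Gamma_1, \ldots, \Gamma_r$ are exactly the two distributions appearing in the statement of Claim~\ref{claim:hybrid_final}: indeed $\Gamma_0 = \Gamma(U,X,V,Y)$ (all $f$-inputs drawn jointly with $(U,V)$ according to $\mathcal{D}_H$) and $\Gamma_r = \Gamma(U,\tilde X,V,\tilde Y)$ (all $f$-inputs replaced by independent draws from $\mu$, as in $\tilde{\mathcal{D}}_H$).

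Given this, the proof is a direct triangle-inequality computation. I would write
\[
{\sf SD}\bigl(\Gamma(U,X,V,Y),\, \Gamma(U,\tilde X,V,\tilde Y)\bigr)
\;=\; {\sf SD}(\Gamma_0,\Gamma_r)
\;\leq\; \sum_{i=0}^{r-1}{\sf SD}(\Gamma_i,\Gamma_{i+1}),
\]
and then invoke Claim~\ref{claim:hybrid} to bound each summand by $20\eps + 21\xi/r + 20\, r^2 2^{8M'r/\xi}\delta/\xi^2$. Summing $r$ such terms collapses the $\xi/r$ contribution to $21\xi$, multiplies the $\eps$ term by $r$, and multiplies the $\delta$ term by $r$, yielding the bound $20r\eps + 21\xi + 20\,r^3 2^{8M'r/\xi}\delta/\xi^2$ asserted by the claim.

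There is no real obstacle here: all of the genuine work, namely the single-coordinate replacement that exchanges a sample from $\mu_{\alpha_{i+1}(U,V)}$ for a sample from $\mu$ without perceptibly changing $\Gamma$, has already been carried out in Claim~\ref{claim:hybrid}, where the universal external information bound on $\Gamma$ combined with Lemma~\ref{lem:step2} supplies the per-step estimate. The only subtlety worth verifying explicitly is the identification of the endpoints of the hybrid chain with the two distributions in the claim statement, which is immediate from the definitions of $\mathcal{X}^i, \mathcal{Y}^i$ and of the coupling used to produce $\mathcal{D}_H$ and $\tilde{\mathcal{D}}_H$ from the same $(U,V)$.
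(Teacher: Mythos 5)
Your proposal is correct and matches the paper's proof essentially verbatim: both identify $\Gamma_0 = \Gamma(U,X,V,Y)$ and $\Gamma_r = \Gamma(U,\tilde X,V,\tilde Y)$, then apply the triangle inequality to the hybrid chain and sum the per-step bound from Claim~\ref{claim:hybrid}.
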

\begin{proof}
  Since $\Gamma_0 = \Gamma(U,X,V,Y)$ and $\Gamma_r = \Gamma(U,\tilde{X},V,\tilde{Y})$, the statement of
  the claim follows by summing Claim~\ref{claim:hybrid} and using the triangle inequality.
\end{proof}

Set $\eta = 20r\eps + 21\xi + 20\frac{r^3 2^{8M'r/\xi}}{\xi^2}\delta$.
Using Claim~\ref{claim:hybrid_final}, since the error of $\Gamma_0$ in computing $(H,\mathcal{D}_H)$ is at most
$\eps''$, it follows that $\Gamma_r$ has error at most $\eps'' + \eta$ in computing
$(H,\tilde{\mathcal{D}}_H)$. Note that the distribution of $\tilde{X},\tilde{Y}$ is
completely independent of $U,V$, and $(U,V)$ is distributed according to $\mathcal{D}_{h}$. Thus, we may find $\tilde{x},\tilde{y}$
such that the protocol $\Pi'(U,V) \defeq \Gamma(U,\tilde{x},V,\tilde{y})$ has error at most $\eps'' + \eta$ in computing $(h_{\sf \land}, \mathcal{D}_{h})$.
Bounding $\eta \leq 20r\eps + 21\xi + 2^{33r M/\xi^2} \cdot \delta$ and plugging in $r$ gives the claimed bound.\qed

\subsubsection{Concluding the external information lower bound}
\begin{lemma}\label{lem:large_external}
  $\IC^{\sf external}_{\mathcal{D}_H}[H,1/16]\geq \Omega(m)$.
\end{lemma}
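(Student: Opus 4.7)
The plan is to derive the lower bound by contradiction, reducing a hypothetical low external information protocol for $(H,\mathcal{D}_H)$ to a low universal external information protocol for $(h_{\land},\mathcal{D}_h)$, and then invoking the known internal information lower bound from Theorem~\ref{thm:JKS}. First I would suppose there is a protocol $\Pi$ for $(H,\mathcal{D}_H)$ with error $\eps'=1/16$ and external information $M\leq \alpha m$ for some small absolute constant $\alpha>0$ to be chosen. Then I would apply Lemma~\ref{lem:drop_hint} with $f=f_{\sf hint}$ taken to be the bursting noise function, whose relative discrepancy parameters $(\eps,\delta)$ are $\eps=2^{-k}$ and $\log(1/\delta)\geq 2^k$ by Lemma~\ref{lem:GKR}, and with $\xi$ set to a small absolute constant (e.g.\ $\xi=10^{-4}$, so that $61\xi<1/100$).

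The key is then to verify that with our chosen parameters, the error of the produced protocol $\Pi'$ for $(h_{\land},\mathcal{D}_h)$ is strictly below $1/8$. Recall $m=2^{k/10}$, so $r=C(m)\log m = O(\sqrt{m}\log m) = O(k\,2^{k/20})$. The four error terms contribute: $\eps'=1/16$; $61\xi<1/100$; $20r\eps = O(k\,2^{-19k/20}) = o(1)$; and the critical term $2^{33rM/\xi^2}\delta$. For the last one, I need $33rM/\xi^2 \leq \log(1/\delta)/2 \geq 2^{k-1}$, which, plugging in our bounds, reduces to $O(\alpha\,k\,2^{3k/20}/\xi^2) \leq 2^{k-1}$; this holds comfortably for large $k$ regardless of the absolute constants. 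Thus the error of $\Pi'$ is at most, say, $1/16+1/100+o(1) < 1/8$.

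To conclude, I would invoke Lemma~\ref{lem:low_universal}, which upgrades the universal external information bound $4M/\xi$ of $\Pi'$ to a bound of $8M/\xi$ on the external information of $\Pi'$ with respect to $\mathcal{D}_h$. Then by Fact~\ref{fact:external_greater_internal}, the internal information of $\Pi'$ on $(h_{\land},\mathcal{D}_h)$ is also at most $8M/\xi$. But Theorem~\ref{thm:JKS} asserts $\IC^{\sf internal}_{\mathcal{D}_h}[h_{\land},1/8]\geq m/100$, so $8M/\xi\geq m/100$, i.e.\ $M\geq \Omega(\xi m)=\Omega(m)$, contradicting our assumption $M\leq \alpha m$ for sufficiently small $\alpha$.

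The main obstacle is the parameter balancing in the error bound of $\Pi'$, particularly the doubly exponential-looking term $2^{33rM/\xi^2}\delta$. This is precisely where the choice $m=2^{k/10}$ (hence $r=O(k\,2^{k/20})$ being tiny compared to $\log(1/\delta)=\Omega(2^k)$) matters: it ensures that even when $M$ is linear in $m$, the exponent $33rM/\xi^2$ is dominated by $\log(1/\delta)$. Everything else is routine bookkeeping; the construction was designed precisely so that the relative-discrepancy strength of the bursting noise function is exponentially stronger than what the reduction costs per hint coordinate.
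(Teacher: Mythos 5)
Your proof is correct and takes essentially the same approach as the paper's: invoke Lemma~\ref{lem:drop_hint} with the bursting noise function as $f_{\sf hint}$, balance the error terms using $m=2^{k/10}$ against the doubly-exponentially small $\delta$ from Lemma~\ref{lem:GKR}, and conclude via Theorem~\ref{thm:JKS}. You are in fact slightly more careful than the paper in one spot: you explicitly pass through Lemma~\ref{lem:low_universal} to convert the universal external information bound $4M/\xi$ into an external-information bound $8M/\xi$ on $\mathcal{D}_h$ before applying Fact~\ref{fact:external_greater_internal} and Theorem~\ref{thm:JKS}, whereas the paper's write-up elides that factor of~$2$ (it states $\Pi'$ has ``external information at most $m/200$'' when Lemma~\ref{lem:drop_hint} literally bounds the \emph{universal} external information by $m/200$). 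This is a cosmetic constant-bookkeeping point that does not affect the $\Omega(m)$ conclusion.
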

\begin{proof}
  Let $\xi = 1/1000$, and suppose we have a protocol $\Pi$ for $(H,\mathcal{D}_H)$ with external information at most $\xi m/800$ and error $\eps'$.
  Using Lemma~\ref{lem:drop_hint}, we find a protocol $\Pi'$ for $(h_{\land}, \mathcal{D}_h)$ with external information at most $m/200$, whose error is upper bounded
  as in Lemma~\ref{lem:drop_hint}. However, by Fact~\ref{fact:external_greater_internal} and Theorem~\ref{thm:JKS} we have that
  $\IC^{\sf external}_{\mathcal{D}_h}[h_{\land},1/8]\geq \IC^{\sf internal}_{\mathcal{D}_h}[h_{\land},1/8]\geq m/100$, so $\Pi'$ must have error
  at least $1/8$. Combining the upper and lower bounds on the error of $\Pi$ yields that
  \[
  \frac{1}{8}\leq \eps' + 61\xi + 20C(m)\log m \cdot \eps + 2^{33 m\cdot C(m)\log m /\xi^2} \cdot \delta
  \]
  By the choice of parameters, we have that $m=2^{k/10}$ and that $f$ has $(\eps,\delta)$ relative-discrepancy
  for $\eps = 2^{-k}$, $\delta = \eps/2^{2^k}$ (by Lemma~\ref{lem:GKR}), so
  $20C(m)\log m \cdot \eps + 2^{33 C(m)\log m M/\xi^2} \cdot \delta \leq 2^{-k/2} + 2^{m^2}\cdot \delta\leq 2^{-k/4}$.
  Thus we get that $\eps ' \geq \frac{1}{8} - 61\xi - 2^{-k/4}\geq \frac{1}{16}$, as desired.
\end{proof}

\section{Tightness and additional implications}\label{sec:tight}
In this section, we prove Theorem~\ref{thm:quadratic_tight}, which asserts that Theorem~\ref{thm:main} is nearly-tight for protocols with
constant error. We then prove that for protocols with zero-error, a better, arbitrarily large, separation holds in the form of
Corollary~\ref{corr:zero_error_external_separation}.
\subsection{Proof of Theorem~\ref{thm:quadratic_tight}}
 Fix $(f,\mu)$ as in the Theorem, denote $A = \lim_{q\rightarrow \infty} \frac{1}{q} \CC_{\mu^q}(f^q,0)$
 and take $q$ such that $\CC_{\mu^q}(f^q,0)\leq 2A q$. Thus, there is a zero-error protocol $\Pi$
 for $f^q$, whose average communication cost on $\mu^q$ is at most $2Aq$. Let $(X_1,Y_1),\ldots,(X_q,Y_q)\sim\mu$
 be independent. Then the independence implies that $\MI[\Pi; X_i,Y_i]\leq \MI[\Pi; X_i,Y_i~|~X_{<i}, Y_{<i}]$,
 and so by Fact~\ref{fact:MI_chainrule}
 \[
 \Expect{i\in[q]} {\MI[\Pi; X_i,Y_i]}
 \leq \frac{1}{q}\sum\limits_{i=1}^{q}{\MI[\Pi; X_i,Y_i~|~X_{<i}, Y_{<i}]}
 = \frac{1}{q}\MI[\Pi; X, Y]
 \leq
 \frac{1}{q}\HH[\Pi]
 \leq \frac{1}{q}\Expect{}{\card{\Pi}}\leq 2A.
 \]
 It follows that there is an $i\in[q]$ such that
 $\MI[\Pi; X_i,Y_i]\leq 2A$, and without loss of generality assume $i=1$ is such copy. We first handle
 the simple case in which $A\leq \eps\log(1/\eps)/10$. In this case, by the Data Processing inequality we have
 $\MI[\Pi; f(X_1,Y_1)]\leq 2A$, and since $\Pi$ has zero-error we get that $\HH[f(X_1,Y_1)]\leq 2A$.
 Thus, $f(X_1,Y_1)$ is close to constant, i.e. there is $b\in\power{}$ such that $\Prob{(X_1,Y_1)\sim\mu}{f(X_1,Y_1) = b}\geq 1-\eps$,
 and we have a trivial protocol for $f$ (in particular $\MI^{\sf external}_{\mu}[f,\eps] = 0$). Therefore, we may assume for the
 rest of the proof that $A\geq \eps\log(1/\eps)/10$.

 Define the set of good tuples $G = \sett{(x_1,y_1)}{\DKL{\Pi_{X_1 = x_1, Y_1 = y_1}}{\Pi}\leq 2A/\eps}$, and note that
 using Fact~\ref{fact:mutual_div_KL} and Markov's inequality yields that $\mu(G)\geq 1-\eps$.

 Denote $X = (X_1,\ldots,X_q)$, $Y = (Y_1,\ldots,Y_q)$.
 For a transcript $\pi = (a,b)$ where $a = (a_1,\ldots,a_m)$ are the messages of Alice and $b = (b_1,\ldots,b_m)$ are the
 messages of Bob, and inputs $x,y$ define
 \[
  P_{A}^x(a,b) = \prod\limits_{j< m} \cProb{(X,Y)\sim \mu}{A_{<j} = a_{<j}, B_{<j} = b_{<j}, X = x}{A_j = a_j},
 \]
 and similarly
 \[
  P_{B}^y(a,b) = \prod\limits_{j< m} \cProb{(X,Y)\sim \mu}{A_{\leq j} = a_{\leq j}, B_{<j} = b_{<j}, Y = y}{B_j = b_j}.
 \]

 We will use the protocol $\Pi$ to construct a protocol $\Pi'$ for $(f,\mu)$, whose input is $(x_1,y_1)$, but first let us introduce
 some terminology and make some observations. We say a transcript $\pi = (a,b)$ of $\Pi$ is compatible with
 Alice, if there exists $(x_2,\ldots,x_q)$ such that $P_{A}^x(\pi)>0$ for $x = (x_1,\ldots,x_n)$, and analogously for Bob. Note that
 if $\pi$ is compatible with both Alice and Bob, then there are extensions $x$ of $x_1$, and $y$ of $y_1$, such that the probability that
 $\Pi(x,y) = \pi$ is $P_{A}^x(\pi)P_{B}^y(\pi)>0$. Since $\Pi$ has
 zero-error, it means that in that case the value of $f(x_1,y_1)$ is computed correctly in transcript $\pi$.

 For each $\pi$, let $R_{\pi}$ be the set of $(x_1,y_1)$ for which $\pi$ is compatible with both Alice and Bob,
 and note that $R_{\pi}$ is a monochromatic rectangle of $f$.
 We say a transcript $\pi$ is a $0$-transcript if the value of $f(x_1,y_1)$ on $R_{\pi}$ is $0$, and say it is a $1$-transcript
 if the value of $f(x_1,y_1)$ on $R_{\pi}$ is $1$. Note that if $\pi_0$ is a $0$-transcript, and $\pi_1$ is a $1$-transcript,
 then $R_{\pi_0}$ and $R_{\pi_1}$ are disjoint.

 We need the following claim, asserting that a randomly chosen transcript is compatible with both players with noticeable probability.
 \begin{claim}\label{claim:cover}
 Let $(x_1,y_1)\in G$. Then
   \[
   \Prob{\pi\sim \Pi(X_1,Y_1)}{\pi\text{ is compatible with both Alice and Bob on inputs $(x_1,y_1)$}}\geq  2^{-2A/\eps - 17}.
   \]
 \end{claim}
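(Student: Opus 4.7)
The plan is to deduce the bound from the hypothesis $(x_1,y_1)\in G$ via a standard support-mass consequence of the KL inequality $\DKL{\Pi|_{X_1=x_1, Y_1=y_1}}{\Pi}\leq 2A/\eps$. Write $p$ for the unconditional distribution of the transcript $\Pi(X,Y)$ with $(X,Y)\sim \mu^q$ (this is what $\Pi(X_1,Y_1)$ refers to, with the remaining $q-1$ copies still averaged out under $\mu^{q-1}$), and write $p_{x_1,y_1}$ for the conditional distribution $\Pi\,|\,X_1=x_1, Y_1=y_1$.

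The first step is to show that every $\pi\in\supp(p_{x_1,y_1})$ lies in $S$, i.e.\ is compatible with both Alice on $x_1$ and Bob on $y_1$. If $p_{x_1,y_1}(\pi)>0$, then there exist $(x_{-1}, y_{-1})$ of positive $\mu^{q-1}$-probability with $P_A^{(x_1,x_{-1})}(\pi)\cdot P_B^{(y_1,y_{-1})}(\pi) > 0$; the extensions $x_{-1}$ and $y_{-1}$ separately witness compatibility of $\pi$ with Alice and with Bob. Hence $\{\pi\in S\} \supseteq \{\pi\in\supp(p_{x_1,y_1})\}$.

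The second step is the standard support-mass inequality: for any distribution $q$ absolutely continuous with respect to $p$, one has $p(\supp(q))\geq 2^{-\DKL{q}{p}}$. This follows from the decomposition
\[
0 \;\leq\; \DKL{q}{p|_{\supp(q)}} \;=\; \DKL{q}{p} + \log\!\left(\frac{1}{p(\supp(q))}\right),
\]
where $p|_{\supp(q)}$ denotes $p$ renormalized to its support, as a direct expansion of the definition of KL-divergence shows. Applied with $q = p_{x_1,y_1}$ and $\DKL{p_{x_1,y_1}}{p}\leq 2A/\eps$, this yields
\[
\Pr_{\pi\sim p}[\pi\in S] \;\geq\; p(\supp(p_{x_1,y_1})) \;\geq\; 2^{-2A/\eps},
\]
comfortably clearing the claimed lower bound of $2^{-2A/\eps - 17}$.

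The only genuine step is the support containment, and it is almost tautological once the factorization $p_{x_1,y_1}(\pi) = \Expect{(X_{-1},Y_{-1})\sim \mu^{q-1}}{P_A^{(x_1,X_{-1})}(\pi)\,P_B^{(y_1,Y_{-1})}(\pi)}$ is unpacked. I expect no serious obstacle; the additive $-17$ slack in the claim suggests that the author may instead combine Fact~\ref{fact:KL_abs_val} with a Markov step on the log-likelihood ratio (which would produce the looser exponent $2^{-4A/\eps - 17}$ and justify the extra constants), but the clean support-mass argument yields the stronger coefficient and leaves ample room.
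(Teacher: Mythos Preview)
Your approach is correct and cleaner than the paper's, modulo one sign slip: the displayed identity should read $\DKL{q}{p|_{\supp(q)}} = \DKL{q}{p} + \log p(\supp(q))$ (renormalizing $p$ onto $\supp(q)$ can only \emph{decrease} the divergence from $q$); nonnegativity of the left side then gives $\log\bigl(1/p(\supp(q))\bigr)\leq \DKL{q}{p}$, which is exactly your stated conclusion $p(\supp(p_{x_1,y_1}))\geq 2^{-2A/\eps}$. The support-containment step is right as written.

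The paper takes precisely the typical-set route you anticipated: it defines $H=\{\pi: 0<p(\pi\mid x_1,y_1)\leq 2^{2(A/\eps+8)}p(\pi)\}$, invokes Fact~\ref{fact:KL_abs_val} to obtain $p_{x_1,y_1}(H)\geq 1/2$, and then bounds $p(H)\geq 2^{-2(A/\eps+8)}\,p_{x_1,y_1}(H)\geq 2^{-2A/\eps-17}$. Your one-line convexity argument avoids Fact~\ref{fact:KL_abs_val} altogether, yields the sharper bound $2^{-2A/\eps}$, and explains why the $-17$ in the claim is pure slack.
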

 \begin{proof}
  Define $p(\pi) = \Prob{(X_1,Y_1)\sim \mu}{\Pi(X_1,Y_1) = \pi}$ and
  $p(\pi~|~x_1,y_1) = \Prob{}{\Pi(x_1,y_1) = \pi}$, and consider
  \[
  H = \sett{\pi\in{\sf Supp}(\Pi)}{ 0<p(\pi~|~x_1,y_1)\leq 2^{2(A/\eps+8)} p(\pi)}.
  \]
  By Fact~\ref{fact:KL_abs_val} and the definitions of $G$ and $H$,
  \begin{align*}
  \sum\limits_{\pi\not\in H}p(\pi~|~x_1,y_1)\cdot 2\left(\frac{A}{\eps}+8\right)
  \leq
  \sum\limits_{\pi}{p(\pi~|~x_1,y_1)\card{\log\left(\frac{p(\pi~|~x_1,y_1)}{p(\pi)}\right)}}
  &\leq \DKL{\Pi_{X_1 = x_1, Y_1 = y_1}}{\Pi} + 8\\
  &\leq\frac{A}{\eps} + 8,
  \end{align*}
  so
  $\sum\limits_{\pi\in H} p(\pi~|~x_1,y_1)\geq 1/2$. It follows that
  \begin{align*}
   \Prob{\pi\sim \Pi(X_1,Y_1)}{\pi\text{ is compatible with both Alice and Bob on inputs $(x_1,y_1)$}}
   &\geq \sum\limits_{\pi\in H} p(\pi)\\
   &\geq \sum\limits_{\pi\in H} 2^{-2(A/\eps+8)}p(\pi~|~x_1,y_1)\\
   &\geq 2^{-2A/\eps - 17}.\qedhere
  \end{align*}
 \end{proof}

 The idea of $\Pi'$ is to consider a long enough list of possible transcripts of $\Pi$, such that almost all input tuples $(x,y)\in G$
 have a compatible transcript in the list. Thus, we will have a collection of $L$ monochromatic rectangles that cover most of the mass
 of $\mu$, and we may invoke the classical argument from~\cite{AUY} that constructs a protocol from monochromatic rectangles. We outline
 the argument below for completeness.

 Using Claim~\ref{claim:cover}, there is a list of $L = 2^{2A/\eps + 17} \log(1/\eps) = 2^{O(A/\eps)}$
 possible transcripts of $\Pi$, $\pi_1,\ldots,\pi_L$, and $G' \subseteq G$ with $\mu(G')\geq \mu(G)-\eps\geq 1-2\eps$,
 such that for each $(x_1,y_1)\in G'$ there is $i\in[L]$ such that $\pi_i$ is compatible with both Alice and Bob on $(x_1,y_1)$.
 We now describe $\Pi'(X_1,Y_1)$. The players will try to convince themselves that $f(x_1,y_1) = 0$,
 and for that they will try to eliminate from the list all $1$-transcripts. Formally, at each step of the protocol there is an active set of $1$-transcripts,
 $S\subseteq[L]$, and the goal of the players at each step is either to shrink the size of $S$ by factor $2$, or learn the value of
 $f(x_1,y_1)$.

 Suppose that both players are compatible with a $0$-transcript $\tau_0$ from the list, and consider all $1$-transcripts $\tau_1$ in $S$. Note
 that since $R_{\tau_0}$ and $R_{\tau_1}$ are disjoint, either their $x$-range is disjoint, or their $y$-range is disjoint. In particular, it follows
 that either for the $x$-range or $y$-range -- say $x$-range, at least half of the $R_{\tau_1}$'s are disjoint from
 $R_{\tau_0}$ in it. In this case, we say that $\tau_0$ eliminates half of the $\tau_1$'s from the point of view of Alice.

 Thus, at each step, the player considers all $0$-transcripts from the list that are compatible with their input, and checks whether there is
 at least one that eliminates half of the $1$-transcripts in $S$ from their view.
 \begin{enumerate}
   \item If there is, the player chooses one such $\tau_0$ arbitrarily, and sends
    the index of that transcript in the list. All of the $1$-transcripts that are inconsistent with $\tau_0$ from that player's point of view
    are discarded from $S$. If $\tau_0$ is also compatible with the other player, the players declare the output to be $0$ and terminate, and otherwise
    the players continue in the protocol.
   \item  If there is no such $\tau_0$, the player indicates so and passes the turn to the other player.
 \end{enumerate}
 If both players passed as in item 2 above in consecutive turns, the players declare the output of the function to be $1$. Otherwise,
 the protocol continues until $S$ becomes empty, in which case the players declare the output to be $0$.

 \paragraph{Analyzing the communication complexity.}
 Note that in each turn, the number of bits communicated is at most $O(A/\eps)$, so to bound the communication complexity of $\Pi'$ we need
 to bound the number of turns. Since the list $S$ starts off at size $2^{O(A/\eps)}$ and shrinks by factor $2$ at least once every $3$ turns,
 it follows that the number of rounds is $O(A/\eps)$, and so the communication complexity of $\Pi'$ is at most $O\left(\frac{A^2}{\eps^2}\right)$.
 In particular, we get that $\MI^{\sf external}_{\mu}[\Pi']\leq O\left(\frac{A^2}{\eps^2}\right)$.

 \paragraph{Correctness.} We claim the $\Pi'(x_1,y_1)$ is correct whenever $(x_1,y_1)\in G'$, and since $\mu(G')\geq 1-2\eps$ it follows that
 it has error at most $2\eps$ when ran on $\mu$. Indeed, note that if $(x_1,y_1)\in G'$ and $f(x_1,y_1) = 0$, then there is $\tau_0$ in the
 list that is compatible with both of them, hence the players will never pass in two consecutive turns and the output of the protocol will be
 $0$. If $f(x_1,y_1) = 1$, then there is $\tau_1$ a $1$-transcript that is compatible with both players, hence it will never be removed from $S$.
 Thus, the output of the protocol will be $1$.

\subsection{Proof of Corollary~\ref{corr:zero_error_external_separation}}
 Fix $k$, and pick $(f,\mu)$ from Theorem~\ref{thm:main}. Define $f'\colon\power{n+1}\times\power{n+1}\to\power{}$ in the following way:
 view the input $(x',y')$ as $x' = (x,a)$, $y' = (y,b)$ where $(x,y)\in \power{n}\times\power{n}$, and define
 $f'(x,y) = f(x,y)$ if $a = b = 1$ and $0$ otherwise.
 Set $p = \frac{1}{\sqrt{k}\log^2 k}$ and consider the distribution $\nu$ that puts $(1-p)$ mass uniformly on
 $x' = (x,a), y'=(y,b)$ such that $a = b = 0$, and puts the rest of its mass on $x',y'$ where $a = b = 1$, i.e.
 $\nu(x',y') = p\mu(x,y)$ for any such $x',y'$. We claim that $(f,\nu)$ satisfies the properties asserted by Corollary~\ref{corr:zero_error_external_separation}.

  Let $\Pi$ be a zero-error protocol for $(f,\nu)$, and consider the random variables $(D,X' = (X,A),Y' = (Y,B))$ where $D = 0$ with probability $p$ and otherwise $D=1$.
  If $D = 0$ we sample $(X,Y)\sim\mu$ and set $A = B = 1$, and otherwise we sample $X$ and $Y$ uniformly form $\power{n}$ and set $A = B = 0$. Then
  \[
  \MI^{\sf external}_{\nu}[\Pi]
  =\MI[\Pi; X,Y]
  \geq
  \MI[\Pi; X',Y'~|~D] - 1
  \geq
  p \MI[\Pi; X',Y'~|~D = 0] - 1,
  \]
  so $\IC^{\sf external}_{\nu}[f',0]\geq p\IC^{\sf external}_{\mu}[f,1/16] - 1\geq \Omega(\sqrt{k}/\log^2 k)$.

  As for a zero-error protocol
  with $O(1)$ amortized communication on $\nu$, by assumption there is a protocol $\Pi_q$ solving $(f^{m},\mu^m)$ with
  zero-error and expected communication complexity $O(m\sqrt{k}\log^2 k) + o(m)= O(m/p) + \alpha(m)$, where $\alpha(m)$ is monotone and
  $\alpha(m)/m\rightarrow 0$.

  Consider the following protocol $\Pi$
  for $(f^q, \nu^q)$. Denote the inputs by $(x'^1,y'^1),\ldots,(x'^q,y'^q)$, and write $x'^i = (x^i, a^i)$ and $y'^i = (y^i, b^i)$.
  Alice first identifies all copies $i$ such that $a^{i} = 0$,
  and communicates them to Bob, so that the answer to that copy is $0$. Similarly, Bob identifies
  all copies $i$ where $b^{i} = 0$, and communicates them to Alice. Let $A$ be the set of $i$'s such that
  $a^{i} = b^i = 1$ (note that both Alice and Bob know $A$), and set $m = \card{A}$. The players use the protocol $\Pi_m$ to solve the copies of $f$
  corresponding to $A$.

  \paragraph{Correctness.} It is clear that the protocol $\Pi$ is correct and has zero error.
  \paragraph{Amortized communication cost.} Let $Q = \card{A}$ be a random variable. Note that the number of bits transmitted in the phase of the protocol
  in which trivial copies are identified, is $O(q)$ (corresponding to encodings of two subset of $[q]$).
  Thus, the expected communication cost of $\Pi$ is at most
  $O(q) + \Expect{}{\CC_{\mu'^Q}(\Pi_{Q})}$. Conditioning on $Q$, we have that $\CC_{\mu'^Q}(\Pi_{Q}) =  O(\card{Q}/p) + \alpha(\card{Q})\leq O(\card{Q}/p)+\alpha(q)$.
  Note that
  $\Expect{}{\card{Q}} = p q$, so we get that
  \[
  \Expect{}{\CC_{\mu'^Q}(\Pi_{Q})}
  \leq \Expect{}{O(\card{Q}/p) + \alpha(q)}
  \leq O(q) + \alpha(q).
  \]
  In conclusion, the expected communication cost of $\Pi$ is $O(q) + \alpha(q)$, and since $\alpha(q)/q\rightarrow 0$ we conclude that the amortized communication cost of $\Pi$ is
  $O(1)$.

\bibliographystyle{alpha}
\bibliography{ref}
\appendix
\section{Missing proofs}

\subsection{Upper bounding amortized zero-error communication complexity}
\label{sec:missing}

\begin{thm}\label{thm:upper_bd_easy}
  For every total function $f\colon\power{n}\times\power{n}\to\power{}$, it holds that
  \[
  \lim_{q\rightarrow \infty} \frac{1}{q} \CC_{\mu^q}(f^q,0)\leq \IC^{\sf external}_{\mu}[f,0].
  \]
\end{thm}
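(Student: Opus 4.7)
The plan is to start from a near-optimal zero-error protocol $\pi$ for $(f,\mu)$ with $\MI^{\sf external}_\mu[\pi] \le M+\delta$, where $M = \IC^{\sf external}_\mu[f,0]$ and $\delta > 0$ is arbitrary, and (by standard derandomization of public randomness, together with the observation that in the zero-error regime one can also absorb private randomness without blowing up external information) to reduce to the case where $\pi$ is deterministic. For a deterministic $\pi$, the transcript $T = \pi(X,Y)$ is a function of the inputs, so $\HH[T \mid X,Y] = 0$ and therefore $\MI^{\sf external}_\mu[\pi] = \HH[T]$.

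For the $q$-fold task $f^q$ on $\mu^q$, I would run $\pi$ in parallel on each of the $q$ input pairs $(X_j, Y_j)_{j\leq q}$, which produces a zero-error protocol with transcript $T^q = (T_1,\ldots,T_q)$ consisting of i.i.d.\ copies of $T$, and joint entropy $q \HH[T]$ by independence. The key idea is to compress round-by-round using a prefix-free code matched to the conditional distribution of the round-messages given the partial transcript (which both players can compute from the history alone). In each synchronized round the current speaker holds a $q$-tuple of round-messages whose conditional distribution given the history factors across copies; encoding this $q$-tuple with a Shannon code of length $\lceil\log(1/p)\rceil$ costs at most $q\cdot \HH[M^{\mathrm{round}}\mid T_{<}] + 1$ bits in expectation, where $\HH[M^{\mathrm{round}}\mid T_{<}]$ is the single-copy conditional entropy of the round's message given the history. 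The receiver can always decode, since the code is built from a distribution depending only on the shared history, and the output of each simulated copy agrees with $\pi$'s deterministic output, so the compressed protocol has zero error on $f^q$. Summing over rounds yields total expected communication at most $q \HH[T] + \Expect{}{D}$, where $D$ is the number of parallel rounds.

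The main obstacle is ensuring that the $+1$ overhead per round amortizes to $o(1)$ per copy. One route is to replace round-by-round Shannon coding with arithmetic coding shared across all rounds and all copies, incurring only a single $O(1)$ flushing term. A simpler route is to observe that since $\pi$ has finite expected communication, the per-copy depth $D_j$ is a non-negative integer-valued random variable with finite mean, the $D_j$'s are i.i.d.\ across the $q$ copies, and so $\Expect{}{D} = \Expect{}{\max_{j\le q} D_j} = o(q)$. Either way, dividing by $q$ gives amortized expected communication at most $\HH[T] + o(1) = \MI^{\sf external}_\mu[\pi] + o(1) \le M + \delta + o(1)$. Sending $q \to \infty$ and then $\delta \to 0$ yields $\lim_{q\to\infty}\CC_{\mu^q}(f^q,0)/q \le M = \IC^{\sf external}_\mu[f,0]$, as required.
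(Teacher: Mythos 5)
There is a genuine gap, and it is in the very first step: the claim that ``in the zero-error regime one can also absorb private randomness without blowing up external information'' is false, and the theorem cannot be recovered once it is made. Consider $f(x,y)=x\oplus y$ on one bit each, with $\mu$ uniform. The private-coin protocol in which Alice sends $m_1=x\oplus r$ for a private uniform bit $r$, Bob replies $m_2=m_1\oplus y$, and Alice sends $m_3=m_2\oplus r=x\oplus y$ is zero-error and has external information $\MI[XY;T]=\HH[T]-\HH[T\mid XY]=2-1=1$. On the other hand, every deterministic zero-error protocol for this $f$ has leaves that are monochromatic rectangles for XOR, hence singletons, so its transcript entropy under $\mu$ is $2$. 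Thus the infimum defining $\IC^{\sf external}_{\mu}[f,0]$ is in general approached only by private-coin protocols, and fixing the private coins replaces $\MI[XY;T]$ by $\HH[T]$, which can be strictly larger. (Fixing \emph{public} coins is fine; it is the private coins that cannot be removed.) Consequently, even if the rest of your argument were carried out in full, it would only show $\lim_{q}\CC_{\mu^q}(f^q,0)/q\leq \inf_{\text{deterministic }\pi}\HH[\pi(X,Y)]$, which is a strictly weaker bound than the one claimed by the theorem.

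The compression part of your plan (per-round prefix coding against the distribution of the round's messages conditioned on the shared history, with the $+1$ overheads amortized via $\Expect{}{\max_j D_j}=o(q)$ for i.i.d.\ depths of finite mean) is sound for deterministic protocols and matches the paper's handling of the lower-order terms. What is missing is the tool that lets one do this for \emph{randomized} protocols: the paper has the speaker use the shared-randomness rejection-sampling lemma of~\cite{HJMR} (Lemma~\ref{lem:rejection_samp}) to transmit, at each round, a sample from $P=$ (next $q$-tuple of messages conditioned on the history \emph{and the speaker's input}) using the common reference distribution $Q=$ (same, conditioned on the history only), at expected cost $\DKL{P}{Q}+2\log(\DKL{P}{Q}+1)+O(1)$. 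Summing these divergences over rounds by the chain rule gives exactly $q\cdot\MI[XY;\Pi]=q\cdot\MI^{\sf external}_{\mu}[\Pi]$ rather than $q\cdot\HH[\Pi]$, and the logarithmic and constant overheads are controlled by Cauchy--Schwarz and the $\Expect{}{\max}$ bound. Your Shannon-coding step is precisely the deterministic special case in which the divergence degenerates to a conditional entropy; replacing it by the KL-based one-shot simulation is the essential missing ingredient.
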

For the proof, we need the following lemma due to~\cite{HJMR}.
\begin{lemma}\label{lem:rejection_samp}
  Let $P,Q$ be distributions such that $\DKL{P}{Q} < \infty$. Then there exists a sampling procedure $\mathcal{P}$, that
  on an input $(q_1,q_2,\ldots)$ consisting of a list of independent samples from $Q$, outputs an index $r^{\star}$ such that
  the distribution of $q_{r^{\star}}$ is $P$, and the expected length of $r^{\star}$ is at most
  \[
  \DKL{P}{Q} + 2\log(\DKL{P}{Q} + 1) + O(1).
  \]
\end{lemma}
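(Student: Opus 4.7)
Fix $\eta > 0$ and let $\pi^\star$ be a zero-error protocol for $(f,\mu)$ with $M := \MI^{\sf external}_\mu[\pi^\star] \le \IC^{\sf external}_\mu[f,0] + \eta$. Let $Q$ be the marginal distribution of the transcript $\pi^\star(X,Y)$ for $(X,Y)\sim \mu$, and for each input $(x,y)$ let $P_{x,y}$ be the posterior distribution of $\pi^\star(x,y)$. The plan is to compress a $q$-fold simulation of $\pi^\star$ using the rejection-sampling procedure of Lemma~\ref{lem:rejection_samp}. Because rejection sampling preserves the target distribution exactly, the simulation inherits the zero-error guarantee of $\pi^\star$, and the only thing to check is that the expected total communication is at most $qM + o(q)$. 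The key identity, Fact~\ref{fact:mutual_div_KL}, says that $\Expect{(x,y)\sim\mu}{\DKL{P_{x,y}}{Q}} = M$.

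View $q$ independent executions of $\pi^\star$ as a single protocol for $f^q$ on $\mu^q$. For a specific input $(\vec x, \vec y) \in (\power{n}\times\power{n})^q$ the posterior factorizes as $P_{\vec x,\vec y} = \prod_{i=1}^q P_{x_i,y_i}$ and the prior is $Q^{\otimes q}$. Using shared public randomness, the players draw i.i.d.\ samples $Z_1, Z_2, \ldots$ from $Q^{\otimes q}$ and apply the procedure $\mathcal P$ of Lemma~\ref{lem:rejection_samp} with target $P_{\vec x,\vec y}$ to obtain an index $r^\star$ such that $Z_{r^\star}$ is distributed as $P_{\vec x,\vec y}$; one of them then sends the binary encoding of $r^\star$ to the other. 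Both players extract $f(x_i,y_i)$ from the $i$-th coordinate of the simulated joint transcript $Z_{r^\star}$.

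For correctness, since $Z_{r^\star}$ has exactly the distribution $P_{\vec x,\vec y}$ and $\pi^\star$ has zero error on every input in $\supp(\mu)$, the extracted output equals $f(x_i,y_i)$ with probability $1$ for each $i$. For the cost, tensorization of KL divergence gives $\DKL{P_{\vec x,\vec y}}{Q^{\otimes q}} = \sum_{i=1}^q \DKL{P_{x_i,y_i}}{Q}$, so Lemma~\ref{lem:rejection_samp} bounds the expected length of $r^\star$ by $\sum_i \DKL{P_{x_i,y_i}}{Q} + 2\log(\sum_i \DKL{P_{x_i,y_i}}{Q}+1) + O(1)$. Taking expectation over $(\vec x, \vec y)\sim\mu^q$, applying Fact~\ref{fact:mutual_div_KL} to the first term and Jensen's inequality to the concave logarithmic term, the expected total communication is at most $qM + 2\log(qM+1) + O(1)$. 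Dividing by $q$ and sending $q\to\infty$ gives amortized cost at most $M \le \IC^{\sf external}_\mu[f,0] + \eta$, and since $\eta > 0$ was arbitrary the theorem follows.

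The main technical step that must be cleared is the translation of the centralized procedure $\mathcal P$ of Lemma~\ref{lem:rejection_samp} into a two-party protocol: the target $P_{\vec x,\vec y}$ depends on both players' inputs and cannot be evaluated by either alone. However, writing $P_{x,y}(\pi) = P_A^x(\pi)\cdot P_B^y(\pi)$ where each factor is locally computable by the respective player, Alice and Bob can each maintain their own side of the likelihood ratios on every candidate sample and coordinate (with short additional communication and shared randomness) to agree on the accepted index; this is precisely the two-party implementation underlying the external-compression result of~\cite{HJMR} from which Lemma~\ref{lem:rejection_samp} is extracted. With this translation in hand, the bound on $\Expect{}{|r^\star|}$ becomes a bound on expected communication and the accounting above goes through verbatim.
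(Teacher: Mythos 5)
Your proposal does not prove the statement it was asked to prove. Lemma~\ref{lem:rejection_samp} is a purely information-theoretic claim about two fixed distributions $P$ and $Q$: one must \emph{construct} a sampling procedure $\mathcal{P}$ that, given i.i.d.\ samples $q_1,q_2,\ldots$ from $Q$, outputs an index $r^{\star}$ with $q_{r^{\star}}\sim P$ exactly, and one must \emph{bound} $\Expect{}{\card{r^{\star}}}$ by $\DKL{P}{Q}+2\log(\DKL{P}{Q}+1)+O(1)$. Your text never does either of these things; instead it takes the lemma as a black box ("apply the procedure $\mathcal P$ of Lemma~\ref{lem:rejection_samp}") and uses it to argue the downstream statement $\lim_q \CC_{\mu^q}(f^q,0)/q\le \IC^{\sf external}_{\mu}[f,0]$, i.e.\ Theorem~\ref{thm:upper_bd_easy}. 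That is a different result: the lemma involves no protocols, no function $f$, and no input distribution $\mu$. What is actually needed is the rejection-sampling argument of~\cite{HJMR}: scan the samples in order, accept $q_i$ with probability proportional to the part of $P$'s mass at $q_i$ not yet covered by previous rounds, verify that the accepted sample is distributed exactly according to $P$, and then show that the probability that the accepted index exceeds roughly $2^{\DKL{P}{Q}+t}$ decays in $t$, so that $\Expect{}{\log r^{\star}}\le \DKL{P}{Q}+O(\log(\DKL{P}{Q}+1))$ and a prefix-free encoding of $r^{\star}$ costs the claimed amount. None of this machinery appears in your write-up, so the statement is left entirely unproved.

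Even read as a proof of Theorem~\ref{thm:upper_bd_easy}, the argument has the gap you yourself flag at the end: the one-shot target $P_{\vec x,\vec y}$ over whole $q$-fold transcripts depends on both inputs, so neither player can run the single-party procedure of Lemma~\ref{lem:rejection_samp} on it, and "coordinate with short additional communication" is precisely the hard interactive-compression step, not a routine translation. The paper avoids this by simulating round by round: at each step the speaking player fully knows the conditional distribution of the next bits across all copies given their own input and the transcript so far, so the one-party sampling lemma applies directly, and the per-round costs are summed via the chain rule together with a separate $o(k)$ bound on the expected maximum number of rounds. If your goal is the lemma itself, start from the HJMR rejection-sampling construction and its index-length analysis rather than from any communication protocol.
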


\begin{proof}[Proof of Theorem~\ref{thm:upper_bd_easy}]
  Let $\Pi$ be a zero-error protocol for $f$. We show that there is a sequence of zero-error protocols
  $(\Gamma_k)_{k\in\mathbb{N}}$ for $(f^k)_{k\in\mathbb{N}}$, such that $\CC[\Gamma_k] = k\MI^{\sf external}_{\mu}[\Pi] + o(k)$,
  from which the theorem clearly follows.

  Let $M = \CC_{\mu}(\Pi) < \infty$.
  Let $k$ be large, let $(X_1,Y_1),\ldots,(X_k,Y_k)\sim \mu$ be independently sampled (i.e. an input for $f^{k}$),
  and denote by $(x_1,y_1),\ldots,(x_k,y_k)\sim\mu$ a realization of them.
  The protocol $\Gamma_k$ we construct works by rounds, wherein in round $t$, the player that speaks in that round
  in $\Pi$ also speaks. The goal of that player is to communicate to the other player all of the bits that would be
  sent on that round in $\Pi(X_1,Y_1),\ldots,\Pi(X_k,Y_k)$ conditioned on the transcripts so far and the input of
  the speaking player.

  We describe $\Gamma_k$ more precisely now. For each $i\in [k]$ and step $t$, the players maintain a transcript of $\Pi(X_i,Y_i)$
  up to step $t$, which we denote by $\pi_{i,t}$. Thus, denoting $\pi_t = (\pi_{1,t},\ldots,\pi_{k,t})$, the
  goal of the players is to correctly sample $\pi_{t+1}$ conditioned on $\pi_{t}$, and to do so with low expected communication complexity low.
  Towards this end, we denote by $\Pi(X_i,Y_i)_{j}$
  the $j$th bit exchanged in the protocol $\Pi$ on $X_i,Y_i$, $X = (X_1,\ldots,X_k)$ and $Y = (Y_1,\ldots,Y_k)$.
  Denote
  \begin{align*}
  D(\pi_{t},t,x,y) =
  \mathrm{D}_{\text{KL}}\Big(\big(
  &\Pi(X_1,Y_1)_{t+1},\ldots,\Pi(X_k,Y_k)_{t+1}\big)|_{X=x, Y=y, \Pi(X,Y)_{\leq t} = \pi_t}\parallel\\
  &\big(\Pi(X_1,Y_1)_{t+1},\ldots,\Pi(X_1,Y_1)_{t+1}\big)|_{\Pi(X,Y)_{\leq t} = \pi_t}\Big).
  \end{align*}
  We now argue that the players can sample $\pi_{t+1}$ conditioned on $\pi_{i,t}$ by expectedly communicating at most
  $O(1) + D(\pi_{t},t) + 2\log(D(\pi_{t},t) + 1)$ bits. Note that regardless of what player speaks, the distribution
  $\big(\Pi(X_1,Y_1)_{t+1},\ldots,\Pi(X_k,Y_k)_{t+1}\big)|_{\Pi(X,Y)_{\leq t} = \pi_t}$ is known to both players at the $(t+1)$-step,
  and they therefore can think of their shared string of randomness as a list of samples from it. Assume without loss of generality that Alice speaks.
  Since her next message only depends on her input and the transcript so far, the distribution
  $\Pi(X_1,Y_1)_{t+1},\ldots,\Pi(X_k,Y_k)_{t+1}\big)|_{X=x, Y=y, \Pi(X,Y)_{\leq t} = \pi_t}$
  is identical to the distribution
  $\Pi(X_1,Y_1)_{t+1},\ldots,\Pi(X_k,Y_k)_{t+1}\big)|_{X=x, \Pi(X,Y)_{\leq t} = \pi_t}$, and
  in particular she knows it.
  Therefore, Alice can use the sampling procedure from Lemma~\ref{lem:rejection_samp} to pick an index $r^{\star}$ in their
  string of randomness that refers to the $r^{\star}$th sample in their shared randomness string, such that this sample
  is distributed according to $\Pi(X_1,Y_1)_{t+1},\ldots,\Pi(X_k,Y_k)_{t+1}\big)|_{X = x, Y=y, \Pi(X,Y)_{\leq t} = \pi_t}$.
  Alice can communicate
  $r^{\star}$ to Bob, and then they can continue. We note that the correctness of the sampling, as well as the expected communication
  cost of the $(t+1)$th step, trivially follow from Lemma~\ref{lem:rejection_samp}.

  Thus, $\Gamma_k$ is a zero-error protocol for $f^{k}$, and we next upper bound its expected communication complexity.
  Let $Z_1,\ldots,Z_k$ denote the number of rounds in the execution of $\Pi$ on $(X_1,Y_1),\ldots,(X_k,Y_k)$ respectively,
  and let $T = \max(Z_1,\ldots,Z_k)$.
  By our analysis of each round, we have that
  \begin{equation}\label{eq:10}
  \CC_{\mu^k}(\Gamma_k)\leq
  \sum\limits_{t=0}^{\infty}
  \Expect{x,y}{\Expect{\pi_{t}}{
  1_{T\geq t}\left(O(1) + D(\pi_{t},t,,x,y) + 2\log(D(\pi_{t},t,,x,y) + 1)\right)}}.
  \end{equation}
  Clearly, the first term contributes at most $O(\Expect{}{T})$, which by Claim~\ref{claim:max} is $o(k)$.
  Next, we upper bound the contribution from the second term, which will also allow us to bound the last term using Jensen's inequality.
  To upper bound the contribution of the second term in~\eqref{eq:10}, note that by the chain-rule for conditional KL-divergence we have that
  \begin{align*}
  &\Expect{x_i,y_i}{\sum\limits_{t=0}^{\infty}\Expect{\pi_{i,t}}{D(\pi_{t},t,x,y)}}\\
  &\qquad=\Expect{\vec{x},\vec{y}}{\DKL{\big(\Pi(X_1,Y_1),\ldots,\Pi(X_k,Y_k)\big)|_{X = x,Y=y}}{\Pi(X_1,Y_1),\ldots,\Pi(X_k,Y_k)}}.
  \end{align*}
  By Fact~\ref{fact:mutual_div_KL}, this is equal to $\MI[X,Y; \Pi(X_1,Y_1),\ldots,\Pi(X_k,Y_k)]$, which by
  independence between the $(X_i,Y_i)$ for different $i$'s, is equal to $k\MI[X_1,Y_1; \Pi(X_1,Y_1)] = k\MI^{\sf external}_{\mu}[\Pi]$.

  For the second term, we note that $\log(1+z)\leq \sqrt{z}$ and so
  \begin{align*}
  \sum\limits_{t=0}^{\infty}
  \Expect{x,y}{
  \Expect{\pi_{t}}{2\cdot 1_{T\geq t}\log(D(\pi_{t},t,x,y))}}
  &\leq 2\sum\limits_{t=0}^{\infty}\Expect{x,y}{\Expect{\pi_{t}}{1_{T\geq t}\sqrt{D(\pi_{t},t,x,y)}}}\\
  &\leq 2\sqrt{\sum\limits_{t=0}^{\infty}\Expect{x_i,y_i}{\Expect{\pi_{t}}{1_{T\geq t}}}}
  \sqrt{\sum\limits_{t=0}^{\infty}\Expect{x,y}{\Expect{\pi_{t}}{D(\pi_{t},t,x,y)}}},
  \end{align*}
  where the last inequality is by Cauchy-Schwarz. The first term in this product is $\sqrt{\E[T]} = o(\sqrt{k})$,
  and the second term in this product is $\sqrt{k \MI^{\sf external}_{\mu}[\Pi]}$, so overall this expression is $o(k)$.

  Plugging everything into~\eqref{eq:10}, we see that the expected communication complexity of $\Gamma_k$ is
  at most $k\MI^{\sf external}_{\mu}[\Pi] + o(k)$, as desired.
\end{proof}

\begin{claim}\label{claim:max}
  Suppose $Z$ is a non-negative, integer random variable with finite expectation, and let
  $Z_1,\ldots,Z_k$ be independent copies of $Z$. Then
  \[
  \lim_{k\rightarrow\infty}\frac{1}{k}\Expect{}{\max(Z_1,\ldots,Z_k)} = 0.
  \]
\end{claim}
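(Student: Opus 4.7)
The plan is a standard truncation argument. Let $M_k := \max(Z_1,\ldots,Z_k)$. The key observation is that for any threshold $T \geq 0$, we have the pointwise bound
\[
M_k \leq T + \sum_{i=1}^k Z_i \cdot \mathbf{1}[Z_i > T],
\]
since either the maximum is at most $T$, in which case $M_k \leq T$, or else the coordinate achieving the maximum contributes exactly $Z_i > T$ to the sum, which already exceeds $M_k$. Taking expectations and using linearity together with the fact that the $Z_i$ are identically distributed yields
\[
\E[M_k] \leq T + k \cdot \E[Z \cdot \mathbf{1}[Z > T]],
\]
so that $\frac{1}{k}\E[M_k] \leq \frac{T}{k} + \E[Z \cdot \mathbf{1}[Z > T]]$.

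To finish, I fix $\eps > 0$. Since $\E[Z] < \infty$, the tail expectation $\E[Z \cdot \mathbf{1}[Z > T]]$ tends to $0$ as $T \to \infty$ (by monotone/dominated convergence, or directly from $\sum_{z} z \cdot \Pr[Z = z] < \infty$). Hence I can choose $T_0$ large enough that $\E[Z \cdot \mathbf{1}[Z > T_0]] < \eps/2$, and then choose $k_0$ large enough that $T_0/k < \eps/2$ for all $k \geq k_0$. This gives $\frac{1}{k}\E[M_k] < \eps$ for all $k \geq k_0$, establishing the claim. There is no real obstacle here; the only subtlety worth noting is that the pointwise bound above is sharper than the naive union/sum bound $M_k \leq \sum_i Z_i$ (which only yields $\E[M_k] \leq k\E[Z]$), and it is exactly this sharpening that allows the truncation parameter $T$ to be chosen independently of $k$.
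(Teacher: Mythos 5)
Your proof is correct. It is the same truncation idea as the paper's, executed with slightly different bookkeeping: the paper writes $\E[\max_i Z_i] = \sum_{r\geq 1}\bigl(1-\Pr[Z<r]^k\bigr)$, splits the sum at a threshold $R$, and bounds the far tail via $1-(1-p)^k\leq kp$; you instead establish the pointwise domination $\max_i Z_i \leq T + \sum_i Z_i\mathbf{1}[Z_i>T]$ and take expectations. The two yield essentially the same estimate, since $\sum_{r\geq R}\Pr[Z\geq r] = \E[(Z-R+1)^+]\leq \E[Z\mathbf{1}[Z\geq R]]$, and the inequality $1-(1-p)^k\leq kp$ is precisely a union bound, which is what your pointwise inequality encodes. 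Your version is marginally more self-contained in that it avoids the tail-sum representation of the expectation, but there is no substantive difference in the argument or in what each buys.
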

\begin{proof}
  The assumption implies that $\E[Z] = \sum\limits_{r\geq 1}{\Prob{}{Z\geq r}} < \infty$, so
  $\lim_{R\rightarrow\infty} \sum\limits_{r\geq R}{\Prob{}{Z\geq r}} = 0$.
  Let $\eps>0$. Then there is $R$ (that depends on the distribution of $Z$)
  such that $\sum\limits_{r\geq R}{\Prob{}{Z\geq r}}\leq \eps$.

  Computing, we get that
  \[
  \frac{1}{k}\Expect{}{\max(Z_1,\ldots,Z_k)}
  =\frac{1}{k}\sum\limits_{r\geq 1}\Prob{}{\max(Z_1,\ldots,Z_k)\geq r}
  =\frac{1}{k}\sum\limits_{r\geq 1}1-\Prob{}{Z< r}^k.
  \]
  We split the sum into $r<R$ and $r\geq R$. The contribution from $r<R$ is clearly at most $R/k$,
  and by our earlier observation the contribution from $r\geq R$ is at most
  \[
  \frac{1}{k}\sum\limits_{r\geq R}1-\left(1-\Prob{}{Z \geq r}\right)^k
  \leq
  \frac{1}{k}\sum\limits_{r\geq R}k\Prob{}{Z\geq r}
  \leq \eps.
  \]
  It follows that
  $\limsup\frac{1}{k}\Expect{}{\max(Z_1,\ldots,Z_k)}\leq \eps$, and since this is true for all $\eps>0$ the claim is proved.
\end{proof}

\subsection{Non-deterministic external information complexity}
\label{sec:missing2}

In this section pe prove Theorem~\ref{thm:oneside}:

\smallskip
\noindent{\bf Theorem~\ref{thm:oneside}.}
 Let $\mu$ be a distribution with ${\sf supp}(\mu)\subseteq f^{-1}(1)$, then $$
\IC^{\sf external,1}_{\mu}[f,0]\le
\lim_{q\rightarrow \infty}	\CC_{\mu^q}^{1^q}[f^q,0]/q.
$$
\smallskip

Before proceeding, let us formally define the quantities in the theorem. For a function $f$ and an output $a$, we say that a distribution of messages $M=M(X,Y)$ along with acceptance functions $Acc_A:(X,M)\mapsto \{0,1\}$,
$Acc_B:(Y,M)\mapsto \{0,1\}$ is an $a$-proof for $f$, if the following properties hold for all $(x,y)$:
\begin{itemize}
	\item
	If $f(x,y)=a$, then for all $m$ with $\Pr[M=m|(x,y)]> 0$, $Acc_A(x,m)=1$ and $Acc_B(y,m)=1$;
	\item
		if $f(x,y)\neq a$, then for all $m$, $Acc_A(x,m)=0$ or $Acc_B(y,m)=0$.
\end{itemize}
Then the (average case) non-deterministic amortized communication complexity of a boolean $f(x,y)$ is defined as:
\begin{equation}
	\label{eq:nondet1}
		\CC_{\mu^q}^{1^q}[f^q,0]:= \inf_{M:~M\text{ is a $1^q$-proof for $f^q$}} \E_{(x,y)\sim\mu; m\sim M|_{(x,y)}} |m|.
\end{equation}
The non-deterministic external information complexity of $f$ is given by:
\begin{equation}
	\label{eq:nondet2}
	\IC^{\sf external,1}_{\mu}[f,0]:= \inf_{M:~M\text{ is a $1$-proof for $f$}} \MI_{(x,y)\sim\mu; m\sim M|_{(x,y)}}(XY;M).
\end{equation}
In other words, it's the smallest amount of information a proof that definitively convinces Alice and Bob that $f(x,y)=1$ can reveal about $(x,y)$
to an outside observer. With the definitions in place, we are ready to prove the theorem.

\begin{proof}[Proof of Theorem~\ref{thm:oneside}.]
	We will prove the inequality for any fixed $q$. Theorem~\ref{thm:oneside} then follows by taking $q\rightarrow\infty$. For a fixed $q$, let $M^q$ be a $1^q$-proof for $f^q$ such that $\E|M^q|$ realizes $	\CC_{\mu^q}^{1^q}[f^q,0]$.
	
	Fix an index $i\in[q]$. Let $M^q_i(x, y)$ be obtained as follows:
	(1) set $(x_i,y_i)=(x,y)$; (2) pick values $(x_j,y_j)\sim \mu$ for $j\neq i$; (3) sample $M^q$  conditioned on $(x_j,y_j)_{j=1}^q$.
	
	Note that since $M^q$ is a $1^q$-proof for $f^q$, $M^q_i$ is a $1$-proof for
	$f(x_i, y_i)$ for all $i$. Thus each $i$ gives rise to a valid $1$-proof
	for $f$. Importantly, to verify that $M^q_i$ is a valid proof that $f(x,y)=f(x_i,y_i)=1$ the players do not need to know the values of $(x_j, y_j)$ for $j\neq i$.
	
	To complete the proof we only need to show that there exists an $i$ such that
	\begin{equation}
	\label{eq:nondet3}
	\MI[M_i^q; XY] = \MI[M^q; X_iY_i] \le \E|M^q|/q.
	\end{equation}
	To see this, observe that
\begin{multline*}
 \E|M^q| \ge \HH(M^q) \ge \MI[M^q; X_1Y_1\ldots X_qY_q] = \sum_{i=1}^q
 \MI[M^q; X_iY_i | X_{<i}Y_{<i}]  \\
 =\sum_{i=1}^q \left(
 \MI[M^q  X_{<i}Y_{<i}; X_iY_i]-\MI[X_{<i}Y_{<i}; X_iY_i]\right) =
  \sum_{i=1}^q
 \MI[M^q  X_{<i}Y_{<i}; X_iY_i] \ge  \sum_{i=1}^q
 \MI[M^q  ; X_iY_i].
\end{multline*}
Therefore, there must exist and $i\in[q]$ such that \eqref{eq:nondet3} holds.
\end{proof}
\end{document}